\PassOptionsToPackage{gray}{xcolor}
\documentclass[pdflatex,sn-mathphys,Numbered]{sn-jnl}

\usepackage{amsmath,amssymb,amsthm}
\usepackage{manyfoot} 
\usepackage{listings}
\usepackage{xcolor}
\usepackage{booktabs}
\usepackage{graphicx}
\usepackage{tikz}
\usepackage{seqsplit}
\usepackage{tabularx}

\hypersetup{
 colorlinks=true,
 linkcolor=black,
 citecolor=black,
 urlcolor=black,
 pdftitle={Formal verification of Romanov's triplet logic},
 pdfsubject={cs.LO, cs.AI},
 pdfauthor={Dmitry V. Alexandrov},
 pdfkeywords={formal verification, Rocq, SAT, triplet logic, sliding-window CNF, polynomial-time bounds, verified complexity}
}

\DeclareTextFontCommand{\texttt}{\ttfamily\hyphenchar\font=45\relax}

\usetikzlibrary{shapes.geometric, arrows.meta, positioning, calc, fit, backgrounds, decorations.pathreplacing}

\theoremstyle{definition}
\newtheorem{definition}{Definition}[section]
\newtheorem{theorem}{Theorem}[section]
\newtheorem{lemma}{Lemma}[section]
\newtheorem{corollary}{Corollary}[section]
\newtheorem{remark}{Remark}[section]

\tikzset{
 box/.style={draw, rounded corners, minimum width=2cm, minimum height=0.8cm, align=center, fill=black!8},
 arrow/.style={-{Stealth[length=2mm]}, thick},
 label/.style={font=\small\bfseries},
 tierbox/.style={draw, rectangle, minimum width=1.2cm, minimum height=0.6cm, align=center, fill=black!8},
 triplet/.style={draw, circle, minimum size=0.5cm, align=center, fill=black!5, font=\tiny},
}

\begin{document}

\title[Formal verification of Romanov's triplet logic]{Formal verification of Romanov's triplet logic:\\
 A verified filter for sliding-window 3-CNF\\
 with application to structured formulas}

\author*[1]{\fnm{Dmitry V.} \sur{Alexandrov}}\email{dvalexandrov@hse.ru}

\affil*[1]{\orgname{HSE University}, \orgaddress{\city{Moscow}, \country{Russia}}}

\abstract{We present the first mechanised formalisation of Romanov's Triplet Logic (TLS) in the Rocq proof assistant. TLS is a combinatorial framework originally motivated by Boolean satisfiability, based on triplet structures and a filter that we call Simple Vertex Intersection (SVI). We formalise the core of TLS, including its translation from 3-CNF, the clearing procedure, and the SVI algorithm. For the well-formed sliding-window fragment, we prove explicit polynomial-time bounds for the filter stages and verify the translation and intersection operations. Our main contribution is a precise correctness boundary: for general formulas, SVI non-emptiness is necessary but not sufficient for satisfiability; for aligned structures, we prove a full bi-implication, extended to systems of structures. We also formalise the grouped-window translation and provide a formal counterexample to its completeness. We introduce VFR (Verified Filter for Romanov's triplet logic), an extracted OCaml prototype that implements a verified decision procedure for the sliding-window fragment and a sound filter for general 3-CNF, with a Python runtime and Docker packaging. Benchmarks corroborate the predicted behaviour, and the complete toolchain is available as a curated Zenodo artifact. The Rocq development comprises over 23,000 lines of code, with 424 proved lemmas and no unproved assumptions.}

\keywords{Formal verification, interactive theorem proving, SAT solving,
 proof assistants, mechanised mathematics, Rocq, triplet logic,
 polynomial-time bounds, one-sided filter}

\maketitle

\section{Introduction}
\label{sec:intro}

The Boolean satisfiability problem (SAT) is a cornerstone of computational complexity 
theory, with applications ranging from hardware verification to automated planning~
\cite{handbook-sat}. Despite decades of research, no polynomial-time algorithm is known for 3-SAT, 
and the prevailing conjecture is that $\mathsf{P} \neq \mathsf{NP}$~
\cite{cook1971complexity,levin1973universal}. Nevertheless, numerous alternative approaches 
have been proposed, each offering new structural insights into the problem.

One such approach is Romanov's \emph{Triplet Logic} (TLS), introduced in 
``Non-Orthodox Combinatorial Models Based on Discordant Structures''~
\cite{romanov2011nonorthodox}. TLS encodes a 3-CNF (Conjunctive Normal Form) formula as a \emph{Compact Triplets 
Formula} (CTF), transforms it into a \emph{Compact Triplets Structure} (CTS) 
containing all triplets not forbidden by the corresponding clause group, and applies \emph{Simple Vertex Intersection} (SVI), which 
constructs hyperstructures via tier-wise intersection. Romanov developed this approach with the goal of efficient SAT solving via 
tier-wise triplet analysis and hyperstructure intersection.

Following Romanov's original terminology, we refer to these tiered combinatorial objects as \emph{Compact Triplets Structures}. Throughout the paper, the abbreviation CTS always denotes Romanov's construction, formalised and extended here in the Rocq proof assistant.

Romanov's key insight is a novel \emph{geometric decomposition}: instead of 
searching over variable assignments directly, TLS searches over paths through a 
layered graph of triplet tiers. This perspective is structurally distinct from 
classical DPLL (Davis--Putnam--Logemann--Loveland) / CDCL (Conflict-Driven Clause Learning) approaches and offers a new combinatorial perspective on constraint 
satisfaction problems. In this paper we treat TLS as a \emph{self-contained 
mathematical framework} and subject it to rigorous formal analysis, using 
3-CNF formulas as a motivating source of benchmark instances rather than as 
the primary object of study.

\textbf{Our contributions.} We formalise the core of TLS in the Rocq proof 
assistant~\cite{coq} (version 9.1.1), complemented by exhaustive model checking. Our findings are:

\begin{enumerate}
 \item We formalise CTF, CTS, hyperstructures, and SVI in Rocq, proving basic 
 correctness lemmas about path construction and compatibility 
 (Section~\ref{sec:formalisation}).
 
 \item We \emph{clarify the correctness boundary} of TLS: the existence of a 
 joint satisfying set (JSS) implies SVI's non-emptiness for non-empty structures (JSS $\Rightarrow$ SVI non-emptiness, proved in Rocq), 
 but the converse does \emph{not} hold in general 
 (Section~\ref{sec:boundary}). We validate the failure of the reverse direction 
 with both Rocq counterexamples and exhaustive Python-based model checking 
 (Section~\ref{sec:tla}).
 
 \item We introduce \emph{VFR} (\textbf{V}erified \textbf{F}ilter for \textbf{R}omanov's triplet logic), a prototype implementation (Section~\ref{sec:vfr}). It uses the verified clause-by-clause pipeline for well-formed sliding-window CNF and falls back to an unverified grouped-window heuristic for general 3-CNF (Section~\ref{sec:scope}).
 
 \item We benchmark VFR on random and structured 3-CNF instances (Section~\ref{sec:experiments}). On the verified fragment agreement is 100\%; on general random instances the filter is ineffective.
 
 \item We prove a new theorem in Rocq establishing that the aligned intersection 
 of two CTS has a non-empty set of full-length paths \emph{if and only 
 if} a compatible joint satisfying set exists (Theorem~\ref{thm:compat}). This 
 provides the formal foundation for the post-check. The equivalence is conceptually straightforward---it is essentially a restatement of the definition of 
 a compatible path in the intersection---but its verified mechanisation yields an 
 extracted, correct-by-construction decision procedure for aligned structures.
 
 \item We extend this bi-implication to \emph{systems} of $k$ aligned structures 
 (Theorem~\ref{thm:compat-k}), proving that the systemic tier-wise intersection 
 contains a full-length path iff a compatible joint satisfying set exists for the 
 entire system.
 
 \item We formalise the clearing procedure's termination using a tight 
 measure ($\mathrm{cts\_size}$) and prove a semantic fixed-point 
 characterisation: every surviving triplet has compatible neighbours 
 in adjacent tiers (Section~\ref{sec:formalisation}).
 
 \item We identify a \emph{semantic gap} between the weak 
 formula-level predicate $\mathrm{satisfies\_ctf}$ (which checks each 
 3-bit window independently) and structure-level path existence 
 (\texttt{build\_paths\_all} requires globally compatible consecutive 
 triplets). We formalise this in Theorem~\ref{thm:gap}: 
 there exist CTFs that admit locally consistent assignments under 
 $\mathrm{satisfies\_ctf}$ yet yield no compatible path after clearing 
 (Section~\ref{sec:boundary}). This motivates the aligned-intersection 
 approach, for which we recover completeness.
\end{enumerate}

\textbf{Implications.} Our work \emph{sharpens} the understanding of TLS. SVI is 
not a complete decision procedure. However, when SVI reports emptiness, the formula 
is guaranteed unsatisfiable---a property we prove formally.

Furthermore, TLS offers a novel \emph{combinatorial visualisation} of SAT instances 
through tiered triplet structures. This geometric perspective may aid in educational 
contexts and in analysing formula structure before invoking expensive CDCL solvers.

\textbf{Why this matters for automated reasoning.}
The paper's primary audience is the formal-verification and automated-reasoning
community rather than the SAT-competition community. Our goal is not to
outperform CDCL on benchmark suites---a task for which decades of engineering
have produced highly optimised, unverified solvers---but to demonstrate that a
non-classical combinatorial framework can be fully mechanised, its correctness
boundary exactly determined, and its polynomial fragments certified with
concrete complexity bounds. Such mechanised reconstructions are valuable
because they (i) expose hidden assumptions that informal descriptions miss,
(ii) produce certified building blocks that compose into larger verified systems,
and (iii) provide rigorous foundations for teaching and further research.

More concretely, TLS offers three affordances that complement verified CDCL
solvers. \emph{As an intermediate representation,} triplet tiers make
variable-interaction structure explicit: a formula analyst can inspect which
triplets survive clearing and immediately see local inconsistencies that would
be buried in a flat clause list. \emph{As a preprocessor,} the verified SVI
filter can be placed in front of \emph{any} solver; when it reports UNSAT the
answer is proof-carrying, and when it is inconclusive the solver falls back to
standard search with no loss. \emph{As a certification target,} the geometric
path-building algorithm yields a concrete witness---a sequence of compatible
triplets---that is easier to audit than a DRAT (Delete-Resolution-Asymmetric-Tautology) trace. These properties do not
make TLS faster than CDCL, but they make it structurally transparent, and the
polynomial bounds are machine-checked rather than merely claimed.

\textbf{Structure of the paper.} 
Section~\ref{sec:background} introduces CTF, CTS, and SVI. 
Section~\ref{sec:formalisation} describes our Rocq formalisation and proves 
the forward direction. Section~\ref{sec:tla} reports empirical validation 
results that show the reverse direction fails; the boundary is detailed in 
Section~\ref{sec:boundary}. Section~\ref{sec:vfr} presents VFR, validated 
experimentally in Section~\ref{sec:experiments}. Section~\ref{sec:discussion} discusses residual benefits, 
related work, limitations, and future directions. 
Section~\ref{sec:conclusion} concludes.

\subsection{Scope and limitations}
\label{sec:scope}

To avoid misunderstanding, we state the scope of the formalisation explicitly. Table~\ref{tab:trust-boundary} summarises every component of the pipeline, marking each as formally verified, trusted, or heuristic.
\textbf{Verified in Rocq:} CTF, CTS, clearing, aligned intersection, and the clause-by-clause CNF-to-CTF translation in which every clause becomes its own tier. For this fragment theorems about path existence, SVI soundness, and polynomial-time bounds are mechanically proved.

\textbf{Not verified:} the grouped-window decomposition that merges multiple clauses sharing the same variable triple into a single tier; the dense sliding-window, overlapping-group, and mixed-overlap benchmarks; and the general 3-CNF heuristic pipeline. These are empirical illustrations of an unverified heuristic, not formal results. See Section~\ref{sec:heuristic-pipeline} for the heuristic pipeline and Section~\ref{sec:experiments} for the benchmarks.

\begin{table}[h]
\centering
\small
\caption{Trust boundary: verified components, trusted base, and heuristics. ``Extr.'' indicates whether the component is extracted to executable OCaml code.}
\label{tab:trust-boundary}
\begin{tabularx}{\textwidth}{@{}Xl>{\raggedright\arraybackslash}p{3.8cm}c@{}}
\toprule
Component & Status & Reference / Caveat & Extr. \\
\midrule
CTF, CTS, clearing definitions & Verified & Rocq 9.1.1, no admitted proofs & No \\
SVI soundness (forward) & Verified & Theorem~\ref{thm:forward-main} & No \\
Aligned intersection (bi-impl.) & Verified & Theorem~\ref{thm:compat} & No \\
Systemic aligned ($k$ structs) & Verified & Theorem~\ref{thm:compat-k} & No \\
Polynomial complexity & Verified & $O(n^4)$ clearing (generic), $O(n^2)$ clearing (single-forbidden) and SVI; $n$ = CTS size (tiers+triplets). Full solver includes exponential post-check & No \\
CNF$\to$CTF (clause-by-clause) & Verified & Sliding-window CNF only & Yes \\
Strong CTF predicate & Verified & GapClosure.v & No \\
Swansea RUP (Reverse Unit Propagation) checker & Trusted base & Rocq-extracted elsewhere & Yes \\
OCaml extraction & Trusted base & Rocq $\to$ OCaml compiler & N/A \\
Z3 SAT solver & Trusted external & UNSAT proofs checked by Swansea & N/A \\
Grouped-window decomposition & Heuristic & Forward soundness only; completeness open & No \\
Dense / overlap / mixed pipelines & Heuristic & Empirical evaluation only & No \\
General 3-CNF (Z3 fallback) & Heuristic & Unverified decomposition & No \\
\bottomrule
\end{tabularx}
\end{table}

\section{Background: Romanov's triplet logic}
\label{sec:background}

\begin{quote}\small
\textbf{Verified core (mechanised in Rocq).} For well-formed sliding-window CNF, the clause-by-clause translation, clearing, aligned intersection, and SVI filter are formally proved correct (Table~\ref{tab:trust-boundary}).\par
\vspace{2pt}
\textbf{Heuristic shell (unverified).} Grouped-window decomposition, greedy permutation search, overlapping-group handling, post-check backtracking, and the general 3-CNF pipeline are empirical heuristics with no formal guarantee.
\end{quote}

\subsection{Compact Triplets Formula (CTF)}

A 3-CNF formula over $n$ Boolean variables $x_1, \dots, x_n$ is a conjunction of 
clauses, each a disjunction of exactly three literals. In TLS, a clause is 
represented as a \emph{triplet} $(v_1, v_2, v_3) \in \{0,1\}^3$ relative to an 
ordered triple of variable indices $(i, j, k)$. A \emph{Compact Triplets Formula} 
(CTF) is a collection of such triplets grouped by their variable indices into 
\emph{tiers}.

\begin{definition}[Tier]
 A \emph{tier} over variable indices $(i, j, k)$ is a set of triplets 
 $t \subseteq \{0,1\}^3$. A \emph{CTF} is a list of tiers.
\end{definition}

The pipeline translates each 3-literal clause into a forbidden triplet (negation pattern), builds a CTF tier per clause, and then forms the raw CTS by tier-wise complementation. The clearing procedure (Listing~\ref{lst:clearing}) is applied last to remove incompatible triplets.

\subsection{Compact Triplets Structure (CTS)}

Given a CTF $F$, the \emph{Compact Triplets Structure} $S = \mathrm{CTS}(F)$ is 
obtained by replacing each tier $t$ of $F$ with its complement:
\begin{equation}\label{eq:complement}
 S_i = \{0,1\}^3 \setminus F_i.
\end{equation}
Intuitively, $S$ contains all triplets that are \emph{not} forbidden by the 
corresponding clause group.

\begin{definition}[Compatibility]
 Two triplets $a = (a_1, a_2, a_3)$ and $b = (b_1, b_2, b_3)$ are 
 \emph{compatible} if their overlapping positions agree:
 \begin{equation}\label{eq:compatible}
 a_2 = b_1 \quad \text{and} \quad a_3 = b_2.
 \end{equation}
\end{definition}

\begin{lemma}[Compatibility Degree]
 \label{lem:degree}
 For every triplet $t \in \{0,1\}^3$ there are exactly two triplets $t'$
 with $\mathrm{compatible}(t, t') = \mathsf{true}$ (forward) and exactly two with
 $\mathrm{compatible}(t', t) = \mathsf{true}$ (backward).
\end{lemma}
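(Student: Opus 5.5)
The approach is a direct characterisation of the compatible partners, followed by a count. Fix $t = (t_1, t_2, t_3)$. By the definition of compatibility, $\mathrm{compatible}(t, t')$ holds iff $t'_1 = t_2$ and $t'_2 = t_3$. These constraints fix the first two coordinates of $t'$ and leave $t'_3 \in \{0,1\}$ free. The forward partners are therefore exactly
\[
\{(t_2, t_3, 0),\ (t_2, t_3, 1)\},
\]
and these two triplets are distinct because they differ in the third coordinate. Hence there are exactly two forward partners.

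The backward case is symmetric. $\mathrm{compatible}(t', t)$ holds iff $t'_2 = t_1$ and $t'_3 = t_2$. This fixes the last two coordinates of $t'$ and leaves $t'_1$ free, so the backward partners are exactly $\{(0, t_1, t_2),\ (1, t_1, t_2)\}$. These are again two distinct triplets.

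For the Rocq mechanisation there are two routes. The most robust is exhaustive case analysis: destruct $t$ into its three Boolean components, giving $8$ cases. In each case, compute the list filter of the $8$-element enumeration of $\{0,1\}^3$ under $\mathrm{compatible}(t,\cdot)$, respectively $\mathrm{compatible}(\cdot,t)$, and check by reflexivity that its length is $2$. Alternatively, one can prove the characterisation lemma above, stating that $\mathrm{compatible}(t,t') = \mathsf{true} \iff t' \in \{(t_2,t_3,0),(t_2,t_3,1)\}$, and derive the count from it. The characterisation version is more reusable later, for example in the clearing fixed-point argument and in bounding path branching.

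I expect the only real obstacle to be formal bookkeeping rather than mathematics. The statement ``exactly two'' has to be phrased as the length of a filtered enumeration of all triplets, or as a bijection with $\mathsf{bool}$. If it is phrased via the enumeration, I must make sure the enumeration has no duplicates, so that the length genuinely counts distinct triplets. Once the phrasing is fixed, the $8$-way case split closes every goal by computation.
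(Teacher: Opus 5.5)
Your proof is correct: compatibility fixes two coordinates of the partner and leaves one bit free. That gives the forward partners $(t_2,t_3,0),(t_2,t_3,1)$ and the backward partners $(0,t_1,t_2),(1,t_1,t_2)$, and your eight-case computational check is the same finite verification that a mechanised version of this lemma would perform. The paper gives no written proof beyond the illustration for $t=(0,1,1)$ in Figure~\ref{fig:compatible-degree}, whose partners ($110,111$ forward; $001,101$ backward) agree exactly with your characterisation.
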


Consequently, the clearing procedure cannot remove a triplet because it has
``too many'' neighbours; rather, it removes triplets whose two potential
partners have already been eliminated. For a fixed triplet $t = (0,1,1)$,
exactly two triplets are compatible in the forward direction and exactly two in
the reverse direction (Figure~\ref{fig:compatible-degree}).

\begin{figure}[ht]
\centering
\begin{tikzpicture}[scale=0.85,
 tripletnode/.style={draw, circle, minimum size=1.1cm, align=center, fill=black!8, font=\small\ttfamily},
 varnode/.style={font=\scriptsize\itshape}]
 
 \node[tripletnode] (center) at (0,2.5) {011};
 \node[tripletnode] (f1) at (2.8,3.2) {110};
 \node[tripletnode] (f2) at (2.8,1.8) {111};
 \node[tripletnode] (b1) at (-2.8,3.2) {001};
 \node[tripletnode] (b2) at (-2.8,1.8) {101};
 
 \draw[-{Stealth}, thick] (center) -- (f1);
 \draw[-{Stealth}, thick] (center) -- (f2);
 \draw[-{Stealth}, thick, dashed] (b1) -- (center);
 \draw[-{Stealth}, thick, dashed] (b2) -- (center);
 
 \node[above=0.05cm of f1, font=\scriptsize] {forward (2)};
 \node[above=0.05cm of b1, font=\scriptsize] {reverse (2)};
 
 \node[varnode] at (-4.2,0.3) {\textbf{Path:}};
 
 \node[tripletnode] (p0) at (-2,0) {101};
 \node[tripletnode] (p1) at (0,0) {010};
 \node[tripletnode] (p2) at (2,0) {101};
 
 \draw[-{Stealth}, thick] (p0) -- (p1);
 \draw[-{Stealth}, thick] (p1) -- (p2);
 
 \node[varnode] at (-2.9,-0.9) {$x_0=1$};
 \node[varnode] at (-1.5,-0.9) {$x_1=0$};
 \node[varnode] at (-0.4,-0.9) {$x_2=1$};
 \node[varnode] at (0.8,-0.9) {$x_3=0$};
 \node[varnode] at (1.9,-0.9) {$x_4=1$};
 
 \draw[decorate,decoration={brace,amplitude=3pt,raise=2pt}] (-2.4,0.5) -- (-1.6,0.5) node[midway,above=4pt,font=\tiny] {$(x_0,x_1,x_2)$};
 \draw[decorate,decoration={brace,amplitude=3pt,raise=2pt}] (-0.4,0.5) -- (0.4,0.5) node[midway,above=4pt,font=\tiny] {$(x_1,x_2,x_3)$};
 \draw[decorate,decoration={brace,amplitude=3pt,raise=2pt}] (1.6,0.5) -- (2.4,0.5) node[midway,above=4pt,font=\tiny] {$(x_2,x_3,x_4)$};
\end{tikzpicture}
\caption{Top: the compatible-degree property for triplet $t=(0,1,1)$. Exactly two 
triplets are compatible in each direction (2-regular relation). Bottom: a valid 
path $101 \to 010 \to 101$ through three tiers, inducing the assignment 
$x_0=1, x_1=0, x_2=1, x_3=0, x_4=1$. Overlapping positions enforce consistency 
across adjacent triplets. The path is drawn in assignment order (increasing 
tier indices); \texttt{build\_paths\_all} stores paths in reverse order, 
prepending each new triplet ahead of the current head 
(Listing~\ref{lst:build-paths}).}
\label{fig:compatible-degree}
\end{figure}

A \emph{path} through a CTS $S = [t_1, \dots, t_m]$ is a sequence of triplets 
$p = [c_1, \dots, c_m]$ such that $c_i \in t_i$ and adjacent triplets are 
compatible. Each full-length path induces a variable assignment by flattening the 
triplets: if $p = [(a_1,b_1,c_1), (a_2,b_2,c_2), \dots]$, the corresponding 
\emph{satisfying set} is the list 
$\mathit{ss} = [a_1; b_1; c_1; a_2; b_2; c_2; \dots]$.
A list $\mathit{ss}$ \emph{satisfies} a CTS $S$ if every consecutive triple 
$(\mathit{ss}[3i], \mathit{ss}[3i+1], \mathit{ss}[3i+2])$ belongs to tier $i$.
A \emph{joint satisfying set} (JSS) of two structures $S_1, S_2$ is a single 
list $\mathit{ss}$ that satisfies both simultaneously.

Figure~\ref{fig:levels} summarises the abstraction stack. The construction 
pipeline transforms a 3-CNF formula (bottom) into a concrete assignment (top) 
through a sequence of representation changes; the dashed arrow shows the 
verified feedback loop.

\begin{figure}[ht]
\centering
\begin{tikzpicture}[
 scale=0.85,
 levelbox/.style={draw, rectangle, rounded corners, minimum width=0.85\linewidth, text width=0.85\linewidth, minimum height=0.7cm, align=center},
 arrowlabel/.style={font=\scriptsize\itshape}
]

\node[levelbox, fill=black!5] (cnf) at (0,0) {\textbf{Input:} 3-CNF formula (clauses over $x_0 \dots x_{n-1}$)};
\node[levelbox, fill=black!12] (ctf) at (0,1.4) {\textbf{Level 0:} CTF --- forbidden triplets (clause $\mapsto$ negation)};
\node[levelbox, fill=black!8] (cts) at (0,2.8) {\textbf{Level 1:} Raw CTS --- allowed triplets (complement of CTF)};
\node[levelbox, fill=black!16] (clear) at (0,4.2) {\textbf{Level 2:} Cleared CTS --- two-way adjoinable (SVI / EP filter)};
\node[levelbox, fill=black!20] (path) at (0,5.6) {\textbf{Level 3:} Path --- compatible sequence ($a_2=b_1,\; a_3=b_2$)};
\node[levelbox, fill=black!24] (assign) at (0,7.0) {\textbf{Output:} Assignment --- variable values (flattened path)};

\draw[-{Stealth}, thick] (cnf) -- node[right, arrowlabel] {cnf\_to\_ctf} (ctf);
\draw[-{Stealth}, thick] (ctf) -- node[right, arrowlabel] {complement} (cts);
\draw[-{Stealth}, thick] (cts) -- node[right, arrowlabel] {clear\_structure} (clear);
\draw[-{Stealth}, thick] (clear) -- node[right, arrowlabel] {build\_paths\_all} (path);
\draw[-{Stealth}, thick] (path) -- node[right, arrowlabel] {ss\_to\_assignment} (assign);

\draw[-{Stealth}, thick, dashed] (assign.east) -- ++(0.25cm,0) |- node[pos=0.25, right, arrowlabel] {eval\_cnf} (cnf.east);

\end{tikzpicture}
\caption{Levels of abstraction in VFR. Each layer transforms the 
representation toward a concrete variable assignment; the dashed arrow 
shows the verified feedback loop ~(OCaml-extracted \texttt{eval\_cnf}).}
\label{fig:levels}
\end{figure}

Each tier contains triplets over a sliding window of three variables; 
adjacent tiers overlap by two variables, ensuring that compatibility propagates 
constraints forward.

\subsection{Simple Vertex Intersection (SVI)}

For two CTS $S_1$ and $S_2$, the \emph{Simple Vertex Intersection} constructs 
a hyperstructure $H = \mathrm{SVI}(S_1, S_2)$ as follows:

\begin{enumerate}
 \item Build basic graphs $G_1$ and $G_2$ where vertices are triplets annotated 
 with their tier index.
 \item Compute the set of \emph{common vertices}: triplets that appear in 
 $G_1$ and whose triplet value appears somewhere in $G_2$ (not necessarily at 
 the same tier index).
 \item Return the hyperstructure containing these common vertices.
\end{enumerate}

In Romanov's framework, $H$ is non-empty if and only if $S_1$ and $S_2$ share a 
joint satisfying set---an assignment that satisfies both structures tier-by-tier
(a claim we refute in Section~\ref{sec:boundary}).

The definition extends naturally to $k \ge 2$ structures. The \emph{Systemic 
Simple Vertex Intersection} ($\mathrm{SSVI}$) computes common vertices across 
all pairs of structures in a family $\mathcal{S} = \{S_1,\dots,S_k\}$, producing 
a \emph{hyperstructure system} $\mathit{HSS}$. Non-emptiness of $\mathit{HSS}$ 
implies that every pair of structures in $\mathcal{S}$ shares a common triplet, 
which is the analogue of Theorem~\ref{thm:forward-main} for multiple structures (Theorem~\ref{thm:systemic-main} below).

\paragraph{From CNF to CTS.} The concrete construction algorithms---grouping clauses by variable sets, building complement tiers, and clearing---are described in Section~\ref{sec:heuristic-pipeline}. Only the clause-by-clause pipeline is formally verified (Table~\ref{tab:trust-boundary}).

\section{Heuristic Pipeline}
\label{sec:heuristic-pipeline}

\textbf{Caveat.} The pipeline described below---grouping clauses by variable 
sets and constructing CTS tiers via complementation---is the one used in the 
VFR \emph{prototype}. It is \textbf{not} formally verified in Rocq; only the 
simplified clause-by-clause pipeline of Section~\ref{sec:perm} (for well-formed 
sliding-window CNF) is mechanised. The following description serves as 
operational documentation and motivation for the verified fragment.

We now describe the concrete algorithms for constructing CTS from a 3-CNF 
formula. The pipeline consists of three phases: \emph{decomposition}, 
\emph{tier construction}, and \emph{clearing}.

\paragraph{Step 1: Decomposition.}
Given a 3-CNF formula $\phi$ with $n$ variables and $m$ clauses, group the 
clauses by their sets of variable indices. For each group $g$ with variables 
$\{i, j, k\}$ and $|g|$ clauses:
\begin{enumerate}
 \item Create a permutation $\pi = [i, j, k, \dots]$ where the first three 
 positions are the group's variables and the remaining $n-3$ positions are the 
 other variables in some fixed order.
 \item For each clause $C = (\ell_i \lor \ell_j \lor \ell_k)$ in the group, 
 encode it as a triplet $(v_1, v_2, v_3)$ where $v_p = 1$ if the $p$-th literal 
 is negated and $v_p = 0$ otherwise.
 \item Collect all triplets into a CTF $F_g$ annotated with variable indices 
 $(i, j, k)$.
\end{enumerate}
The result is a list of CTFs $[F_1, \dots, F_k]$ where $k \leq m$.

\footnote{The formalisation in \texttt{FormulaTranslation.v} translates each clause to a distinct tier for well-formed sliding-window CNF. The grouped-window pipeline is an unverified heuristic for general 3-CNF (Table~\ref{tab:trust-boundary}).}

\paragraph{Step 2: Tier Construction.}
For each CTF $F$ with tiers grouped by variable indices $(i, j, k)$:
\begin{enumerate}
 \item For each tier $t$ containing forbidden triplets $T \subseteq \{0,1\}^3$, 
 construct the complement tier $t' = \{0,1\}^3 \setminus T$.
 \item Assemble the tiers into a raw CTS $S_{\text{raw}}$.
\end{enumerate}

\paragraph{Step 3: Clearing Procedure.}
The raw CTS may contain incompatible triplets that cannot participate in any 
full-length path. The \emph{clearing procedure} removes them iteratively
(Listing~\ref{lst:clearing}):

\begin{lstlisting}[frame=single,caption={Clearing Procedure (pseudocode)},label={lst:clearing}]
function ClearSingleTier(tier_idx, all_tiers):
    current  := all_tiers[tier_idx]
    prev     := all_tiers[tier_idx-1] if tier_idx > 0 else AllTriplets
    next     := all_tiers[tier_idx+1] if tier_idx+1 < |all_tiers| else AllTriplets
    return { t in current | exists p in prev : Compatible(t,p)
                         and exists n in next : Compatible(t,n) }

function ClearStructurePass(S):
    return [ ClearSingleTier(i, S) for i = 0 .. |S|-1 ]

function ClearStructure(S):
    repeat cts_size(S) times:
        S := ClearStructurePass(S)
    return S
\end{lstlisting}

Here \texttt{AllTriplets} denotes the set of all $2^3 = 8$ possible triplet values;
it is used as a boundary condition so that end tiers do not need special-casing.
The iteration bound $\mathrm{cts\_size}(S) = \sum_i |T_i|$ is the total number of
triplets; it suffices because each pass either removes at least one triplet or
leaves the structure unchanged (Theorem~\ref{thm:clear-term}).

This is a fixed-point computation: in each pass, a \emph{new} tier is built 
containing only triplets that have at least one compatible neighbour in each 
adjacent tier (or lie at a boundary). The process repeats until no more triplets 
are eliminated. The resulting structure is the \emph{cleared CTS}. Note that the 
implementation builds a fresh list (\texttt{[t for t in current if ...]}) using a functional list comprehension rather than in-place mutation.

\paragraph{Example.}
Consider the formula over 3 variables:
\[
 \phi = (x_1 \lor x_2 \lor x_3) \land 
 (\neg x_1 \lor x_2 \lor \neg x_3) \land
 (x_1 \lor \neg x_2 \lor x_3).
\]
All clauses share variables $\{1, 2, 3\}$, so decomposition yields a single CTF 
with one tier and forbidden triplets:
\[
 T = \{(0,0,0),\; (1,0,1),\; (0,1,0)\}.
\]
The complement tier contains the remaining 5 triplets:
\[
 t' = \{(0,0,1),\; (0,1,1),\; (1,0,0),\; (1,1,0),\; (1,1,1)\}.
\]
Since there is only one tier, the clearing procedure does nothing. The final 
CTS consists of this single tier.

\begin{lstlisting}[frame=single,caption={Path-Building Algorithm (pseudocode)},label={lst:build-paths}]
function ExtendPaths(partial_paths, tier):
    if partial_paths is empty:
        return [ [t] for t in tier ]
    result := []
    for path in partial_paths:
        for t in tier:
            if Compatible(t, head(path)):
                result.append( [t] + path )
    return result

function BuildPathsAll(cts):
    paths := []
    for tier in cts:
        paths := ExtendPaths(paths, tier)
    return [ p for p in paths if len(p) == len(cts) ]
\end{lstlisting}

\begin{definition}[Sliding-Window CNF]
\label{def:sliding}
A 3-CNF formula $\phi$ with $n$ variables is a \emph{well-formed sliding-window 
CNF} if its clauses can be ordered so that the $i$-th clause contains exactly 
the variables $(x_i, x_{i+1}, x_{i+2})$ for $i = 0, \dots, n-3$. Multiple 
clauses may share the same window.
\end{definition}

\begin{remark}[Tractability of the verified fragment]
Well-formed sliding-window CNF has primal graph pathwidth~$\le 2$: each clause
covers three consecutive vertices of a path, so the primal graph is a subgraph
of the square of a path. SAT for graphs of bounded pathwidth is a classic
tractable case: dynamic programming on a path decomposition solves it in linear
time. Consequently, the verified polynomial bounds for clearing and SVI
\emph{do not expand the class of polynomially solvable 3-SAT instances}; they
merely reconstruct the same pathwidth-$\le 2$ fragment within Romanov's
framework, but with a formally certified algorithm. The value lies in the
mechanisation---certifying that Romanov's triplet constructions yield a correct
decision procedure for this fragment---not in a new complexity improvement.
\end{remark}

\subsection{Why General 3-CNF Falls Outside the Verified Fragment}
\label{sec:perm}

The decomposition described above preserves satisfiability only when the
input formula is already a well-formed sliding-window CNF (Definition~\ref{def:sliding}),
i.e., every clause uses three consecutive variables. For an arbitrary 3-CNF
formula containing clauses such as $(x_1 \lor x_5 \lor x_9)$, typically no variable
ordering can place those three variables in consecutive positions while
simultaneously satisfying the same requirement for all other clauses.

Formally, the question ``does there exist a permutation of variables that
makes this 3-CNF sliding-window?'' is the \emph{consecutive-ones property} of
the clause--variable incidence matrix: columns correspond to variables, and
each clause-row must have its three ones in consecutive positions. This
property is decidable in polynomial time by the classical PQ-tree algorithm
of Booth and Lueker~\cite{booth1976consecutive}. The primal graph of a
sliding-window formula is an interval graph of clique number at most three
(and hence of pathwidth at most two), and bounded pathwidth is decidable in
linear time for every fixed width.

The limitation is therefore structural rather than algorithmic: the
sliding-window class is narrow, and a generic 3-CNF formula admits no valid
ordering. Moreover, a polynomial-time transformation of an arbitrary 3-CNF
into an equisatisfiable sliding-window CNF would imply P${}={}$NP, since the
fragment itself is solvable in polynomial time by dynamic programming on its
bounded-pathwidth primal graph. This justifies our restriction: our Rocq
theorems apply to sliding-window CNF formulas; general decomposition falls
outside the verified fragment (Table~\ref{tab:trust-boundary}).

\paragraph{Permutation search.}
Although the existence of a valid ordering is decidable in polynomial time
(see above), our \emph{verified} implementation is an exhaustive permutation
search (file \texttt{Permutation.v}) that guarantees
to find a variable ordering making the formula sliding-window whenever one
exists. The Rocq formalisation defines \texttt{permute\_cnf} (apply a variable
permutation) and a sliding-window predicate (checked via
\texttt{is\_sliding\_cnf\_bool}), together with a length-preservation lemma.
The verified algorithm
\texttt{exhaustive\_sliding\_window\_permutation} enumerates all $n!$
permutations and returns the first valid one; we prove both soundness (any
returned permutation is correct) and completeness (if a sliding-window ordering
exists, it is found). Because of factorial complexity the search is practical
only for $n \leq 8$ ($8! = 40\,320$). For larger instances the Python solver
falls back to an unverified heuristic backtracking search (module
\texttt{permutation\_heuristic.py}) or delegates directly to the external CDCL
pipeline. Permutations are proved to preserve satisfiability
(\texttt{permute\_cnf\_preserves\_sat}); a verified polynomial-time
recognition algorithm (replacing the exhaustive search) is left to future
work. When a permutation is found, the solver follows the
verified clause-by-clause pipeline on the permuted formula; when it fails, the
fallback path verified by external proof checking (Z3 + Swansea RUP checker) takes over. This yields a
``heuristic fallback'' path: for small structured instances the entire pipeline is
formally guaranteed, while for large random instances the trust boundary reduces
to the external proof checker.

\section{Formalisation in Rocq}
\label{sec:formalisation}

We formalised the core data structures and algorithms of TLS in Rocq~9.1.1. 
The development spans seventeen Rocq files covering triplets, tiers, compatibility, 
CTF-to-CTS construction, clearing, path building, SVI and systemic alignment, 
formula translation and equivalence, formal counterexamples, grouped sliding-window CNF, 
variable relabelling, overlapping groups, gap closure, heuristic decomposition, 
exhaustive permutation search, and OCaml extraction.

\subsection{Key Definitions}

A triplet is a triple of Booleans. A tier is a list of triplets. A CTS 
is a list of tiers. A satisfying set $\mathit{ss}$ is a flattened 
path: if the underlying path of triplets is $[(a_1,b_1,c_1), (a_2,b_2,c_2), \dots]$, 
then $\mathit{ss}$ $=$ $[a_1; b_1; c_1; a_2; b_2; c_2; \dots]$. The function 
$\mathrm{extract\_triplet\_local}(\mathit{ss},i)$ retrieves the $i$-th triplet as 
$(\mathit{ss}[3i], \mathit{ss}[3i+1], \mathit{ss}[3i+2])$ (or $\mathsf{None}$ if the list is too short). 
Consequently, for a satisfying set produced by $\mathrm{ss\_from\_path\_flat}$, 
$\mathrm{extract\_triplet\_local}(\mathit{ss},i)$ yields the sliding-window triplet over 
variables $(x_i, x_{i+1}, x_{i+2})$ (the flattening enforces $b_i = a_{i+1}$ and 
$c_i = b_{i+1}$, so the logical window slides by one variable even though the 
list indices advance by three).
The constructor $\mathrm{ss\_from\_path\_flat}$ flattens a path of triplets 
$[(a_1,b_1,c_1), (a_2,b_2,c_2), \dots]$ into 
$[a_1; b_1; c_1; a_2; b_2; c_2; \dots]$, so consecutive triplets share the 
overlapping variables $b_1=a_2$ and $c_1=b_2$ required by $\mathrm{compatible}$.

\begin{definition}[Satisfying Set]
 \label{def:satisfying-set}
 Let $S = [T_0, T_1, \dots, T_{m-1}]$ be a CTS and let $\mathit{ss} = [v_0, v_1, \dots]$ be a list of Boolean values. For each tier index $i$, let $\mathrm{trip}_i(\mathit{ss}) = (\mathit{ss}[3i], \mathit{ss}[3i+1], \mathit{ss}[3i+2])$ (or undefined if the list is too short). Then $\mathit{ss}$ \emph{satisfies} $S$, written $\mathrm{is\_satisfying\_set}(S, \mathit{ss})$, iff for every $i < m$ the triplet $\mathrm{trip}_i(\mathit{ss})$ is defined and belongs to $T_i$.
\end{definition}

\begin{definition}[Joint Satisfying Set]
 \label{def:jss}
 A list $\mathit{ss}$ is a \emph{joint satisfying set} of $S_1$ and $S_2$, written $\mathrm{JSS}(S_1, S_2, \mathit{ss})$, iff it satisfies both structures simultaneously: $\mathrm{is\_satisfying\_set}(S_1, \mathit{ss}) \land \mathrm{is\_satisfying\_set}(S_2, \mathit{ss})$.
\end{definition}

An empty structure is vacuously satisfiable; non-emptiness of both structures is enforced as an explicit premise in Theorem~\ref{thm:forward-main} ($S_1 \neq [], S_2 \neq []$).

\subsection{Design Choices}

We represent triplets as native triples \texttt{(bool * bool * bool)} and 
structures as lists rather than vectors or finite types. This choice reflects 
a trade-off between expressiveness and proof automation: lists provide 
structural induction principles that Rocq's \texttt{auto} and \texttt{lia} tactics 
handle well, while avoiding the proof-engineering overhead of dependent types for 
fixed-length sequences.

A subtle issue arose with the definition of \texttt{path}. Our initial attempt 
used \texttt{Definition path := list triplet}, which created a type synonym 
that is convertible but not unifiable with \texttt{list triplet}. This blocked 
\texttt{rewrite} and \texttt{subst} tactics across the development. Removing the 
type synonym and using \texttt{list triplet} directly resolved the issue and is 
a lesson for other mechanisation efforts.

\subsection{Proved Lemmas}

We proved 424 lemmas and theorems across the seventeen files. The development required 
nested inductions and careful handling of arithmetic side conditions involving 
\texttt{nth\_error} and tier lengths. The most technically demanding was 
\texttt{build\_paths\_aux\_contains\_expected\_path}, which required nested 
induction on the accumulator of \texttt{build\_paths\_aux} together with delicate 
arithmetic reasoning about \texttt{nth\_error} and tier lengths. 
Key lemmas include:

\begin{itemize}
 \item \textbf{Path existence} 
 (Lemma~\ref{lem:path-from-ss}): if a satisfying set 
 exists, \texttt{build\_paths\_aux} contains a path extending any suffix with 
 triplets drawn from the set.
 
 \item \textbf{Forward direction (Theorem~\ref{thm:forward-main}):} existence of a joint satisfying set implies 
 non-emptiness of SVI.
 
 \item \textbf{Aligned intersection equivalence} 
 (Theorem~\ref{thm:compat}): for structures of equal length, the aligned 
 intersection has a full-length path iff a compatible joint satisfying set 
 exists.
 
 \item \textbf{Systemic aligned completeness} 
 (Theorem~\ref{thm:compat-k}): for a system of $k$ aligned structures, the 
 systemic tier-wise intersection has a full-length path iff a compatible joint 
 satisfying set exists for the entire system.

 \item \textbf{CTF-to-CTS soundness} 
 (Lemma~\ref{lem:ctf-sound}): every path produced by \texttt{build\_paths\_all} 
 on \texttt{ctf\_to\_cts} yields a satisfying assignment for both the original 
 formula and the cleared structure.

 \item \textbf{Clearing fixed-point characterisation} 
 (Theorem~\ref{thm:clear-fixed}): after clearing, 
 every remaining triplet has compatible neighbours in both adjacent tiers 
 (or lies at a boundary).

 \item \textbf{Raw construction completeness} 
 (Lemma~\ref{lem:raw-complete}): if a satisfying set is 
 compatible with the raw (non-cleared) CTS, then \texttt{build\_paths\_all} 
 on that raw CTS is non-empty.

 \item \textbf{Cleared CTF strong completeness} 
 (Lemma~\ref{lem:cleared-strong}): for a non-empty cleared CTF, 
 compatibility-aware satisfiability implies non-emptiness of 
 \texttt{build\_paths\_all}. This closes the loop between the strong 
 predicate and path existence for the cleared structure.

 \item \textbf{Semantic gap between weak satisfiability and path 
 existence} (Theorem~\ref{thm:gap}): there exist CTFs 
 that admit locally consistent assignments under $\mathrm{satisfies\_ctf}$ 
 yet yield no compatible path after clearing. This shows that the weak 
 predicate does not guarantee global path existence; completeness is 
 recovered only at the level of aligned intersection 
 (Theorem~\ref{thm:compat}).

 \item \textbf{CNF-to-CTF satisfiability equivalence} 
 (Theorem~\ref{thm:cnf-ctf-equiv}): for well-formed 
 sliding-window CNF formulas, satisfiability in standard CNF semantics is 
 equivalent to satisfiability of the translated CTF. The formalisation 
 uses a \emph{simplified clause-by-clause pipeline}: each clause becomes 
 a separate CTF tier containing exactly one forbidden triplet (its 
 negation pattern). The proof constructs a greedy assignment 
 \texttt{sat\_assignment\_aux} that satisfies each clause independently. 
 Because the translation is one-to-one, consecutive tiers necessarily 
 share overlapping variables, and the satisfying set 
 $\mathrm{ss\_from\_path\_flat}$ enforces global consistency. This is 
 \emph{not} the grouped-window pipeline of Section~\ref{sec:heuristic-pipeline}; it 
 is an equivalent simplified view used for formal verification.

 \item \textbf{Path existence equivalence} 
 (Lemma~\ref{lem:struct-char}): a compatible 
 satisfying set exists for a non-empty structure iff 
 \texttt{build\_paths\_all} contains at least one path.

 \item \textbf{Raw construction emptiness} 
 (Lemma~\ref{lem:empty-tier}): a tier of 
 \texttt{build\_cts\_\allowbreak from\_ctf} is empty iff the corresponding formula 
 tier contains all 8 triplets\allowbreak (per-tier complementation).

 \item \textbf{Compatible degree} 
 (Lemma~\ref{lem:degree}): 
 for any triplet, exactly 2 triplets are compatible in the forward 
 direction and exactly 2 in the reverse direction.
\end{itemize}

\begin{theorem}[Forward Direction of SVI]
 \label{thm:forward-main}
 Let $S_1, S_2$ be non-empty CTS. If there exists a joint satisfying set $\mathit{ss}$ for $S_1$ and $S_2$ (in the weak, index-flexible sense), then the Simple Vertex Intersection is non-empty:
 \[
 \exists \mathit{ss} \; . \; \mathrm{JSS}(S_1, S_2, \mathit{ss})
 \;\Longrightarrow\;
 \mathrm{SVI}(S_1, S_2) \neq \emptyset.
 \]
\end{theorem}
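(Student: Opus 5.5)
The plan is to exhibit one explicit common vertex. Since $\mathrm{SVI}(S_1,S_2)$ is defined as the set of triplets occurring in the basic graph $G_1$ whose \emph{value} occurs somewhere in $G_2$, at any tier, non-emptiness only requires a single triplet that lies in some tier of $S_1$ and also in some tier of $S_2$. The joint satisfying set $\mathit{ss}$ supplies exactly such a triplet at tier index $0$.

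The steps, in order, are as follows.

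First, from $S_1 \neq []$ and $S_2 \neq []$, write $S_1 = T^1_0 :: R_1$ and $S_2 = T^2_0 :: R_2$.

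Second, unfold $\mathrm{JSS}(S_1,S_2,\mathit{ss})$ (Definition~\ref{def:jss}) into $\mathrm{is\_satisfying\_set}(S_1,\mathit{ss})$ and $\mathrm{is\_satisfying\_set}(S_2,\mathit{ss})$. Instantiate Definition~\ref{def:satisfying-set} at $i=0$, which is legitimate since $0 < |S_1|$ and $0<|S_2|$. This gives $\mathrm{extract\_triplet\_local}(\mathit{ss},0) = \mathsf{Some}\ t$ with $t = (\mathit{ss}[0],\mathit{ss}[1],\mathit{ss}[2])$, and moreover $t \in T^1_0$ and $t \in T^2_0$. The key point is that both memberships refer to the \emph{same} extracted value $t$, because the extraction depends only on $\mathit{ss}$ and $i$, not on the structure.

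Third, show that $(0,t)$ is a vertex of $G_1$ and that $t$ occurs as a vertex value of $G_2$. This amounts to a membership lemma for the basic-graph construction: if $t \in \mathrm{nth}\ i\ S$ then $(i,t)$ appears in the graph of $S$. The lemma follows by induction on the tier list with an index accumulator.

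Fourth, conclude that $(0,t)$ passes the common-vertex filter, so the resulting list is non-empty. The filter step is discharged by a standard \texttt{filter\_In} argument together with an \texttt{existsb}-style reflection for the test ``value appears in $G_2$''.

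The main obstacle is the third step. The graph construction is defined by a recursive traversal with an index offset (or via \texttt{flat\_map} over \texttt{combine} with \texttt{seq}), so the membership lemma must be generalised over the starting index before induction goes through. I would first try to state it as: for all $k$, if $t\in \mathrm{nth}\ i\ S$ then $(k+i,t)\in \mathrm{graph\_aux}\ k\ S$. Only tier $0$ is actually needed, however. If the general form is awkward, the fallback is to prove just the head case: the first tier's triplets, tagged with index $0$, appear in the graph. This needs no induction at all, only unfolding one step of the construction and using \texttt{in\_app\_iff}/\texttt{in\_map}.

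The "weak, index-flexible" phrasing in the statement then costs nothing. Flexibility is in SVI's favour, since the $G_2$ test ignores the tier index.
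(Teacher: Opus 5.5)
Your proposal is correct and follows the paper's own argument: the first triplet $t_0 = (\mathit{ss}[0],\mathit{ss}[1],\mathit{ss}[2])$ of the joint satisfying set lies in a tier of both $S_1$ and $S_2$, and because SVI matches vertices by triplet value regardless of tier index, $t_0$ is a common vertex, so the hyperstructure is non-empty. Your graph-membership lemma (generalised over the index offset) and the \texttt{filter\_In}/\texttt{existsb} reflection step are the mechanisation details behind the paper's one-line sketch; prefer the general-index version over the head-case fallback, since only it also covers an index-flexible reading in which $t_0$ need only occur in some tier of $S_1$.
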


\begin{proof}[Proof sketch]
 Every triplet $t_i$ of $\mathit{ss}$ appears in some tier of both $S_1$ and $S_2$. SVI computes common vertices by value, so at least $t_0$ is a common vertex. Hence the hyperstructure contains $t_0$ and is non-empty.
\end{proof}

\begin{theorem}[Systemic Forward Direction]
 \label{thm:systemic-main}
 Let $\mathcal{S} = [S_1, \dots, S_k]$ be a system of non-empty CTS. If there exists a joint satisfying set for all structures in $\mathcal{S}$, then the systemic SVI produces a non-empty hyperstructure system:
 \[
 \exists \mathit{ss} \; . \; \forall i \leq k \; . \;
 \mathrm{is\_satisfying\_set}(S_i, \mathit{ss})
 \;\Longrightarrow\;
 \mathrm{SSVI}(\mathcal{S}) \neq \emptyset.
 \]
\end{theorem}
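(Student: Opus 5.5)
The plan is to reduce Theorem~\ref{thm:systemic-main} to Theorem~\ref{thm:forward-main}, applied pairwise. The definition of $\mathrm{SSVI}$ collects common vertices across all pairs of structures in $\mathcal{S}$, so it suffices to show that at least one pair contributes a common vertex. I would first unfold $\mathrm{SSVI}(\mathcal{S})$ as a concatenation (or union) of pairwise $\mathrm{SVI}(S_i,S_j)$ hyperstructures. I would then prove a small membership lemma: if some pairwise component is non-empty, the whole system is non-empty. In Rocq this is a lemma of the form ``$x \in \mathrm{SVI}(S_i,S_j)$ for some $i<j$ implies $\mathrm{SSVI}(\mathcal{S}) \neq []$'', proved by induction on the list of structures together with \texttt{in\_app\_iff} or \texttt{in\_flat\_map}.

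The main step goes as follows. Fix the witness $\mathit{ss}$ with $\mathrm{is\_satisfying\_set}(S_i,\mathit{ss})$ for all $i$. For any two distinct structures $S_i,S_j$ in $\mathcal{S}$, the conjunction of the two satisfying-set hypotheses is exactly $\mathrm{JSS}(S_i,S_j,\mathit{ss})$ (Definition~\ref{def:jss}). Both structures are non-empty by hypothesis, so Theorem~\ref{thm:forward-main} gives $\mathrm{SVI}(S_i,S_j)\neq\emptyset$. Concretely, the witness is the triplet $\mathrm{trip}_0(\mathit{ss})$: it is defined because $S_i \neq []$, it lies in tier $0$ of both structures, and it is therefore a common vertex by value. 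Feeding this into the membership lemma completes the proof.

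The hard part is the degenerate case $k \le 1$. Here there are no pairs, and whether the statement holds depends on how $\mathrm{SSVI}$ treats a singleton system. I would first check the definition. If $\mathrm{SSVI}$ of a single structure returns its own vertex set (a self-intersection), then non-emptiness follows because $\mathrm{trip}_0(\mathit{ss})$ lies in $S_1$. If it returns the empty system, the theorem needs the implicit premise $k\ge 2$. In that case I would either add $2 \le |\mathcal{S}|$ as a hypothesis, or define $\mathrm{SSVI}$ to include the diagonal pairs $(S_i,S_i)$, for which Theorem~\ref{thm:forward-main} applies directly with $S_1=S_2$. The remaining work is list bookkeeping: expressing ``$\forall i\le k$'' through \texttt{In} or \texttt{Forall}, and extracting two members from a list of length at least two by destructing it as $S_1 :: S_2 :: \mathit{rest}$.
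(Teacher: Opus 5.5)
Your proposal is essentially the paper's argument: the common witness $\mathit{ss}$ is a joint satisfying set for each pair, so Theorem~\ref{thm:forward-main} applies pairwise (the paper's sketch uses the pairs $(S_1,S_i)$ with $i>1$ rather than all pairs, which makes no difference). One adjustment: the paper's sketch concludes that \emph{every} component hyperstructure is non-empty, not merely one, and your main step already proves this, so you should finish with a statement over all components rather than the one-component membership lemma; under that reading the case $k\le 1$ is vacuously true and needs neither the extra hypothesis $2\le|\mathcal{S}|$ nor diagonal pairs.
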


\begin{proof}[Proof sketch]
 SSVI applies the pairwise SVI to every pair $(S_1, S_i)$ with $i > 1$. By Theorem~\ref{thm:forward-main} each pairwise SVI is non-empty because $\mathit{ss}$ satisfies both structures. Therefore every hyperstructure in the system is non-empty.
\end{proof}

\begin{theorem}[Aligned Intersection Equivalence]
\label{thm:compat}
For all $S_1, S_2$ with $|S_1| = |S_2|$, if both are non-empty, then
\[
 \exists \text{ JSS}_{\mathrm{compat}}(S_1, S_2)
 \;\Longleftrightarrow\;
 \mathrm{build\_paths\_all}(\mathrm{cts\_intersection\_raw}(S_1, S_2)) \neq [].
\]
\end{theorem}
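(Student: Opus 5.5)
The plan is to reduce both directions to the path-existence characterisation, Lemma~\ref{lem:struct-char}, applied to the single structure $S_\cap = \mathrm{cts\_intersection\_raw}(S_1,S_2)$. We read $\mathrm{JSS}_{\mathrm{compat}}(S_1,S_2)$ as: there is a list $\mathit{ss}$, arising as $\mathrm{ss\_from\_path\_flat}$ of a compatible path, that satisfies both $S_1$ and $S_2$.

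First we establish two structural facts about $S_\cap$. The first is $|S_\cap| = |S_1| = |S_2|$, by induction on the paired lists using the hypothesis $|S_1|=|S_2|$; this also gives $S_\cap \neq []$ from the non-emptiness premises. The second is a tier-membership lemma: for every $i$ and triplet $t$, $t$ lies in tier $i$ of $S_\cap$ iff $t$ lies in tier $i$ of $S_1$ and in tier $i$ of $S_2$. This comes from the tier-wise filter definition, via an \texttt{nth\_error} induction and \texttt{filter\_In}. From the second fact we get the key bridge: $\mathrm{is\_satisfying\_set}(S_\cap,\mathit{ss})$ holds iff both $\mathrm{is\_satisfying\_set}(S_1,\mathit{ss})$ and $\mathrm{is\_satisfying\_set}(S_2,\mathit{ss})$ hold. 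Indeed, by Definition~\ref{def:satisfying-set} each side says that for all $i<m$ the triplet $\mathrm{trip}_i(\mathit{ss})$ is defined and lies in the corresponding tier(s). The lengths agree, so the index ranges coincide.

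($\Rightarrow$) Given a compatible JSS $\mathit{ss}$, the bridge shows that $\mathit{ss}$ is a compatible satisfying set of $S_\cap$. Since $S_\cap \neq []$, the forward half of Lemma~\ref{lem:struct-char} gives $\mathrm{build\_paths\_all}(S_\cap) \neq []$. Alternatively one can appeal directly to Lemma~\ref{lem:path-from-ss} or Lemma~\ref{lem:raw-complete}, since $S_\cap$ is a raw, uncleared structure.

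($\Leftarrow$) Take a path $p \in \mathrm{build\_paths\_all}(S_\cap)$. The backward half of Lemma~\ref{lem:struct-char} yields a compatible satisfying set for $S_\cap$; concretely, this is $\mathrm{ss\_from\_path\_flat}$ of the reversed path. The bridge then splits it into satisfaction of $S_1$ and of $S_2$, which is exactly a compatible JSS.

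The main obstacle is bookkeeping rather than mathematics. \texttt{build\_paths\_all} stores paths in reverse order, so the extracted satisfying set must be built from $\mathrm{rev}\,p$, and the compatibility orientation must be tracked. The tier-membership lemma must also be stated with \texttt{nth\_error} and a length side condition, so that it meshes with $\mathrm{extract\_triplet\_local}$. I would first prove the membership lemma in the general form
\[
\mathrm{nth\_error}(S_\cap,i)=\mathsf{Some}\,T \iff \exists T_1\,T_2.\; \mathrm{nth\_error}(S_1,i)=\mathsf{Some}\,T_1 \land \mathrm{nth\_error}(S_2,i)=\mathsf{Some}\,T_2 \land T = T_1\cap T_2,
\]
under $|S_1|=|S_2|$. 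After that, both directions should close by rewriting.
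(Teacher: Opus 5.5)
Your proposal is correct and follows essentially the paper's argument. Both directions reduce to the single-structure path-existence results (Lemma~\ref{lem:path-from-ss}, packaged as Lemma~\ref{lem:struct-char}) applied to $\mathrm{cts\_intersection\_raw}(S_1,S_2)$, using the facts that tier $i$ of the intersection holds exactly the triplets in tier $i$ of both structures and that equal lengths make the intersection non-empty (Lemma~\ref{lem:raw-intersect-nonempty}).

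Two minor points. First, the bridge you actually invoke must be stated for $\mathrm{is\_satisfying\_set\_compat}$, not just $\mathrm{is\_satisfying\_set}$; this is immediate, since the adjacent-compatibility conditions involve only $\mathit{ss}$ and the common tier count. Second, your aside appealing to Lemma~\ref{lem:raw-complete} does not apply as stated, because that lemma concerns $\mathrm{build\_cts\_from\_ctf}(\phi)$ rather than an arbitrary tier-wise intersection.
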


Although the statement is intuitively clear---a compatible path through the
tier-wise intersection exists precisely when a sequence of triplets is
compatible across \emph{both} structures---it is exactly the kind of
``obvious'' fact that often breaks during mechanisation. The value of the
formal proof is that it certifies consistency among three independently defined
concepts (the compatibility-aware satisfying set, the aligned intersection
operation, and the inductive path-building algorithm) and establishes that the
right-hand side is \emph{computable} and has been extracted to OCaml
(Section~\ref{sec:vfr}). Reconciling the recursive definitions of
$\mathrm{is\_satisfying\_set\_compat}$ and \texttt{build\_paths\_all} is a
non-trivial engineering task: subtle mismatches in base cases or accumulator
shapes that a human reader overlooks become hard proof obligations in Rocq.

\begin{lemma}[Path Extension]
 \label{lem:path-extension}
 Let $R$ be a suffix of a CTS, $\mathit{ss}$ a compatible satisfying set, and $\mathit{acc}$ a set of partial paths. If every path in $\mathit{acc}$ can be extended by triplets drawn from $\mathit{ss}$, then $\mathrm{build\_paths\_aux}(R, \mathit{acc})$ contains at least one full-length extension whose prefix consists exactly of the triplets prescribed by $\mathit{ss}$.
\end{lemma}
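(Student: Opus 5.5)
The plan is to prove the lemma by structural induction on the remaining suffix $R$, generalising over the accumulator $\mathit{acc}$ and over the starting tier index $k$ at which $R$ begins in the full CTS. First I make the invariant precise. Let $c_i$ denote the triplet $\mathrm{extract\_triplet\_local}(\mathit{ss}, i)$. A partial path in $\mathit{acc}$ \emph{agrees with $\mathit{ss}$ up to $k$} if it has length $k$ and equals $[c_{k-1}, \dots, c_0]$. This is the reversed order, since \texttt{build\_paths\_aux} prepends each new triplet (Listing~\ref{lst:build-paths}). The hypothesis ``every path in $\mathit{acc}$ can be extended by triplets drawn from $\mathit{ss}$'' I strengthen to: $\mathit{acc}$ contains the agreeing path $[c_{k-1},\dots,c_0]$, or $\mathit{acc} = []$ when $k=0$. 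The conclusion is that $\mathrm{build\_paths\_aux}(R, \mathit{acc})$ contains $[c_{k+|R|-1}, \dots, c_0]$.

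For the base case $R = []$, \texttt{build\_paths\_aux} returns $\mathit{acc}$ unchanged, and the witness path is already in it. For the step $R = T_k :: R'$, I use two facts from $\mathit{ss}$. Since $\mathit{ss}$ satisfies the structure, $c_k \in T_k$. Compatibility-awareness gives $\mathrm{compatible}(c_{k-1}, c_k)$ whenever $k > 0$; I would obtain this either from the definition of $\mathrm{is\_satisfying\_set\_compat}$ or from the flattening equations $b_{k-1}=a_k$, $c_{k-1}=b_k$. I then show that $\mathrm{ExtendPaths}(\mathit{acc}, T_k)$ contains $c_k :: [c_{k-1},\dots,c_0]$. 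When $\mathit{acc} = []$ this holds because $[c_0]$ is among the singletons. Otherwise it follows from a membership lemma for the nested \texttt{flat\_map}: if $p \in \mathit{acc}$, $t \in T_k$ and $\mathrm{compatible}(\mathrm{head}(p), t)$, then $t :: p$ belongs to the result. I must match the argument order of \texttt{Compatible} in the code, where the new triplet is the later one in assignment order. The induction hypothesis applied to $R'$, index $k+1$ and the new accumulator then closes the step.

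Finally, I instantiate with $k = 0$ and $\mathit{acc} = []$. The length filter in \texttt{BuildPathsAll} keeps the resulting path of length $|S|$, which gives the ``full-length'' clause.

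I expect the main obstacle to be the index bookkeeping, not the logic. Relating $\mathrm{nth\_error}$ on the full CTS to the head of the suffix $R$ requires carrying an equation $R = \mathrm{skipn}\,k\,S$. Showing that $\mathrm{extract\_triplet\_local}(\mathit{ss}, k)$ is defined needs $3k+2 < |\mathit{ss}|$, which follows from $k < |S|$ and satisfaction. My first attempt would state the induction over $R$ with $k$ and $\mathit{acc}$ universally quantified, discharging the arithmetic with \texttt{lia} after rewriting with $\mathrm{nth\_error\_skipn}$-style lemmas. The \texttt{flat\_map} membership lemma I would prove separately beforehand.
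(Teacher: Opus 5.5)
Your skeleton matches the paper's: induction on the remaining suffix with the accumulator and start index generalised, a separate membership lemma for the nested \texttt{flat\_map} proved by induction on the accumulator, and \texttt{nth\_error}/\texttt{skipn} arithmetic tying the suffix to tier indices. The paper describes \texttt{build\_paths\_aux\_contains\_expected\_path} as exactly such a nested induction. The step that fails as written is the \emph{direction} of compatibility. Listing~\ref{lst:build-paths} keeps $t :: p$ only when $\mathrm{Compatible}(t,\mathrm{head}(p))$, i.e.\ $\mathrm{compatible}(c_k,c_{k-1})$ in your notation; the paper stresses that \texttt{build\_paths\_all} checks ``compatible next prev''. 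Your membership lemma assumes $\mathrm{compatible}(\mathrm{head}(p),t)$ instead. Because the relation is directed, that lemma is false for this code. Take $\mathit{acc}=[[(0,1,1)]]$ and $T=\{(1,1,0)\}$: $\mathrm{compatible}((0,1,1),(1,1,0))$ holds but $\mathrm{compatible}((1,1,0),(0,1,1))$ does not, so the extension is empty. Your remark that you must ``match the argument order'' spots the issue, but both the lemma and the fact you feed into it are stated the wrong way round.

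The fact you extract from $\mathit{ss}$ is reversed for the same reason. In the paper, $\mathrm{is\_satisfying\_set\_compat}$ is aligned with the backward check of \texttt{build\_paths\_all}. In Theorem~\ref{thm:clearing-not-conservative} the compatible satisfying set $[1,0,0,0,1,0,0,0,1]$ has $c_0=(1,0,0)$ and $c_1=(0,1,0)$, with $\mathrm{compatible}(c_1,c_0)$ true and $\mathrm{compatible}(c_0,c_1)$ false. Drop your fallback via the flattening equations entirely. $\mathrm{extract\_triplet\_local}$ reads the disjoint blocks $\mathit{ss}[3k\,..\,3k{+}2]$, so flattening imposes no relation between consecutive $c_i$. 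Where such equations do hold, they express the forward relation, which the algorithm never tests. They already fail on this example: the last bit of $c_0$ is $0$ while the middle bit of $c_1$ is $1$. The repair is mechanical. State the membership lemma with $\mathrm{compatible}(t,\mathrm{head}(p))$ and take $\mathrm{compatible}(c_k,c_{k-1})$ directly from the definition of $\mathrm{is\_satisfying\_set\_compat}$; the rest of your induction then goes through. Also phrase the invariant as $(k=0\land\mathit{acc}=[])\lor(k>0\land[c_{k-1},\dots,c_0]\in\mathit{acc})$, so that the empty path, whose head is undefined, never sits in the accumulator.
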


\begin{corollary}[Aligned Intersection Equivalence]
 \label{thm:compat-restated}
 For all $S_1, S_2$ with $|S_1| = |S_2|$, let $I = \mathrm{cts\_intersection\_raw}(S_1, S_2)$. Then
 \[
 \exists \mathit{ss} \; . \; \mathrm{JSS}_{\mathrm{compat}}(S_1, S_2, \mathit{ss})
 \;\Longleftrightarrow\;
 \mathrm{build\_paths\_all}(I) \neq [].
 \]
\end{corollary}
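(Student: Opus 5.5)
The plan is to prove the corollary as a two-direction argument that reduces both sides to the characterisation of path existence on a single structure, applied to the intersection $I$. The bridge is a pointwise lemma about the aligned intersection: since $|S_1|=|S_2|$, the $i$-th tier of $I$ is $T^1_i\cap T^2_i$. Hence for every list $\mathit{ss}$ and every $i<|I|$, $\mathrm{trip}_i(\mathit{ss})\in I_i$ iff $\mathrm{trip}_i(\mathit{ss})\in T^1_i$ and $\mathrm{trip}_i(\mathit{ss})\in T^2_i$. I would establish this first, by induction on the two tier lists in lockstep, using the lemma that membership in a filtered list is membership in the original together with the filter predicate. From it I derive
\[
 \mathrm{JSS}_{\mathrm{compat}}(S_1,S_2,\mathit{ss}) \iff \mathrm{is\_satisfying\_set\_compat}(I,\mathit{ss}).
\]
The compatibility conjunct depends only on $\mathit{ss}$ (consecutive extracted triplets agree on overlaps), so it transfers unchanged. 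The length conditions transfer because $|I|=|S_1|=|S_2|$.

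\emph{Forward direction.} Given $\mathit{ss}$ with $\mathrm{JSS}_{\mathrm{compat}}(S_1,S_2,\mathit{ss})$, the bridge shows that $\mathit{ss}$ is a compatible satisfying set for $I$. Two cases arise. If $I$ is non-empty, Lemma~\ref{lem:struct-char} (equivalently, Lemma~\ref{lem:path-extension} applied with $R=I$ and the empty accumulator) yields a full-length path in $\mathrm{build\_paths\_all}(I)$, namely the triplets $\mathrm{trip}_0(\mathit{ss}),\dots,\mathrm{trip}_{m-1}(\mathit{ss})$ listed in reverse order. If $S_1=S_2=[]$, then $I=[]$, and I must check what $\mathrm{build\_paths\_all}([])$ returns. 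With the pseudocode in Listing~\ref{lst:build-paths} it returns the empty path filtered by length $0$, which is non-empty. If the Rocq definition disagrees, the non-emptiness hypothesis of Theorem~\ref{thm:compat} should be carried over to this corollary.

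\emph{Reverse direction.} Take $p\in\mathrm{build\_paths\_all}(I)$. Its length equals $|I|$, its $i$-th triplet in assignment order (after reversing) lies in $I_i$, and consecutive triplets are compatible. I would prove these facts as an invariant of $\mathrm{build\_paths\_aux}$, by induction over the tiers processed, stated for every path in the accumulator. Setting $\mathit{ss}=\mathrm{ss\_from\_path\_flat}(\mathrm{rev}\,p)$, extraction gives $\mathrm{trip}_i(\mathit{ss})=p_i$, so $\mathit{ss}$ is a compatible satisfying set of $I$, and the bridge turns it into $\mathrm{JSS}_{\mathrm{compat}}(S_1,S_2,\mathit{ss})$.

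I expect the main obstacle to be reconciling the index conventions. $\mathrm{build\_paths\_all}$ stores paths reversed and prepends each new triplet at the head, $\mathrm{extract\_triplet\_local}$ indexes by $3i$, and the compat predicate is recursive. The $3i$ indexing together with flattening requires an $\mathrm{nth\_error}$ lemma for $\mathrm{ss\_from\_path\_flat}$ of the form $\mathrm{nth\_error}(\mathrm{ss\_from\_path\_flat}\,q,3i+j)$ being the $j$-th component of $q_i$, which I would prove first by induction on $q$. After that, both directions reduce to Lemma~\ref{lem:struct-char} plus the bridge, so the corollary is essentially Theorem~\ref{thm:compat} restated, and could even be discharged by invoking it directly once the empty case is handled.
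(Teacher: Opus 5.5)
Your main argument is essentially the paper's. The paper proves ($\Rightarrow$) by turning the compatible joint satisfying set into a path whose $i$-th triplet lies in the $i$-th intersection tier and then invoking Lemma~\ref{lem:path-from-ss}. It proves ($\Leftarrow$) by flattening a full-length path of $I$ whose triplets lie tier-wise in both $S_1$ and $S_2$. Your bridge $\mathrm{JSS}_{\mathrm{compat}}(S_1,S_2,\mathit{ss}) \iff \mathrm{is\_satisfying\_set\_compat}(I,\mathit{ss})$ just makes that factorisation explicit.

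\textbf{The degenerate case is wrong.} Your treatment of $S_1=S_2=[]$ contains a false step. In Listing~\ref{lst:build-paths}, \texttt{BuildPathsAll} initialises \texttt{paths := []}. That is the empty list of paths, not the singleton list containing the empty path. The loop body never runs, and the length filter returns $[]$, so $\mathrm{build\_paths\_all}([]) = []$. Since $\mathrm{is\_satisfying\_set\_compat}$ maps $[]$ to $\mathsf{True}$, the left-hand side holds vacuously while the right-hand side fails. The statement as written is therefore false in this case. The non-emptiness hypothesis of Theorem~\ref{thm:compat} is necessary, not a contingency that depends on how Rocq defines things. The paper's proof works under that hypothesis, with Lemma~\ref{lem:raw-intersect-nonempty} supplying $I \neq []$. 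You should state it as a premise rather than claim the empty case goes through, and then "invoking Theorem~\ref{thm:compat} directly" is indeed all that remains.

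\textbf{The reverse-direction invariant needs adjusting.} \texttt{ExtendPaths} restarts from singleton paths whenever the accumulator becomes empty. So a path in the accumulator after $k$ tiers need not start at tier $0$. Stated for every accumulator path, ``the $i$-th triplet lies in $I_i$'' is false after such a reset. The invariant must instead say that a path of length $\ell$ has its triplets, in assignment order, in tiers $k-\ell,\dots,k-1$. Only the final length filter forces $\ell=|I|$, and with it the alignment with tiers $0,\dots,|I|-1$. This is exactly the accumulator-reset subtlety the paper handles via \texttt{extend\_paths\_nil\_acc}.
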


\begin{theorem}[Systemic Aligned Completeness]
\label{thm:compat-k}
Let $\mathcal{S} = [S_1, S_2, \dots, S_k]$ be a system of aligned CTS 
$($all of equal length$)$. Define the systemic raw intersection 
$I_k = \mathrm{cts\_intersection\_raw\_k}(\mathcal{S})$. Then
\[
 \exists \mathit{ss} \; . \; \forall i \leq k \; . \;
 \mathrm{is\_satisfying\_set\_compat}(S_i, \mathit{ss})
 \;\Longleftrightarrow\;
 \mathrm{build\_paths\_all}(I_k) \neq [].
\]
\end{theorem}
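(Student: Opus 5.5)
The plan is to reduce Theorem~\ref{thm:compat-k} to the two-structure case, Theorem~\ref{thm:compat}, together with the path-existence characterisation Lemma~\ref{lem:struct-char}. I would work by induction on the list $\mathcal{S}$, assuming $\mathrm{cts\_intersection\_raw\_k}$ is defined as a right fold of the tier-wise intersection $\mathrm{cts\_intersection\_raw}$. The base case is a single structure, or the empty system with the conventional all-triplets structure. The fold is what the extracted code suggests. The alignment hypothesis ($|S_i| = |S_j|$ for all $i,j$) guarantees that every intermediate $I_j$ has the same length as the $S_i$. I would prove this first as a small length lemma: $\mathrm{cts\_intersection\_raw}$ preserves length on equal-length arguments.

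The key intermediate statement is a semantic characterisation of the raw intersection, which I would prove before any path reasoning. For equal-length $A,B$ and any list $\mathit{ss}$,
\[
\mathrm{is\_satisfying\_set\_compat}(\mathrm{cts\_intersection\_raw}(A,B),\mathit{ss}) \Longleftrightarrow \mathrm{is\_satisfying\_set\_compat}(A,\mathit{ss}) \land \mathrm{is\_satisfying\_set\_compat}(B,\mathit{ss}).
\]
This follows tier by tier. Membership in the intersected tier $i$ is membership in both tier $i$ of $A$ and tier $i$ of $B$, which I would get from a list-filter membership lemma. The compatibility side conditions depend only on $\mathit{ss}$, not on the tiers. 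I expect this lemma is essentially already inside the proof of Theorem~\ref{thm:compat}, so I would factor it out rather than redo it.

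With that lemma, induction on $\mathcal{S}$ gives
\[
\mathrm{is\_satisfying\_set\_compat}(I_k,\mathit{ss}) \Longleftrightarrow \forall i\le k.\ \mathrm{is\_satisfying\_set\_compat}(S_i,\mathit{ss}).
\]
Then Lemma~\ref{lem:struct-char}, applied to the single structure $I_k$ (non-empty because it has the common length of the non-empty $S_i$), turns the existence of such an $\mathit{ss}$ into $\mathrm{build\_paths\_all}(I_k)\neq[]$. For the reverse direction, a path $p\in\mathrm{build\_paths\_all}(I_k)$ yields $\mathit{ss}=\mathrm{ss\_from\_path\_flat}(p)$, which satisfies $I_k$ compatibly by Lemma~\ref{lem:ctf-sound}-style soundness, and hence satisfies every $S_i$ by the equivalence above.

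I expect the main obstacle to be the base case and side conditions of the fold rather than the mathematics. The empty system, or the identity element of the fold, must be chosen so that its intersection with $S_1$ is $S_1$ itself, up to membership. The non-emptiness premise needed by Lemma~\ref{lem:struct-char} must be threaded through from alignment. If the fold's base is the all-triplets structure of the right length, I would prove a membership-level identity lemma. If instead the definition pattern-matches on $[S]$ and $S::\mathit{rest}$, I would do induction on $\mathit{rest}$ with $S_1$ generalised.
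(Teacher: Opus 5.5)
Your proposal is correct and follows essentially the paper's route: induction on the system using the pairwise splitting characterisation of $\mathrm{cts\_intersection\_raw}$, then passing between compatible satisfying sets and $\mathrm{build\_paths\_all}(I_k)$ via the single-structure path-existence equivalence (Lemma~\ref{lem:struct-char}). One small correction: alignment alone does not make $I_k$ non-empty, so non-emptiness of the $S_i$ and $k \ge 1$ must be stated as explicit hypotheses, as in Theorem~\ref{thm:compat}, rather than ``threaded through from alignment'' (if every $S_i$ is $[]$, the left-hand side holds vacuously while $\mathrm{build\_paths\_all}([]) = []$).
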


\begin{proof}[Proof sketch]
The proof proceeds by induction on the system size, using the pairwise 
aligned-intersection lemmas as the base case. For the forward direction, if a 
common satisfying set $\mathit{ss}$ exists, it satisfies every pairwise 
intersection, hence the systemic intersection, and therefore 
$\mathrm{build\_paths\_all}(I_k) \neq []$. For the reverse direction, any 
path through $I_k$ yields a sequence $\mathit{ss} = \mathrm{ss\_from\_path}(\pi)$ 
that belongs to every tier of every structure; by the splitting lemma this 
$\mathit{ss}$ satisfies each $S_i$ compatibly.
\end{proof}

\begin{lemma}[Raw Intersection Non-Empty]
 \label{lem:raw-intersect-nonempty}
 For non-empty aligned structures of equal length,
 $\mathrm{cts\_intersection\_raw}$ is never empty.
\end{lemma}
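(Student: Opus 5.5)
We will prove the statement by induction on the common length of the two structures, after first fixing the definition of $\mathrm{cts\_intersection\_raw}$. That definition zips $S_1$ and $S_2$ tier by tier. Tiers $T^1_i$ and $T^2_i$ at the same index are combined into the filtered list $\{t \in T^1_i \mid t \in T^2_i\}$, and the results are collected in a list. The key observation is that ``empty'' in the statement means that the \emph{list of tiers} is $[]$, not that some individual tier is empty. So the claim reduces to a length fact. Individual intersected tiers may well be empty; the structure as a list is not.

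\textbf{Step 1 (length lemma).} We will show by induction on $S_1$, generalising over $S_2$, that
\[
|S_1| = |S_2| \;\Longrightarrow\; |\mathrm{cts\_intersection\_raw}(S_1,S_2)| = |S_1|.
\]
In the nil case both lists are $[]$ and the result is $[]$. In the cons case, $S_2$ must also be a cons by the length hypothesis, since the nil subcase gives $0 = k+1$ and is discharged by \texttt{lia} or \texttt{discriminate}. The definition then unfolds to one intersected tier consed onto the recursive call, and the induction hypothesis closes the goal. If the definition is written via \texttt{combine}/\texttt{map}, we will instead use the library lemmas \texttt{map\_length} and \texttt{combine\_length} and the identity $\min(n,n)=n$.

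\textbf{Step 2 (conclusion).} Since $S_1 \neq []$, we have $|S_1| \ge 1$. By Step 1, $|\mathrm{cts\_intersection\_raw}(S_1,S_2)| \ge 1$, so the intersection is not $[]$: rewriting it as $[]$ would give length $0$, a contradiction. Alternatively, we can skip Step 1 and simply destruct both $S_1$ and $S_2$ as conses and unfold once, because the head of the output is visibly a cons cell. This is probably the shortest mechanised proof. The hypothesis $S_2 \neq []$ is then only needed to rule out the $S_2 = []$ branch in the definition.

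\textbf{Main obstacle.} We expect the main difficulty to be the shape of the Rocq definition rather than any mathematics. If $\mathrm{cts\_intersection\_raw}$ has a catch-all clause such as \texttt{| \_, \_ => []}, or silently truncates on mismatched lengths, the simplification step needs an explicit \texttt{destruct} on both lists before \texttt{simpl} will reduce. We will try \texttt{destruct S1, S2; simpl; congruence} first, and fall back to the length lemma of Step 1 if the definition is phrased through higher-order list combinators. We should also stress that this lemma says nothing about satisfiability. It only guarantees that $\mathrm{build\_paths\_all}(I)$ is evaluated on a structure with at least one tier, which is what the case analysis in Corollary~\ref{thm:compat-restated} requires.
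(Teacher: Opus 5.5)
Your proof is correct and takes the same approach as the paper, which states this lemma without a separate argument and treats it as immediate from the tier-wise (zip-style) definition of $\mathrm{cts\_intersection\_raw}$: when $|S_1| = |S_2|$ and $S_1 \neq []$, both inputs are conses, so unfolding once exposes a cons cell, exactly as in your shortcut (your Step~1 length lemma is a harmless but unnecessary generalisation). Your reading of ``empty'' as the tier list itself being $[]$, not some intersected tier being empty, is also the paper's, since it uses the lemma only to discharge the redundant premise $\mathrm{cts\_intersection\_raw} \neq []$ in Theorem~\ref{thm:compat}.
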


\paragraph{Proof sketch (Aligned Intersection).} 
The non-emptiness premise is redundant: for non-empty aligned structures,
$\mathrm{cts\_intersection\_raw}$ is never empty
(Lemma~\ref{lem:raw-intersect-nonempty}).
($\Rightarrow$) Given a compatible satisfying set $\mathit{ss}$, we construct a path 
through the intersection by induction on the tier index. At each step, the 
compatibility of adjacent triplets in $\mathit{ss}$ guarantees that the corresponding 
triplet exists in the intersection tier and is compatible with the previous 
one. By Lemma~\ref{lem:path-from-ss} (proved by nested 
induction on the recursive structure), this path is present in 
$\mathrm{build\_paths\_all}$.

($\Leftarrow$) Given a full-length path $p$ in the intersection, every 
triplet $c_i \in p$ belongs to tier $i$ of both $S_1$ and $S_2$. The 
compatibility of adjacent triplets in $p$ ensures that $\mathrm{ss\_from\_path}$ 
produces a satisfying set compatible with both structures. 

\begin{lemma}[Path Existence from Satisfying Set]
 \label{lem:path-from-ss}
 Let $S$ be a non-empty CTS and $\mathit{ss}$ a satisfying set for $S$.
 Then $\mathrm{build\_paths\_aux}$ contains a path extending any suffix with
 triplets drawn from $\mathit{ss}$.
\end{lemma}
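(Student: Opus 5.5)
We prove the statement by induction on the remaining suffix $R$ of the CTS that \texttt{build\_paths\_aux} still has to process. The accumulator is treated as a parameter that varies during the induction. The induction hypothesis is strengthened in the form of Lemma~\ref{lem:path-extension}. Suppose the current accumulator contains a partial path $p$ of length $j$ whose triplets are exactly $\mathrm{trip}_{j-1}(\mathit{ss}),\dots,\mathrm{trip}_0(\mathit{ss})$, stored head-first in reverse order as in Listing~\ref{lst:build-paths}. Suppose also that $R$ is the suffix of $S$ starting at tier $j$. Then $\mathrm{build\_paths\_aux}(R,\mathit{acc})$ contains the path whose first $|R|$ prepended elements are $\mathrm{trip}_{j+|R|-1}(\mathit{ss}),\dots,\mathrm{trip}_j(\mathit{ss})$, followed by $p$. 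The lemma as stated is the instance with $j$ arbitrary; the full-path statement is the instance $j=0$, $\mathit{acc}=[]$.

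\textbf{Base case} $R=[]$. Here \texttt{build\_paths\_aux} returns $\mathit{acc}$ unchanged, and $p\in\mathit{acc}$ by hypothesis, so there is nothing to prove.

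\textbf{Step case} $R = T_j :: R'$. We show that \texttt{ExtendPaths}$(\mathit{acc},T_j)$ contains $\mathrm{trip}_j(\mathit{ss}) :: p$, then apply the induction hypothesis with $R'$, the new accumulator and $j+1$. Membership needs three facts.
\begin{enumerate}
\item $\mathrm{trip}_j(\mathit{ss})$ is defined and lies in $T_j$. This follows from Definition~\ref{def:satisfying-set}, since $T_j$ is the $j$-th tier of $S$ (via \texttt{nth\_error} on the suffix).
\item \texttt{ExtendPaths} iterates over every $p\in\mathit{acc}$ and every $t\in T_j$. This is a routine \texttt{in\_flat\_map}/\texttt{filter} membership lemma.
\item $\mathrm{compatible}(\mathrm{trip}_j(\mathit{ss}),\mathrm{head}(p))$ holds. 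This is where the \emph{compatible} satisfying set is needed, because a merely weak satisfying set does not relate consecutive triplets. For $\mathit{ss}$ of the form $\mathrm{ss\_from\_path\_flat}$, or under $\mathrm{is\_satisfying\_set\_compat}$, adjacent windows agree on the overlap. Hence $\mathrm{compatible}(\mathrm{trip}_{j-1},\mathrm{trip}_j)$ holds in the orientation expected by the code.
\end{enumerate}

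The case $j=0$ must be separated, since there \texttt{ExtendPaths} takes the empty-accumulator branch and returns the singletons $[t]$ for $t\in T_0$. In that branch we only need $\mathrm{trip}_0(\mathit{ss})\in T_0$.

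Finally, a length argument shows that the constructed path has length $|S|$, so it survives the final length filter in \texttt{BuildPathsAll}. Each step adds exactly one element, which gives this by an easy auxiliary induction.

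The main obstacle is the arithmetic bookkeeping. We must relate $\mathrm{nth\_error}$ on the suffix $R$ to the global index $j+k$ in $S$, and $\mathrm{extract\_triplet\_local}(\mathit{ss},j)$ to positions $3j,3j+1,3j+2$. We also have to reconcile the reversed storage of paths with the forward direction of $\mathrm{compatible}$. We would first prove two helper lemmas: $\mathrm{nth\_error}(\mathrm{skipn}\,j\,S)\,k = \mathrm{nth\_error}\,S\,(j+k)$, and a lemma stating that the compatibility predicate on $\mathit{ss}$ yields $\mathrm{compatible}(\mathrm{trip}_j,\mathrm{trip}_{j+1})$ for all $j+1<|S|$. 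With these, each case of the induction reduces to \texttt{lia} and list-membership rewriting.
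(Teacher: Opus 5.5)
Your overall architecture matches the paper's. You do induction on the unprocessed suffix with the accumulator generalised, which is the content of Lemma~\ref{lem:path-extension}, i.e.\ \texttt{build\_paths\_aux\_contains\_expected\_path}. You add an inner membership argument over the accumulator, a separate first step through the empty-accumulator branch, the \texttt{nth\_error} arithmetic, and the final length filter. You are also right that the weak hypothesis of Definition~\ref{def:satisfying-set} is not enough: for $S=[\{(0,0,0)\},\{(1,1,1)\}]$ and $\mathit{ss}=[0,0,0,1,1,1]$ it holds, while $\mathrm{build\_paths\_all}(S)=[]$. The flaw is in item~3. There you get the direction of $\mathrm{compatible}$ wrong, and because that relation is directed, this is precisely the step that fails.

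The test in \texttt{ExtendPaths} is $\mathrm{Compatible}(t,\mathrm{head}(\mathit{path}))$ with $t$ taken from the new tier. So at tier $j$ you need $\mathrm{compatible}(\mathrm{trip}_j,\mathrm{trip}_{j-1})$, new triplet first; this is the ``backward'' check the paper keeps stressing. Your first sentence of item~3 states this correctly. What you then supply is the claim that adjacent windows agree on their overlap, i.e.\ your helper lemma $\mathrm{compatible}(\mathrm{trip}_j,\mathrm{trip}_{j+1})$, equivalently $\mathrm{compatible}(\mathrm{trip}_{j-1},\mathrm{trip}_j)$. That is the opposite orientation and does not imply what you need.

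Take $S=[\{(0,0,1)\},\{(0,1,1)\}]$ and $\mathit{ss}=[0,0,1,0,1,1]$. This is a consistent sliding window ($x_1=0$, $x_2=1$) and satisfies your helper lemma. Yet $\mathrm{compatible}((0,1,1),(0,0,1))$ is false, so $\mathrm{build\_paths\_all}(S)=[]$. Reversed storage does not rescue this, because the code always puts the freshly added triplet in the first argument. The reversal only matters when a stored path is turned back into $\mathit{ss}$ via $\mathrm{ss\_from\_path}(\mathrm{rev}\,p)$, which is the converse direction.

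The repair is to use $\mathrm{is\_satisfying\_set\_compat}$ in the orientation the paper's development uses: $\mathrm{compatible}(\mathrm{trip}_{j+1},\mathrm{trip}_j)$ for consecutive tiers. It cannot be the forward orientation, or Theorem~\ref{thm:compat} would fail on the example above with $S_1=S_2=S$. With the helper lemma restated that way, item~3 is immediate and the rest of your induction goes through unchanged.
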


\begin{lemma}[Structural Characterisation]
 \label{lem:struct-char}
 For any non-empty CTS $S$:
 \[
 \mathrm{build\_paths\_all}(S) = []
 \;\Longleftrightarrow\;
 \neg\,\exists \mathit{ss} \; . \; \mathrm{is\_satisfying\_set\_compat}(S, \mathit{ss}).
 \]
 Equivalently, $\mathrm{build\_paths\_all}(S) \neq []$ iff there exists at least
 one compatible satisfying set (hence at least one full-length path).
\end{lemma}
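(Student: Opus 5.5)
We prove the two directions of Lemma~\ref{lem:struct-char} separately, in the form ``$\mathrm{build\_paths\_all}(S) \neq []$ iff a compatible satisfying set exists''; the stated version with $=[]$ and $\neg\exists$ then follows by contraposition. Throughout we assume $S = [T_0,\dots,T_{m-1}]$ with $m \ge 1$.

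($\Leftarrow$) Suppose $\mathit{ss}$ satisfies $\mathrm{is\_satisfying\_set\_compat}(S,\mathit{ss})$.
\begin{enumerate}
 \item For each $i<m$, set $c_i = \mathrm{trip}_i(\mathit{ss})$. By Definition~\ref{def:satisfying-set} each $c_i$ is defined and lies in $T_i$.
 \item The compat variant of the predicate additionally gives $\mathrm{compatible}(c_i,c_{i+1})$.
 \item Lemma~\ref{lem:path-from-ss} (equivalently Lemma~\ref{lem:path-extension}, starting from the empty accumulator) then yields a full-length path in $\mathrm{build\_paths\_aux}$ whose triplets are exactly the $c_i$.
 \item This path has length $m$, so it survives the final length filter of \texttt{build\_paths\_all} (Listing~\ref{lst:build-paths}), and the output is non-empty.
\end{enumerate}

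($\Rightarrow$) Suppose some $p \in \mathrm{build\_paths\_all}(S)$.
\begin{enumerate}
 \item We first prove a characterisation lemma for \texttt{build\_paths\_aux}, by induction on the remaining tiers with the accumulator generalised. Every element of the result has the form $r \mathbin{+\!\!+} a$ with $a \in \mathit{acc}$, where $r$ is a reversed sequence of one triplet from each processed tier, each new triplet compatible with the previous head. The base case of \texttt{ExtendPaths} on an empty accumulator, which seeds singleton paths, is treated as its own case.
 \item The length filter then gives $|p| = m$. Reversing $p$ yields $[c_0,\dots,c_{m-1}]$ with $c_i \in T_i$ and consecutive triplets compatible.
 \item Set $\mathit{ss} = \mathrm{ss\_from\_path\_flat}(\mathrm{rev}\,p)$. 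By flattening, $\mathrm{extract\_triplet\_local}(\mathit{ss},i) = c_i$, which follows by induction on the path using $\mathrm{nth\_error}$ arithmetic on offsets $3i$.
 \item Hence $\mathit{ss}$ satisfies $S$, and compatibility of adjacent extracted triplets is inherited from the path, giving $\mathrm{is\_satisfying\_set\_compat}(S,\mathit{ss})$.
\end{enumerate}

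The main obstacle is the invariant in the first step of ($\Rightarrow$). \texttt{build\_paths\_aux} prepends to the head and handles an empty accumulator specially, so the invariant must be stated for reversed paths with an arbitrary accumulator and must track tier indices offset by the accumulator length. We would first try stating it as: each output path equals $\mathrm{rev}(q) \mathbin{+\!\!+} a$, where $q$ is pointwise in the remaining tiers, $q$ is chain-compatible, and the head of $a$ is compatible with $q$'s first element. The empty-accumulator case would be discharged as a separate lemma, mirroring how Lemma~\ref{lem:path-extension} is phrased.
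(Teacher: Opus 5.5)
Your two directions use the same ingredients as the paper's sketches: Lemma~\ref{lem:path-from-ss}/\ref{lem:path-extension} for one direction, and $\mathrm{ss\_from\_path}(\mathrm{rev}\,p)$ for the other. However, the invariant you propose for ($\Rightarrow$) is false as stated. \texttt{extend\_paths} reseeds singletons whenever its accumulator is empty. This happens not only at tier~$0$ but at every tier where all partial paths die. So $\mathrm{build\_paths\_aux}(R,\mathit{acc})$ can return paths that extend no element of $\mathit{acc}$. For example, take $R=[\{(1,1,1)\},\{(0,0,0)\}]$ and $\mathit{acc}=[[(0,0,0)]]$. The first extension is empty because $\mathrm{compatible}((1,1,1),(0,0,0))$ fails, and the second step returns $[[(0,0,0)]]$. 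That output is not of the form $\mathrm{rev}(q)\mathbin{+\!\!+}a$ for any $q$ running through both tiers. Any analogous statement for the empty-accumulator case fails for the same reason: $\mathrm{build\_paths\_aux}([\{(0,0,0)\},\{(1,1,1)\},\{(0,0,0)\}],[])=[[(0,0,0)]]$.

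Your induction will therefore get stuck exactly in the sub-case $\mathit{acc}\neq[]$, $\mathrm{extend\_paths}(\mathit{acc},T)=[]$. The missing observation is that reseeded paths are strictly shorter than $|R|$. State the invariant length-conditionally: every output either has length $<|R|$, or is $\mathrm{rev}(q)\mathbin{+\!\!+}a$ with $a\in\mathit{acc}$ and your listed properties (resp.\ $\mathrm{rev}(q)$ when $\mathit{acc}=[]$). Then the length filter of \texttt{build\_paths\_all} is what discards the reset paths, rather than merely a way of reading off $|p|=m$. This is the reset the paper calls harmless via \texttt{extend\_paths\_nil\_acc}.

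Second, fix the direction of compatibility in step~2 of ($\Leftarrow$). With $\mathrm{compatible}(a,b)$ meaning $a_2=b_1\wedge a_3=b_2$, \texttt{build\_paths\_all} tests $\mathrm{compatible}(c_{i+1},c_i)$, i.e.\ the new triplet against the old head. Your own ($\Rightarrow$) argument produces exactly this relation. So $\mathrm{is\_satisfying\_set\_compat}$ must be read the same way, not as $\mathrm{compatible}(c_i,c_{i+1})$. Under the forward reading the lemma is false. Take $S=[\{(1,0,0)\},\{(0,1,0)\},\{(0,0,1)\}]$ from Theorem~\ref{thm:clearing-not-conservative}: \texttt{build\_paths\_all} is non-empty there, yet $\mathrm{compatible}((1,0,0),(0,1,0))$ fails, so no forward-compatible satisfying set exists. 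That theorem in fact asserts the compat predicate holds for this $S$, which confirms the backward reading. With both repairs your argument goes through.
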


This equivalence closes the loop between the semantic notion of satisfying set
and the algorithmic notion of path existence for a \emph{single} structure.
Theorem~\ref{thm:compat} lifts the same equivalence to aligned intersections of
pairs.

\subsection{Translation Correctness}

The verified pipeline relies on two correctness bridges between the classical
CNF world and the triplet world.

\begin{lemma}[Soundness of CTF-to-CTS Translation]
 \label{lem:ctf-sound}
 For every formula $\phi$ and every path $p \in
 \mathrm{build\_paths\_all}(\mathrm{ctf\_to\_cts}(\phi))$, the flattened
 assignment $\mathrm{ss\_from\_path}(\mathrm{rev}(p))$ both satisfies $\phi$
 (in the formula-level sense) and is a compatible satisfying set for the
 cleared structure.
\end{lemma}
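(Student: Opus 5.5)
We have to prove Lemma~\ref{lem:ctf-sound}: every path $p$ produced by $\mathrm{build\_paths\_all}(\mathrm{ctf\_to\_cts}(\phi))$ yields an assignment $\mathrm{ss\_from\_path}(\mathrm{rev}(p))$ that satisfies $\phi$ and is a compatible satisfying set for the cleared structure. The plan is to factor the argument through the structural properties of $\mathrm{build\_paths\_all}$, and only afterwards connect to the formula. Write $S = \mathrm{ctf\_to\_cts}(\phi)$; it is the cleared complement structure.

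\textbf{Step 1: a path invariant for $\mathrm{build\_paths\_all}$.} We first prove the invariant by induction on the tier list, generalising over the accumulator of $\mathrm{build\_paths\_aux}$. Every element $q$ of the result satisfies three conditions:
\begin{enumerate}
 \item its length equals the number of tiers processed;
 \item the reversed list $\mathrm{rev}(q)$ has its $i$-th triplet in tier $i$ of $S$;
 \item consecutive elements of $\mathrm{rev}(q)$ are $\mathrm{compatible}$.
\end{enumerate}
Because $\mathrm{ExtendPaths}$ prepends a tier element $t$ to a path with $\mathrm{Compatible}(t,\mathrm{head})$, the inductive step is direct. The final filter on full length gives $|p| = |S|$. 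The reversal is the bookkeeping nuisance here. We handle it with a helper lemma relating $\mathrm{nth\_error}(\mathrm{rev}\,q)$ to $\mathrm{nth\_error}\,q$ at index $|q|-1-i$, and by stating the invariant directly on $\mathrm{rev}(q)$ so that appending at the end matches the prepend.

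\textbf{Step 2: compatible satisfying set for the cleared structure.} We then show that for a path $r$ of compatible triplets with $r_i \in S_i$, the flattened list $\mathrm{ss\_from\_path}(r)$ satisfies $\mathrm{extract\_triplet\_local}(\mathrm{ss},i) = r_i$. This holds by induction on $r$, since flattening appends three booleans per triplet and so index arithmetic $3i,3i+1,3i+2$ lands on $r_i$. Together with the compatibility chain from Step 1, this discharges both $\mathrm{is\_satisfying\_set}(S,\mathrm{ss})$ and its compatibility-aware variant. Membership in the cleared structure is exactly what Step 1 provides, because $S$ already is the cleared structure.

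\textbf{Step 3: formula-level satisfaction.} We invoke two facts about the construction.
\begin{enumerate}
 \item Clearing only removes triplets. This is a subset lemma for $\mathrm{ClearSingleTier}$, lifted through $\mathrm{ClearStructurePass}$ and the iteration by induction on the fuel $\mathrm{cts\_size}$. It gives $S_i \subseteq \{0,1\}^3 \setminus F_i$.
 \item By complementation (Eq.~\eqref{eq:complement}, cf.\ Lemma~\ref{lem:empty-tier}), $r_i \notin F_i$.
\end{enumerate}
Since $\mathrm{satisfies\_ctf}$ checks each window independently against the forbidden tier, $r_i \notin F_i$ for all $i$ yields formula-level satisfaction.

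\textbf{Main obstacle.} I expect the main obstacle to be Step 1, specifically stating an accumulator-general invariant for $\mathrm{build\_paths\_aux}$ that survives the empty-accumulator base case. In that case $\mathrm{ExtendPaths}$ seeds singleton paths with no compatibility check. I would first try an invariant of the form ``$\mathrm{acc}$ is empty, or every element is a reversed valid prefix of length $k$,'' with $k$ tied to the number of consumed tiers.
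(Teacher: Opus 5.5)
Your Steps 2 and 3 are fine. The problem is Step 1: the invariant you state is false, and so is the repair you sketch under ``Main obstacle''.

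The cause is the accumulator reset. $\mathrm{ExtendPaths}$ re-seeds with singletons whenever its input accumulator is empty. This happens not only at tier~$0$ but at any later tier reached after every partial path has died out. Step 1 is meant as a general lemma about $\mathrm{build\_paths\_aux}$ on arbitrary suffixes and accumulators, so it must survive such inputs. Take $S = [\{(0,0,0)\},\{(1,1,1)\},\{(1,1,1)\}]$:
\begin{enumerate}
 \item After tier~$0$ the accumulator is $[[(0,0,0)]]$.
 \item $(1,1,1)$ is not compatible with $(0,0,0)$, so after tier~$1$ the accumulator is empty.
 \item At tier~$2$ it is re-seeded to $[[(1,1,1)]]$.
\end{enumerate}
After three tiers the accumulator thus holds a path of length~$1$ whose first triplet is not in tier~$0$. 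This violates your conditions (1) and (2). Your fallback invariant (``$\mathit{acc}$ is empty, or every element is a reversed valid prefix of length $k$'') is not inductive either: it holds vacuously after tier~$1$ and fails after tier~$2$. So the final filter $|p| = |S|$ is not mere bookkeeping, as Step 1 treats it. It is exactly what discards these misaligned paths, and your invariant must leave room for it to do so.

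\textbf{Fix.} Anchor the invariant at the most recently consumed tier rather than at tier~$0$. After consuming $T_0,\dots,T_{k-1}$, require that every $q$ in the accumulator satisfies
\begin{itemize}
 \item $1 \le |q| \le k$;
 \item $\mathrm{rev}(q)_i \in T_{k-|q|+i}$ for each $i$;
 \item $\mathrm{compatible}(\mathrm{rev}(q)_{i+1},\mathrm{rev}(q)_i)$ for consecutive entries.
\end{itemize}
The re-seed case yields singletons from $T_k$, which satisfy this with $|q|=1$. The extension case prepends some $t \in T_k$ that is compatible with the head, which lies in $T_{k-1}$, and raises $|q|$ by one. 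At the end, $|p| = |S| = k$ forces the offset $k-|p|$ to be $0$, which gives $\mathrm{rev}(p)_i \in T_i$. Alternatively, prove separately that once the accumulator is empty after $k \ge 1$ tiers, every later path is too short to pass the filter.

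\textbf{Orientation.} The algorithm checks $\mathrm{compatible}(t,\mathrm{head})$, i.e.\ next against previous. Since $\mathrm{compatible}$ is directed, ``consecutive elements of $\mathrm{rev}(q)$ are compatible'' must be read in that backward sense to match $\mathrm{is\_satisfying\_set\_compat}$.

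With this invariant, your flattening/extraction lemma and your clearing-only-shrinks and complementation argument go through as you describe. Also record that both construction steps preserve the number of tiers, so that $|p| = |\phi|$ when you check $\mathrm{satisfies\_ctf}$.
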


\begin{theorem}[CNF-to-CTF Equivalence]
 \label{thm:cnf-ctf-equiv}
 For well-formed sliding-window CNF formulas, satisfiability in standard CNF
 semantics is equivalent to satisfiability of the translated CTF.
\end{theorem}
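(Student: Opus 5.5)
We prove the two implications separately. Throughout, we work with the clause-by-clause translation: a well-formed sliding-window CNF $\phi$ on $n$ variables with clauses ordered by window is mapped to a CTF whose $i$-th tier lives over $(x_i,x_{i+1},x_{i+2})$ and contains the forbidden triplets (negation patterns) of the clauses on that window. CTF-satisfiability is understood in the compatibility-aware sense, namely the existence of a path $[c_0,\dots,c_{n-3}]$ with $c_i$ avoiding all forbidden triplets of tier $i$ and $c_i,c_{i+1}$ compatible. The first step is a single-clause lemma. For a clause $C$ over $(x_i,x_{i+1},x_{i+2})$ with negation pattern $f_C$ (where $v_p=1$ iff the $p$-th literal is negated), an assignment $\sigma$ falsifies $C$ iff $(\sigma(x_i),\sigma(x_{i+1}),\sigma(x_{i+2}))=f_C$. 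This is a finite case split over the $2^3$ sign patterns and closes by \texttt{destruct} and computation.

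($\Rightarrow$) Let $\sigma$ satisfy $\phi$. Define $c_i=(\sigma(x_i),\sigma(x_{i+1}),\sigma(x_{i+2}))$ for $i=0,\dots,n-3$. By construction $c_i$ and $c_{i+1}$ agree on the overlapping positions, so they are compatible. By the single-clause lemma, $c_i$ differs from every forbidden triplet of tier $i$, so $c_i$ lies in the complement tier $\{0,1\}^3\setminus F_i$ of equation~\eqref{eq:complement}. Hence $\mathrm{ss\_from\_path\_flat}([c_0,\dots])$ is a compatible satisfying set for the raw CTS. Lemma~\ref{lem:raw-complete} (or Lemma~\ref{lem:struct-char}) then gives non-emptiness of \texttt{build\_paths\_all}, if that is the chosen formulation of CTF satisfiability.

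($\Leftarrow$) Given such a path, read off the assignment $\sigma(x_i)=$ the first component of $c_i$ for $i\le n-3$, and take the last two variables from the second and third components of $c_{n-3}$. This is the role of the greedy \texttt{sat\_assignment\_aux}. An induction on $i$ using compatibility shows $(\sigma(x_i),\sigma(x_{i+1}),\sigma(x_{i+2}))=c_i$ for every $i$. Since $c_i\notin F_i$, the single-clause lemma shows that every clause in window $i$ is satisfied, so $\sigma$ satisfies $\phi$.

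The main obstacle is the index bookkeeping. It has three sources: the flattened $\mathit{ss}$ advances by three per tier while the variable window advances by one; \texttt{build\_paths\_all} stores paths in reverse, which is why Lemma~\ref{lem:ctf-sound} uses $\mathrm{rev}(p)$; and several clauses may share one window, so a tier holds several forbidden triplets. We would isolate these in a standalone lemma stating that for a compatible path, $\mathrm{nth}(\sigma,i+k)$ equals the $k$-th component of $c_i$ for $k<3$. We would prove it first by induction on the path with \texttt{nth\_error} arithmetic discharged by \texttt{lia}, and route both directions through it. Note that the $\Leftarrow$ direction depends essentially on the compatibility-aware predicate: by Theorem~\ref{thm:gap}, the weak predicate $\mathrm{satisfies\_ctf}$ would fail to give a globally consistent assignment.
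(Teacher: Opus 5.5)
Your argument is correct for the setting you fix at the outset. That setting has one tier per window carrying all of that window's negation patterns, and it reads CTF satisfiability as a forward-compatible path through the complement tiers. But both your formal target and your route differ from the paper's.

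The paper works with the clause-by-clause translation of the one-clause-per-window fragment, so each tier has exactly one forbidden triplet. It takes CTF satisfiability to be the weak predicate $\mathrm{satisfies\_ctf}$. Its forward direction matches yours: it flattens the windows of a satisfying assignment via $\mathrm{ss\_from\_path\_flat}$. Its reverse direction, however, never reads an assignment off the CTF witness. Instead, \texttt{sat\_assignment\_aux} builds a satisfying assignment directly and greedily. This always succeeds because each window forbids only one of its eight patterns: given $x_i,x_{i+1}$, some value of $x_{i+2}$ avoids it.

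Two of your remarks are affected. Attributing your read-off step to \texttt{sat\_assignment\_aux} misdescribes that function. Your closing claim that ($\Leftarrow$) depends essentially on the compatibility-aware predicate holds only once windows carry several clauses. In the paper's fragment the weak predicate suffices, because both sides hold unconditionally. That greedy observation is exactly what you would need to add to get the paper's weak-predicate version from yours.

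What each route buys:
\begin{itemize}
\item The paper's route needs almost no index bookkeeping, but the equivalence it certifies is essentially $\mathsf{True}\Leftrightarrow\mathsf{True}$.
\item Your route covers several clauses per window. There formulas can be unsatisfiable, and the weak-predicate version is false (see the grouped counterexample in the Threats to Validity discussion). The price is the compatibility-aware predicate and the bookkeeping lemma you describe.
\end{itemize}

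One caution on your optional bridge to \texttt{build\_paths\_all}. The paper's implementation tests $\mathrm{compatible}(\mathit{next},\mathit{prev})$. For example, the path $[(1,0,0),(0,1,0),(0,0,1)]$ of Theorem~\ref{thm:clearing-not-conservative} is accepted even though it is not forward-compatible. So you cannot apply Lemmas~\ref{lem:raw-complete} and~\ref{lem:struct-char} to your sliding-window path without first reconciling the direction. Your self-contained path formulation is unaffected.
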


Lemma~\ref{lem:ctf-sound} bridges formula-level satisfiability and
structure-level compatibility: any path produced by
$\mathrm{build\_paths\_all}$ on the translated CTS yields a satisfying
assignment for the original formula. Theorem~\ref{thm:cnf-ctf-equiv} certifies
that the entire translation pipeline (CNF$\to$CTF$\to$CTS$\to$paths) is
semantically faithful for the verified fragment.

\begin{lemma}[Per-Tier Complementation]
 \label{lem:empty-tier}
 A tier of the raw CTS is empty iff the corresponding formula tier contains
 all 8 triplets (i.e., the forbidden set covers the entire space).
\end{lemma}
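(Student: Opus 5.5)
The plan is to unfold the construction of the raw CTS tier by tier and reduce the claim to a finite statement about one tier. By \eqref{eq:complement}, tier $i$ of the raw structure is $S_i = \{0,1\}^3 \setminus F_i$. In Rocq this is realised as a filter over the fixed enumeration of all eight triplets, keeping those not in $F_i$ (decided by a boolean membership test). Concretely, I would first state and prove a helper lemma: membership in the built tier is equivalent to membership in the enumeration and non-membership in $F_i$. This goes by unfolding the definition and applying \texttt{filter\_In} together with the reflection lemma for the boolean membership test. I would also use \texttt{nth\_error\_map}, or induction over the CTF list, to identify the $i$-th tier of \texttt{build\_cts\_from\_ctf} with the complement of the $i$-th formula tier.

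With the helper lemma in hand, the two directions are short. For ($\Leftarrow$), suppose $F_i$ contains all eight triplets. Then every triplet of the enumeration fails the filter predicate, so no element survives and the filtered list is empty. In Rocq I would use \texttt{filter} together with a universal fact $\forall t,\ t\in F_i$. Rewriting via the helper shows that no $t$ is in the built tier, and a list with no members is \texttt{[]} (destruct the list and derive a contradiction with its head).

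For ($\Rightarrow$), suppose the built tier is empty and take an arbitrary $t\in\{0,1\}^3$. The triplet $t$ lies in the enumeration, since the eight-element list is exhaustive; this is proved by destructing the three booleans and using \texttt{simpl; tauto}. If $t\notin F_i$, the helper lemma would place $t$ in the empty tier, which is a contradiction. Hence $t\in F_i$, using decidability of membership to avoid classical reasoning.

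The expected main obstacle is not the combinatorics but the bookkeeping. First, boolean equality on triplets must be connected to propositional \texttt{In}, which needs a reflection lemma for \texttt{triplet\_eqb}. Second, ``contains all 8 triplets'' has to be phrased robustly. Possible forms are $\forall t,\ \mathrm{In}\ t\ F_i$, or inclusion of the enumeration in $F_i$, and the two should be shown interchangeable. A phrasing in terms of length would not work, because $F_i$ may contain duplicates. I would fix the universal-membership form and prove the exhaustiveness of the enumeration once, by brute-force case analysis on the eight triplets.
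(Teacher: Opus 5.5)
Your proposal is correct and follows essentially the paper's route. The paper treats this lemma as an immediate consequence of the per-tier complement construction \eqref{eq:complement}: the raw tier is the eight-element enumeration filtered by non-membership in $F_i$, so it is empty exactly when every enumerated triplet, and hence by exhaustiveness every triplet, lies in $F_i$. Phrasing ``contains all 8 triplets'' as universal membership rather than as a length condition is the right choice.
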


This per-tier complementation is the mechanism behind the counterexample of
Section~\ref{sec:boundary} (combined with arc-consistency filtering in clearing).

\begin{lemma}[Raw Construction Completeness]
 \label{lem:raw-complete}
 For any non-empty formula $\phi$, if $\mathit{ss}$ is a compatible satisfying
 set of the raw CTS $\mathrm{build\_cts\_from\_ctf}(\phi)$, then
 $\mathrm{build\_paths\_all}$ on that raw CTS is non-empty.
\end{lemma}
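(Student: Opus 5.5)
The plan is to reduce Lemma~\ref{lem:raw-complete} to the structural characterisation for a single structure, Lemma~\ref{lem:struct-char}. Set $S = \mathrm{build\_cts\_from\_ctf}(\phi)$. Lemma~\ref{lem:struct-char} says that for any non-empty CTS, $\mathrm{build\_paths\_all}(S) \neq []$ holds exactly when some $\mathit{ss}$ satisfies $\mathrm{is\_satisfying\_set\_compat}(S, \mathit{ss})$. Our hypothesis supplies such an $\mathit{ss}$ directly. So the only remaining obligation is that $S$ is a non-empty list of tiers.

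The first step is to show that $\mathrm{build\_cts\_from\_ctf}$ preserves length. The raw construction maps each formula tier to its complement, as in equation~\eqref{eq:complement}. Hence $|\mathrm{build\_cts\_from\_ctf}(\phi)| = |\phi|$, by a straightforward induction on $\phi$ (or by rewriting with \texttt{map\_length} if the definition is a \texttt{map}). Since $\phi$ is non-empty, $S \neq []$. This is non-emptiness of the list of tiers, not of each individual tier. Lemma~\ref{lem:empty-tier} warns that an individual tier can be empty. But if some tier were empty, no compatible satisfying set could exist, so the hypothesis already excludes that case, and we never need it.

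The second step is to instantiate Lemma~\ref{lem:struct-char} at $S$ and take the direction from the existence of a compatible satisfying set to $\mathrm{build\_paths\_all}(S) \neq []$. Alternatively, one can argue directly from Lemma~\ref{lem:path-from-ss}, which says $\mathrm{build\_paths\_aux}$ contains a path built from the triplets of $\mathit{ss}$. Starting from the empty accumulator, that path has full length $|S|$. It therefore survives the final length filter in $\mathrm{build\_paths\_all}$ (Listing~\ref{lst:build-paths}), so the output list is non-empty.

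The main obstacle is matching definitions rather than any mathematics. Two mismatches could arise. First, the compatibility-aware predicate $\mathrm{is\_satisfying\_set\_compat}$ might be stated over the flattened list, using $\mathrm{extract\_triplet\_local}$, while Lemma~\ref{lem:struct-char} expects the same predicate in a specific form. Second, $\mathrm{build\_paths\_all}$ reverses paths, so the extracted triplets appear in reverse order. If the statement of Lemma~\ref{lem:struct-char} applies verbatim, both issues disappear. If it does not, I would first try to unfold $\mathrm{build\_cts\_from\_ctf}$ and show that the hypothesis on $S$ is literally an instance of the lemma's premise. Failing that, I would fall back on the length-filter argument above and discharge the remaining arithmetic on \texttt{nth\_error} indices with \texttt{lia}.
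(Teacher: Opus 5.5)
Your proposal is correct and follows essentially the same route as the paper, which gives no separate proof but presents the lemma as an instance of the single-structure characterisation (Lemma~\ref{lem:struct-char}, equivalently the path-existence Lemma~\ref{lem:path-from-ss}) at $S=\mathrm{build\_cts\_from\_ctf}(\phi)$. As in your argument, the only extra obligation is that $S$ is non-empty, which follows from $\phi \neq []$ because the raw construction is tier-wise complementation and so preserves length; your remark that empty individual tiers are already excluded by the hypothesis is also right.
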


Thus completeness holds for the raw (non-cleared) construction: any compatible
satisfying set guarantees a non-empty path set. The non-emptiness precondition
excludes the degenerate case of an empty formula.

\begin{lemma}[Cleared CTF Strong Completeness]
 \label{lem:cleared-strong}
 For a non-empty cleared CTF, compatibility-aware satisfiability\allowbreak implies
 non-emptiness\allowbreak of\allowbreak $\mathrm{build\_paths\_all}$.
\end{lemma}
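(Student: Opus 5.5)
The plan is to reduce the statement to the structural characterisation of Lemma~\ref{lem:struct-char}, applied to the cleared structure. Write $C = \mathrm{clear\_structure}(\mathrm{build\_cts\_from\_ctf}(\phi))$ for the cleared CTF. The hypothesis gives a list $\mathit{ss}$ with $\mathrm{is\_satisfying\_set\_compat}(C, \mathit{ss})$. Clearing maps tiers to tiers one-for-one, so $|C|$ equals the number of tiers of $\phi$, and non-emptiness of the formula gives $C \neq []$. Lemma~\ref{lem:struct-char}, in the direction ``compatible satisfying set $\Rightarrow$ $\mathrm{build\_paths\_all}(C)\neq[]$'', then closes the goal directly.

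\textbf{Non-emptiness of $C$.} I would first prove a small length lemma: $\mathrm{ClearStructurePass}$ is a tier-wise map, so $|\mathrm{ClearStructurePass}(S)| = |S|$. Iterating $\mathrm{cts\_size}(S)$ times preserves length by induction on the iteration count. Complementation in $\mathrm{build\_cts\_from\_ctf}$ is also a map, so its length equals that of $\phi$. Together these give $C \neq []$ from $\phi \neq []$.

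\textbf{Path existence.} If Lemma~\ref{lem:struct-char} is available only for raw structures, I would prove the needed direction directly. Using Lemma~\ref{lem:path-from-ss} / Lemma~\ref{lem:path-extension}, I instantiate the accumulator with the empty path list. For each $i$, the triplet $\mathrm{trip}_i(\mathit{ss})$ lies in tier $i$ of $C$, and the compatibility clause of $\mathrm{is\_satisfying\_set\_compat}$ makes consecutive triplets compatible. Then $\mathrm{build\_paths\_aux}(C, [])$ contains the path $\mathrm{rev}[\mathrm{trip}_0(\mathit{ss}),\dots,\mathrm{trip}_{m-1}(\mathit{ss})]$. This path has length $m = |C|$, so it survives the final length filter of $\mathrm{BuildPathsAll}$, and the output list is non-empty.

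\textbf{Main obstacle.} I expect the hardest step to be the interface between the two definitions. First, $\mathrm{is\_satisfying\_set\_compat}$ reads triplets at offsets $3i$. Second, $\mathrm{build\_paths\_aux}$ prepends to the head and checks $\mathrm{Compatible}(t, \mathrm{head}(path))$, with the argument order reversed relative to the assignment order. I would handle this with a generalised invariant stated for an arbitrary suffix $R = \mathrm{skipn}\;j\;C$ and accumulator $[\mathrm{rev}(\mathrm{trip}_0,\dots,\mathrm{trip}_{j-1})]$, proved by induction on $R$. The key facts are $\mathrm{nth\_error}$ arithmetic and the observation that cleared tiers are sublists of raw tiers, which the membership argument may require. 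Since the compatible satisfying set is stated directly for $C$, no appeal to the clearing fixed-point theorem should be needed. The only special case is the first tier, where the accumulator is empty and every triplet starts a path.
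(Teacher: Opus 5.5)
Your proposal is correct and follows essentially the route the paper intends, though the paper states this lemma without an explicit proof. The lemma is the ``compatible satisfying set $\Rightarrow$ $\mathrm{build\_paths\_all} \neq []$'' direction of Lemma~\ref{lem:struct-char}, instantiated at the cleared structure. The non-emptiness premise is needed because $[]$ is vacuously satisfiable while $\mathrm{build\_paths\_all}([]) = []$, and you discharge it by length preservation of complementation and clearing. Since Lemma~\ref{lem:struct-char} holds for any non-empty CTS, you do not need the fallback re-derivation via Lemma~\ref{lem:path-from-ss}.
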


\begin{theorem}[Relabelling preserves satisfiability]
 \label{thm:relabel-sat-short}
 Let $f$ be a formula in which every clause uses only variables
 $\{v_1,v_2,v_3\}$. Then
 \[
 \begin{array}{@{}l}
 (\exists a,\; \mathrm{eval\_cnf}(a,f) = \mathsf{true})\\
 \quad \Longleftrightarrow
 (\exists a',\; \mathrm{eval\_cnf}(a',\mathrm{relabel\_cnf}(v_1,v_2,v_3,f)) = \mathsf{true}).
 \end{array}
 \]
\end{theorem}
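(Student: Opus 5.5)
The natural approach is to construct an explicit bijection on variable indices and transport assignments along it. Since every clause of $f$ mentions only the variables $v_1,v_2,v_3$, I take $\mathrm{relabel\_cnf}(v_1,v_2,v_3,\cdot)$ to be the literal-wise renaming $v_1\mapsto 0$, $v_2\mapsto 1$, $v_3\mapsto 2$. Every other index is untouched, but no clause contains one, so this detail is harmless. If the actual definition maps into a different fixed triple of positions, the argument is unchanged. The key fact I will establish first is a \emph{transport lemma}. For any assignment $a$, define $a'(0)=a(v_1)$, $a'(1)=a(v_2)$, $a'(2)=a(v_3)$, with $a'$ arbitrary (say $\mathsf{false}$) elsewhere. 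Then $\mathrm{eval\_cnf}(a',\mathrm{relabel\_cnf}(v_1,v_2,v_3,f))=\mathrm{eval\_cnf}(a,f)$.

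I would prove the transport lemma by induction on the clause list $f$, reducing it to the same statement for a single clause and then for a single literal. At literal level, the hypothesis says the literal's index $x$ is one of $v_1,v_2,v_3$. A three-way case split then shows that the relabelled index $r(x)$ satisfies $a'(r(x))=a(x)$, and polarity is preserved by relabelling. The one subtlety is when the $v_i$ are not pairwise distinct, because then the case split must pick the \emph{first} matching branch of the relabelling function. If $v_1=v_2$, both map to $0$, so $a'(0)=a(v_1)=a(v_2)$ is still consistent. I expect this to be discharged by destructing the equality tests \texttt{Nat.eqb} in the same order as the definition of relabelling.

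The forward direction ($\Rightarrow$) is then immediate: given $a$ with $\mathrm{eval\_cnf}(a,f)=\mathsf{true}$, the transport lemma hands us $a'$. For the converse ($\Leftarrow$), given $a'$ satisfying the relabelled formula, I define $a$ by $a(x)=a'(r(x))$ for $x\in\{v_1,v_2,v_3\}$ and $a(x)=\mathsf{false}$ otherwise. Ill-definedness cannot arise, since $r$ is a function of $x$, and defining $a(x)=a'(r(x))$ uniformly for all $x$ avoids it entirely. With this choice the literal-level identity $a(x)=a'(r(x))$ holds outright, so the same induction gives $\mathrm{eval\_cnf}(a,f)=\mathrm{eval\_cnf}(a',\mathrm{relabel\_cnf}(v_1,v_2,v_3,f))$.

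This suggests a cleaner organisation: prove one lemma, $\mathrm{eval\_cnf}(a'\circ r,f)=\mathrm{eval\_cnf}(a',\mathrm{relabel}(f))$, valid for every $a'$ whenever $f$ uses only $\{v_1,v_2,v_3\}$. Then ($\Leftarrow$) follows by instantiating with $a'\circ r$. For ($\Rightarrow$) I would still need $a$ and $a'\circ r$ to agree on $\{v_1,v_2,v_3\}$, together with a locality lemma: $\mathrm{eval\_cnf}$ depends only on the variables occurring in $f$. The main obstacle is therefore this locality lemma, specifically threading the "uses only $\{v_1,v_2,v_3\}$" hypothesis through the nested list structure (clauses as lists of literals) via \texttt{Forall} and \texttt{In}. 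Everything else is routine case analysis on the relabelling function.
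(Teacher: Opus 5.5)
Your proposal is correct and takes essentially the paper's approach: one direction projects a satisfying $a$ onto the first three positions, $a'=(a(v_1),a(v_2),a(v_3))$; the other places $a'_0,a'_1,a'_2$ back at positions $v_1,v_2,v_3$, and the two evaluations agree clause by clause and literal by literal. Your uniform definition $a=a'\circ r$ is a slightly more careful version of the paper's ``placing'' step, since it stays well defined when some of the $v_i$ coincide, a case the paper's sketch does not discuss.
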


The forward direction builds $a$ from $a'$ by placing the three bits
$a'_0,a'_1,a'_2$ at positions $v_1,v_2,v_3$. The backward direction projects to
the first three positions.

\begin{lemma}[Dense Groups Soundness]
 \label{lem:dense-groups}
 For dense sliding-window groups, the forward CTF-level link \emph{soundness}
 holds: every assignment that satisfies the grouped CNF yields a satisfying set
 for the corresponding CTF.
\end{lemma}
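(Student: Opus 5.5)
We prove Lemma~\ref{lem:dense-groups} by fixing a satisfying assignment $a$ of the grouped CNF and building the flattened sliding-window list directly from it. Concretely, we take
\[
\mathit{ss} = \mathrm{ss\_from\_path\_flat}([(a_0,a_1,a_2),(a_1,a_2,a_3),\dots,(a_{m-1},a_m,a_{m+1})]),
\]
where $m$ is the number of groups (windows). We then show that $\mathit{ss}$ satisfies the corresponding CTF in the sense of $\mathrm{satisfies\_ctf}$. The argument is a clause-wise reduction, carried out by induction on the list of groups, with the tier index tracked as a counter.

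The first step is a characterisation of how a single clause sits in its tier. For a clause $C$ over window $(x_i,x_{i+1},x_{i+2})$, its negation pattern $\nu(C)\in\{0,1\}^3$ is the unique triplet falsifying $C$. We prove, by case analysis on the eight Boolean triples and the literal polarities, that $\mathrm{eval\_clause}(a,C)=\mathsf{true}$ iff $(a_i,a_{i+1},a_{i+2})\neq \nu(C)$.

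Next, a dense group $g_i$ is a list of clauses all over window $i$, and its CTF tier is the set $\{\nu(C)\mid C\in g_i\}$ of forbidden triplets. Since $a$ satisfies every clause of $g_i$, the window triplet avoids every forbidden pattern. Here we use the fact that $\mathrm{eval\_cnf}$ over a concatenation of groups is the conjunction of the groups' evaluations; this is a small list lemma obtained by induction on the groups.

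The third step matches the window triplet to what the CTF predicate inspects. By the flattening property stated earlier, $\mathrm{extract\_triplet\_local}(\mathit{ss},i)$ returns exactly $(a_i,a_{i+1},a_{i+2})$. We prove this by induction on $i$ using $\mathrm{nth\_error}$ on $\mathit{ss}[3i],\mathit{ss}[3i+1],\mathit{ss}[3i+2]$, together with the length fact $|\mathit{ss}|=3m$. Combining the three steps, every tier check in $\mathrm{satisfies\_ctf}$ succeeds, which gives soundness.

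The main obstacle is the index bookkeeping in the third step. We must keep the group index, the tier index in the CTF, and the list offset $3i$ aligned. We must also handle the boundary requirement that the assignment has enough variables, meaning $n\ge m+2$, so that every triplet is defined. We would first try a generalised induction statement for an arbitrary starting offset $k$, so that the inductive step only shifts $k$ and no accumulator shapes need reconciling. Finally, we note that only soundness is claimed. We do not need compatibility between different groups on the same window, which the weak predicate ignores, and we make no completeness claim, in line with Table~\ref{tab:trust-boundary}.
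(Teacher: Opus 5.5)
Your proposal is correct and takes essentially the same route as the paper, which obtains this lemma from a per-group equivalence (an assignment satisfies every clause of a window's group iff the window triplet avoids that tier's negation patterns) lifted by induction over the concatenated groups to the flattened list whose $i$-th chunk is $(a_i,a_{i+1},a_{i+2})$, the chunk the weak predicate $\mathrm{satisfies\_ctf}$ inspects. One hypothesis should be stated explicitly: your clause characterisation in the first step assumes each clause lists its literals in window order $(x_i,x_{i+1},x_{i+2})$, which is part of well-formedness.
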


\subsection{Counterexamples and Formal Limits}\label{sec:counterexamples}

The positive results above are complemented by two formal counterexamples that 
mark the exact boundaries of what the framework can guarantee.

\paragraph{Semantic gap.}
The weak predicate $\mathrm{satisfies\_ctf}$ checks each 3-bit window
independently and does \emph{not} enforce agreement on overlapping bits between
consecutive windows. Consequently it admits locally consistent assignments that
are not globally realisable as compatible paths.

\begin{theorem}[Semantic Gap]
 \label{thm:gap}
 There exists a CTF $\phi$ and an assignment $\mathit{ss}$ such that
 $\mathit{ss}$ satisfies $\phi$ in the weak formula-level sense, yet
 $\mathrm{build\_paths\_all}(\mathrm{ctf\_to\_cts}(\phi)) = []$.
\end{theorem}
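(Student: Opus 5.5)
The statement is existential, so I will prove it by exhibiting an explicit witness and checking both conjuncts by computation. The plan is to use two tiers whose allowed triplets cannot be chained under the compatibility relation $a_2 = b_1,\ a_3 = b_2$. A weak assignment can still satisfy each tier separately, because the weak predicate $\mathrm{satisfies\_ctf}$ reads every 3-bit window $(\mathit{ss}[3i],\mathit{ss}[3i+1],\mathit{ss}[3i+2])$ independently and never relates neighbouring windows.

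\textbf{Witness.} I take a two-tier CTF $\phi = [F_0, F_1]$ chosen via per-tier complementation (Lemma~\ref{lem:empty-tier}):
\begin{itemize}
\item $F_0$ forbids every triplet except $(0,0,0)$, so the raw tier is $S_0 = \{(0,0,0)\}$;
\item $F_1$ forbids every triplet except $(1,1,1)$, so $S_1 = \{(1,1,1)\}$.
\end{itemize}
Both raw tiers are non-empty, so the witness is not a degenerate empty-tier case. For the assignment I take $\mathit{ss} = [0;0;0;1;1;1]$. Its window $0$ is $(0,0,0) \notin F_0$ and its window $1$ is $(1,1,1) \notin F_1$, so $\mathit{ss}$ satisfies $\phi$ in the weak sense. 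This conjunct should close by unfolding $\mathrm{satisfies\_ctf}$ and running \texttt{reflexivity} or \texttt{vm\_compute}.

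\textbf{Path emptiness.} Next I show $\mathrm{build\_paths\_all}(\mathrm{ctf\_to\_cts}(\phi)) = []$.
\begin{enumerate}
\item The compatibility check between $(0,0,0)$ and $(1,1,1)$ requires $0 = 1$, which is false.
\item Hence in the first clearing pass $(0,0,0)$ has no compatible partner in the next tier and $(1,1,1)$ has none in the previous tier. Both tiers become empty, and further passes leave them empty.
\item With an empty tier, \texttt{ExtendPaths} produces no paths. The one subtlety is that \texttt{ExtendPaths} restarts from $[[t] \mid t \in \mathit{tier}]$ when the accumulator is empty. Even so, every surviving path has length $< 2$, so the final length filter $|p| = |\mathit{cts}|$ removes all of them.
\end{enumerate}
As a safeguard I will check the conclusion independently of the clearing dynamics. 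By Lemma~\ref{lem:ctf-sound}, any path in the output would yield a compatible satisfying set for the cleared structure, hence a compatible pair of triplets drawn from $S_0 \times S_1$. No such pair exists, so the output must be $[]$. This gives a conceptual proof in case direct computation is blocked.

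\textbf{Main obstacle.} I expect the difficulty to be mechanical rather than mathematical: making $\mathrm{ctf\_to\_cts}$ reduce under \texttt{vm\_compute}. The clearing loop iterates $\mathrm{cts\_size}$ times, and the complement construction may be defined with boolean list filters. I will first try plain \texttt{vm\_compute}, since the instance is tiny (at most 16 triplets). If opaque lemmas or well-founded recursion block reduction, I will fall back on the argument above: apply Lemma~\ref{lem:ctf-sound} and rule out the two-element compatible sequence by case analysis on tier membership, which is a finite \texttt{destruct} over the singleton tiers.
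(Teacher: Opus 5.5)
Your proposal is correct and takes the same route as the paper. You exhibit a two-tier CTF whose only allowed triplets cannot be chained, check the weak predicate window by window on non-overlapping chunks, and note that clearing empties both tiers, so \texttt{build\_paths\_all} returns $[]$. The only difference is the witness. The paper uses survivors $(0,1,1)$ and $(1,0,1)$ with $\mathit{ss}=[0,1,1,1,0,1]$. Your pair $(0,0,0)$, $(1,1,1)$, which the paper itself reuses for its grouped-window counterexample, is incompatible in both directions, so your argument does not depend on the direction in which $\mathrm{compatible}$ is tested. For the paper's pair, $\mathrm{compatible}((1,0,1),(0,1,1))$ does hold, so under the clearing pseudocode its tiers empty only after a second pass.
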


\begin{proof}[Counterexample]
 Take a 2-tier CTF where tier~0 forbids all triplets except $(0,1,1)$ and
 tier~1 forbids all except $(1,0,1)$. These two survivors are incompatible
 (the overlap bits $1 \neq 0$ do not match), so clearing removes both and
 yields empty tiers. Yet the assignment $\mathit{ss} = [0,1,1,1,0,1]$ avoids
 each tier's local forbidden set, giving
 $\mathrm{satisfies\_ctf}(\mathit{ss}, \phi) = \text{true}$. This formally
 separates weak formula-level satisfiability from structure-level path
 existence.
\end{proof}

\paragraph{Clearing is not conservative.}
Clearing uses the directed predicate $\mathrm{can\_adjoin}$, which requires a
forward-compatible successor, whereas $\mathrm{build\_paths\_all}$ checks only
backward compatibility. Hence a triplet may belong to a valid path yet still be
removed.

\begin{theorem}[Clearing removes valid paths]
 \label{thm:clearing-not-conservative}
 There exists a CTS $S$ and a sequence $\mathit{ss}$ such that
 \begin{enumerate}
 \item $\mathrm{is\_satisfying\_set\_compat}(S, \mathit{ss}) = \mathsf{true}$,
 \item $\mathrm{build\_paths\_all}(S) \neq \varnothing$,
 \item $\mathrm{build\_paths\_all}(\mathrm{clear\_structure}(S)) = \varnothing$.
 \end{enumerate}
\end{theorem}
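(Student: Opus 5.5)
The plan is to give an explicit two-tier witness and discharge all three conditions by computation (\texttt{reflexivity}/\texttt{vm\_compute} in Rocq), since every object involved is a finite list of Boolean triples. The witness exploits the mismatch just described. In Listing~\ref{lst:clearing}, a triplet $t$ in tier $i$ must satisfy both $\mathrm{Compatible}(t,p)$ for some $p$ in tier $i-1$ and $\mathrm{Compatible}(t,n)$ for some $n$ in tier $i+1$, so one of these two checks is oriented the ``wrong'' way. By contrast, \texttt{ExtendPaths} in Listing~\ref{lst:build-paths} only tests $\mathrm{Compatible}(t,\mathrm{head}(\mathit{path}))$ for the new triplet $t$ against the previous tier's triplet. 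So a single edge between two tiers that is compatible in one direction but not the other should survive path building yet be killed by clearing.

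\textbf{Step 1 (choose the witness).} Take $S=[\{a\},\{b\}]$ with $a=(0,1,0)$ and $b=(0,0,1)$. Then $\mathrm{compatible}(b,a)$ holds, because $b_2=0=a_1$ and $b_3=1=a_2$. On the other hand $\mathrm{compatible}(a,b)$ fails, because $a_2=1\neq 0=b_1$.

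\textbf{Step 2 (path building succeeds).} After tier~$0$ the partial paths are $[[a]]$. At tier~$1$ the test $\mathrm{Compatible}(b,a)$ succeeds, giving the path $[b;a]$ of full length $2$. Hence $\mathrm{build\_paths\_all}(S)\neq\varnothing$, which is item~2.

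\textbf{Step 3 (clearing empties the structure).} Tier~$0$ has \texttt{AllTriplets} as predecessor, which is harmless by Lemma~\ref{lem:degree}. It also needs some $n\in\{b\}$ with $\mathrm{Compatible}(a,n)$, and this fails, so $a$ is removed in the first pass. Tier~$1$ has \texttt{AllTriplets} as successor and needs $\mathrm{Compatible}(b,a)$, which holds in the first pass. In the second pass tier~$0$ is already empty, so $b$ is removed as well. Since $\mathrm{cts\_size}(S)=2$, \texttt{clear\_structure} runs exactly two passes. It therefore returns two empty tiers, on which \texttt{build\_paths\_all} yields $\varnothing$, which is item~3.

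\textbf{Step 4 (the compatible satisfying set).} For item~1 take $\mathit{ss}=\mathrm{ss\_from\_path}(\mathrm{rev}([b;a]))$, or equally its flattened form, and evaluate $\mathrm{is\_satisfying\_set\_compat}(S,\mathit{ss})$. Alternatively, item~1 follows abstractly from item~2 via Lemma~\ref{lem:struct-char}, whose reverse direction extracts a compatible satisfying set from any path in $\mathrm{build\_paths\_all}(S)$. I would use that lemma as the fallback if direct computation is awkward.

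\textbf{Main obstacle.} The main risk is orientation. I have inferred the direction of the clearing predicate $\mathrm{can\_adjoin}$, of \texttt{ExtendPaths}, and of $\mathrm{is\_satisfying\_set\_compat}$ (which is stated over the flattened list) from the pseudocode, and the actual Rocq definitions may use the opposite convention at some point. The first check is therefore to evaluate the three conditions on $(a,b)$. If the directions turn out reversed, I would swap the roles of $a$ and $b$, that is, use $S=[\{b\},\{a\}]$. The asymmetric pair $(0,1,0),(0,0,1)$ should still separate the two checks, because exactly one of the two compatibility directions holds for it. Everything else is closed by computation.
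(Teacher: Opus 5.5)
Your proposal is correct and follows essentially the paper's argument: an explicit singleton-tier counterexample, checked by computation, which exploits the fact that clearing demands a forward-compatible successor while \texttt{build\_paths\_all} only checks $\mathrm{compatible}(t,\mathrm{head})$. Your witness $[\{(0,1,0)\},\{(0,0,1)\}]$ is exactly the last two tiers of the paper's three-tier structure $[\{(1,0,0)\},\{(0,1,0)\},\{(0,0,1)\}]$, and clearing destroys it for the same reason: $(0,1,0)$ has no forward-compatible successor in the next tier, since $(0,0,1)$ fails $1\neq 0$ on the overlap.
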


\begin{proof}[Counterexample]
Take a 3-tier CTS
\[
 S = [\{(1,0,0)\},\; \{(0,1,0)\},\; \{(0,0,1)\}]
\]
with the unique compatible path $\pi = [(1,0,0), (0,1,0), (0,0,1)]$.
The middle triplet $(0,1,0)$ has no forward-compatible successor in tier~2
(the only candidate $(0,0,1)$ is incompatible because the overlapping bit
$1 \neq 0$). Hence clearing removes $(0,1,0)$, which in turn destroys the
only full-length path. Yet $\pi$ satisfies every tier of $S$ compatibly, so
$\mathrm{build\_paths\_all}(S) \neq \varnothing$ while
$\mathrm{build\_paths\_all}(\mathrm{clear\_structure}(S)) = \varnothing$.
\end{proof}

\begin{remark}[Why these limits do not invalidate the main results]
 Clearing is nevertheless \emph{sound} (it never creates new paths) and
 \emph{monotone} in the reverse direction: any path that survives clearing
 was already present in the original structure (Lemma~\ref{lem:clearing-monotone}).
 Completeness is recovered by Theorem~\ref{thm:compat}, which bypasses clearing
 entirely and works directly on the aligned intersection. The semantic gap is
 closed for the verified fragment by the strong predicate
 $\mathrm{satisfies\_ctf\_strong}$ (Section~\ref{sec:boundary}).
\end{remark}

\paragraph{Positive properties of clearing.}
Despite the negative results above, clearing satisfies two fundamental
properties that justify its use as an optimisation.

\begin{theorem}[Clearing Termination]
 \label{thm:clear-term}
 The clearing procedure terminates. The measure
 $\mathrm{cts\_size}(S) = \sum_i |T_i|$ (the total number of triplets in
 all tiers) strictly decreases whenever triplets are eliminated and is bounded
 below by~$0$, so only finitely many eliminations are possible.
\end{theorem}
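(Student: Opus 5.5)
The proof rests on two facts about a single pass $\mathrm{ClearStructurePass}$ (Listing~\ref{lst:clearing}). Both are obtained by unfolding $\mathrm{ClearSingleTier}$, whose result is a filter of the current tier:
\begin{enumerate}
 \item \textbf{Filter.} Each output tier is a sublist of the corresponding input tier, and the number of tiers is unchanged. Hence $\mathrm{cts\_size}(\mathrm{pass}(S)) \le \mathrm{cts\_size}(S)$.
 \item \textbf{Dichotomy.} Either $\mathrm{pass}(S) = S$, or $\mathrm{cts\_size}(\mathrm{pass}(S)) < \mathrm{cts\_size}(S)$.
\end{enumerate}
Fact~1 follows by induction on the tier list, using the standard lemma that the length of a filtered list is at most the length of the original.

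For fact~2, note that if every tier's filter keeps all of its elements, then each filtered tier equals its input. So either every tier is preserved, or some tier strictly shrinks while the others do not grow. Summing over tiers then gives the strict inequality. In Rocq I would state this as: $\mathrm{filter}\ p\ l = l$ or $|\mathrm{filter}\ p\ l| < |l|$, proved by induction on $l$. I would then lift it through $\mathrm{map}$ and the sum, using monotonicity of addition.

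With these two facts, termination follows by well-founded induction on $\mathrm{cts\_size}(S)$ over $\mathbb{N}$, or simply by observing that the iteration is explicitly bounded. $\mathrm{ClearStructure}$ iterates at most $\mathrm{cts\_size}(S)$ times, so it is a structurally recursive fuel loop and terminates trivially.

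The semantically meaningful content is that this fuel suffices to reach a fixed point. Suppose after $j$ passes no fixed point has yet been reached. By fact~2, each of those passes removed at least one triplet, so $\mathrm{cts\_size} \le \mathrm{cts\_size}(S) - j$. Since the size cannot drop below $0$, at most $\mathrm{cts\_size}(S)$ strict decreases can occur. Therefore after $\mathrm{cts\_size}(S)$ iterations the structure is stable under $\mathrm{pass}$. I would prove this as a lemma by induction on the fuel $k$: if $k \ge \mathrm{cts\_size}(S)$, then $\mathrm{pass}(\mathrm{iter}_k(S)) = \mathrm{iter}_k(S)$. One subtlety is that once a fixed point is reached, further passes leave the structure unchanged, which is immediate.

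I expect the main obstacle to be fact~2. Each tier is filtered against the \emph{old} neighbouring tiers, and $\mathrm{ClearSingleTier}$ indexes into the list with boundary defaults ($\mathrm{AllTriplets}$). The proof must therefore abstract the predicate as an arbitrary per-tier Boolean function, so the filter-dichotomy lemma applies uniformly regardless of how it depends on $i$. I would first prove a generic lemma for $\mathrm{map}$ over indexed filters with arbitrary predicates $p_i$, and then instantiate it.
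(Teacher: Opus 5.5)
Your proposal is correct and follows essentially the paper's argument: each pass only filters tiers, so it either leaves the structure unchanged or strictly decreases the natural-number measure $\mathrm{cts\_size}$. Because this measure is bounded below by $0$, at most $\mathrm{cts\_size}(S)$ passes can shrink the structure, so that fuel suffices to reach a fixed point, which is the paper's justification for the iteration bound in Listing~\ref{lst:clearing}.
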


\begin{theorem}[Fixed-Point Characterisation]
 \label{thm:clear-fixed}
 After clearing, every surviving triplet $t$ in tier $i$ has at least one
 compatible predecessor in tier $i-1$ and at least one compatible successor in
 tier $i+1$ (except at the boundaries $i=0$ and $i=m-1$, where only the existing
 neighbour is required).
\end{theorem}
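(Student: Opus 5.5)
The plan is to show that after the $\mathrm{cts\_size}(S)$ iterations performed by $\mathrm{ClearStructure}$ the structure is a genuine fixed point of one pass. The neighbour property then follows by unfolding $\mathrm{ClearSingleTier}$. Write $P$ for $\mathrm{ClearStructurePass}$, $S_0 = S$, $S_{k+1} = P(S_k)$ and $N = \mathrm{cts\_size}(S)$, so that $\mathrm{clear\_structure}(S) = S_N$.

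\emph{Step 1 (a pass only shrinks).} Each tier of $P(S_k)$ is obtained by filtering the corresponding tier of $S_k$, so it is a sublist of it. Hence $\mathrm{cts\_size}(P(S_k)) \le \mathrm{cts\_size}(S_k)$, and the number of tiers is preserved. Filtering a list returns the list itself exactly when its length is unchanged. Therefore either $P(S_k) = S_k$ or $\mathrm{cts\_size}(P(S_k)) < \mathrm{cts\_size}(S_k)$. This is the dichotomy underlying Theorem~\ref{thm:clear-term}. Also, $P(S_k)=S_k$ implies $S_j = S_k$ for all $j \ge k$.

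\emph{Step 2 (stabilisation within $N$ passes).} Suppose $P(S_N) \neq S_N$. By the persistence remark in Step~1, $S_k \ne S_{k+1}$ for every $k \le N$. By Step~1, the size then drops strictly at each of the $N$ steps from $S_0$ to $S_N$, so $\mathrm{cts\_size}(S_N) \le 0$ and every tier of $S_N$ is empty. But filtering an empty tier yields an empty tier, so $P(S_N) = S_N$, a contradiction. Hence $P(S_N) = S_N$. In Rocq I would prove this by induction on the fuel $N$, generalised to the statement ``after $f$ passes, either the structure is fixed or its size is at most $\mathrm{cts\_size}(S) - f$''.

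\emph{Step 3 (reading off the property).} Let $t$ be in tier $i$ of $S_N$. Since $S_N = P(S_N)$, $t$ lies in $\mathrm{ClearSingleTier}(i, S_N)$. By the filter's membership lemma, there exist $p$ in the previous tier of $S_N$ with $\mathrm{Compatible}(t,p)$ and $n$ in the next tier of $S_N$ with $\mathrm{Compatible}(t,n)$, in the directions fixed by the definition of $\mathrm{ClearSingleTier}$. At the boundaries $i=0$ and $i=m-1$, the missing neighbour tier is replaced by $\mathrm{AllTriplets}$. The condition there is trivially satisfiable by Lemma~\ref{lem:degree}, since every triplet has two compatible partners in each direction among all eight triplets. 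So only the existing neighbour imposes a real requirement, exactly as the statement says.

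I expect Step~2, together with the list-level dichotomy in Step~1, to be the main obstacle. The difficulty is stating ``filter with equal length equals the original'' and lifting it tier-wise to lists of tiers, since tiers are lists and not sets. I would first prove a lemma $\mathrm{length}(\mathrm{filter}\ f\ l) = \mathrm{length}\ l \to \mathrm{filter}\ f\ l = l$. I would then lift it through $\mathrm{cts\_size}$ by showing that a sum of pointwise-$\le$ lengths is equal only if every summand is equal.
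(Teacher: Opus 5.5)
Your proposal is correct and follows the same route as the paper. The paper justifies the $\mathrm{cts\_size}(S)$ iteration bound by your dichotomy (each pass either removes at least one triplet or leaves the structure unchanged), so the cleared structure is a fixed point of one pass; the neighbour property is then read off the filter in $\mathrm{ClearSingleTier}$, with the $\mathrm{AllTriplets}$ boundary condition vacuous by Lemma~\ref{lem:degree}, as in your Step~3.
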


The termination bound uses $\mathrm{cts\_size}(S)$ rather than the coarse
$8 \cdot |S|$, yielding a tighter proof. The fixed-point theorem gives a
semantic characterisation of the survivors: only triplets with two-way
adjoinability remain.

\begin{lemma}[Clearing monotonicity for paths]
 \label{lem:clearing-monotone}
 For any non-empty CTS $S$ and any path $\pi$,
 \[
 \pi \in \mathrm{build\_paths\_all}(\mathrm{clear}(S))
 \;\Longrightarrow\;
 \pi \in \mathrm{build\_paths\_all}(S).
 \]
\end{lemma}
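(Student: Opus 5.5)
The plan is to prove a stronger, pointwise statement about a single pass and then iterate it. Write $\mathrm{pass}(S)$ for \texttt{ClearStructurePass}. First I would show that a pass only shrinks tiers: for every $i$, $\mathrm{pass}(S)_i \subseteq S_i$, and $|\mathrm{pass}(S)| = |S|$. This is immediate from the definition of \texttt{ClearSingleTier}, which is a filter of the current tier; in Rocq this is \texttt{filter\_In} applied per tier, using the map over indices to get length preservation. By induction on the iteration count $\mathrm{cts\_size}(S)$, the same two facts hold for $\mathrm{clear}(S)$, since inclusion and length equality are transitive.

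The second step is a monotonicity lemma for path building: if $|S'| = |S|$ and $S'_i \subseteq S_i$ for all $i$, then every $\pi \in \mathrm{build\_paths\_all}(S')$ lies in $\mathrm{build\_paths\_all}(S)$. I would prove the auxiliary form by induction on the remaining tier list, generalising over the accumulator. If every path in $\mathit{acc}'$ is in $\mathit{acc}$, then every path in $\mathrm{ExtendPaths}(\mathit{acc}', T')$ is in $\mathrm{ExtendPaths}(\mathit{acc}, T)$. The reason is that a member $t :: p$ with $t \in T'$ compatible with $\mathrm{head}(p)$, where $p \in \mathit{acc}'$, is produced on the larger side from the same $p \in \mathit{acc}$ and the same $t \in T$.

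The main obstacle is the empty-accumulator base case of \texttt{ExtendPaths}. If $\mathit{acc}'$ is empty but $\mathit{acc}$ is not, the smaller side restarts with singleton paths $[t]$, and these need not appear on the larger side. This breaks naive inclusion. I would handle it in one of two ways. The first is to strengthen the invariant to ``$\mathit{acc}' \subseteq \mathit{acc}$ and ($\mathit{acc}' = [] \Rightarrow \mathit{acc} = []$)''. The second is to observe that any path surviving to the end has length exactly $|S|$. A restart after the first tier yields paths that are too short and are discarded by the final length filter. So it suffices to track full-length paths, using an invariant that all accumulator paths after processing $j$ tiers have length $j$.

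To discharge the first tier, where both accumulators start as $[]$ and the singletons come from $S'_0 \subseteq S_0$, I would use the non-emptiness of $S$. Finally, the final length filter is the same on both sides because $|S'| = |S|$. Instantiating $S' = \mathrm{clear}(S)$ gives the lemma.
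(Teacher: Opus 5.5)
Your argument is correct, via your second option for the reset, but it takes a different route from the paper. The paper never compares two runs of \texttt{build\_paths\_aux} directly. Its steps are:
\begin{enumerate}
\item Turn $\pi$ into $\mathit{ss}=\mathrm{ss}(\mathrm{rev}(\pi))$, which is a compatibility-aware satisfying set for $\mathrm{clear}(S)$ by the soundness half of Lemma~\ref{lem:struct-char}.
\item Transfer this to $S$ using only the fact that clearing shrinks tiers; compatibility of consecutive triplets does not depend on the tiers.
\item Apply the path-existence lemma (Lemma~\ref{lem:path-from-ss}, \texttt{build\_paths\_aux\_contains\_expected\_path}) to get a path of $\mathrm{build\_paths\_all}(S)$ made of exactly the triplets of $\pi$.
\end{enumerate}
That route reuses the hardest lemma of the development, whose proof already deals with the accumulator behaviour of \texttt{build\_paths\_aux}. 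That lemma is also where the hypothesis $S\neq[]$ is used. Your direct simulation is self-contained and does not need $S\neq[]$. It also proves a more general fact: $\mathrm{build\_paths\_all}$ is monotone under any length-preserving tier-wise shrinking. The price is that you must handle the reset yourself.

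On that point, drop your first option, because the invariant ``$\mathit{acc}'\subseteq\mathit{acc}$ and $\mathit{acc}'=[]\Rightarrow\mathit{acc}=[]$'' is not preserved. Start from $\mathit{acc}'=\mathit{acc}=[]$ with $T'=[]$ and $T=[t]$: one step gives $\mathit{acc}'=[]$ but $\mathit{acc}=[[t]]$. The same failure occurs when every extension of $\mathit{acc}'$ dies while extra paths in $\mathit{acc}$ survive. At the next tier, the restarted singletons on the smaller side need not occur on the larger side.

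Your second option also needs a wording fix. ``All accumulator paths after $j$ tiers have length $j$'' is false as literally stated, since a restart at tier $j\ge 1$ produces length-$1$ paths. The invariant that works is: every path of length $j$ in $\mathit{acc}'_j$ belongs to $\mathit{acc}_j$.
\begin{itemize}
\item For $j=0$, both sides restart and $S'_0\subseteq S_0$.
\item For $j\ge 1$, a path of length $j+1\ge 2$ cannot come from a restart. So it has the form $t :: q$ with $q\in\mathit{acc}'_j$ of length $j$. Hence $q\in\mathit{acc}_j$, so $\mathit{acc}_j\neq[]$, and the larger side extends $q$ by the same $t$.
\end{itemize}
The final length filter then finishes the proof exactly as you describe.
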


\begin{proof}[Proof sketch]
Construct the satisfying set $\mathit{ss} = \mathrm{ss}(\mathrm{rev}(\pi))$ from
the reversed path. Because $\pi$ is valid in $\mathrm{clear}(S)$, $\mathit{ss}$
is a compatibility-aware satisfying set for $\mathrm{clear}(S)$. Since clearing
only shrinks tiers, every triplet of $\mathit{ss}$ in a cleared tier also lies in
the original tier. By induction on the tier list, compatibility-awareness
transfers from the cleared structure to the original one, and the
path-existence theorem yields a path $\pi'$ with exactly the same triplets as
$\pi$.
\end{proof}

\subsection{Constructive Solver for Grouped Sliding-Window CNF}

For grouped sliding-window CNFs with disjoint variable ranges (multiple clauses
per consecutive variable window) we formalise in Rocq a recursive, extractable solver
$\texttt{solve\_grouped\_sliding}$. The solver processes each group
independently: it translates the group to a CTF, finds satisfying sets via the
verified path-building pipeline, and merges the resulting assignments into a
global assignment. If the pipeline yields no valid set for a group, the solver
falls back to a verified greedy assignment constructor
$\texttt{sat\_assignment\_aux}$.

\begin{theorem}[Grouped solver soundness]
 \label{thm:solve-grouped-sound}
 For every well-formed grouped sliding-window CNF, if
 $\texttt{solve\_grouped\_sliding}$ returns an assignment, that assignment
 satisfies the entire concatenated formula.
\end{theorem}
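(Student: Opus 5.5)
The plan is an induction on the list of groups that \texttt{solve\_grouped\_sliding} recurses over, with an invariant saying that the partial global assignment returned for a prefix of the groups satisfies every clause in that prefix. I first fix the shape of the result. For a well-formed grouped sliding-window CNF $\phi = g_1 \wedge \dots \wedge g_r$ whose groups have pairwise disjoint variable ranges, the solver returns $\mathsf{Some}(a)$ only if every recursive call on a group returned a local assignment $a_j$. Here $a$ is obtained by merging the $a_j$ at the variable offsets of their groups. The statement to prove is that $\mathrm{eval\_cnf}(a, \phi) = \mathsf{true}$. Since $\mathrm{eval\_cnf}$ distributes over concatenation of clause lists, it suffices to show $\mathrm{eval\_cnf}(a, g_j) = \mathsf{true}$ for each $j$.

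\textbf{Step 1 (per-group correctness).} I split on the two branches the solver can take for a single group. In the path branch, the local assignment comes from a path $p \in \mathrm{build\_paths\_all}(\mathrm{ctf\_to\_cts}(\mathrm{cnf\_to\_ctf}(g_j)))$ via $\mathrm{ss\_from\_path}(\mathrm{rev}(p))$. Lemma~\ref{lem:ctf-sound} then gives formula-level satisfaction of the translated CTF. Theorem~\ref{thm:cnf-ctf-equiv}, used in its direction from CTF satisfaction back to CNF satisfaction on the sliding-window group (after relabelling to local indices via Theorem~\ref{thm:relabel-sat-short} if the group is stored with an offset), yields $\mathrm{eval\_cnf}(a_j, g_j') = \mathsf{true}$ for the relabelled group $g_j'$. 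In the fallback branch I invoke the existing correctness lemma for $\texttt{sat\_assignment\_aux}$ used in the proof of Theorem~\ref{thm:cnf-ctf-equiv}. I expect that lemma to carry a satisfiability hypothesis, so I need to check what the solver's guard actually supplies. If the fallback is only reached when the group is satisfiable, or if $\texttt{sat\_assignment\_aux}$ is validated by an $\mathrm{eval\_cnf}$ check before being returned, the branch is discharged. Otherwise I would strengthen the solver's specification to return $\mathsf{None}$ in that case.

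\textbf{Step 2 (merging preserves satisfaction).} This is the main obstacle. I need a frame lemma: if $a$ and $a_j'$ (the local assignment shifted to group $j$'s offset) agree on every variable occurring in $g_j$, then $\mathrm{eval\_cnf}(a, g_j) = \mathrm{eval\_cnf}(a_j', g_j)$. This follows by induction on clauses and literals, since evaluation only inspects variables that occur in the clause. The harder part is showing that the merge really does agree with $a_j'$ on $g_j$'s variable range. This is where disjointness of the ranges is used: later merges write only outside the range of $g_j$, so the values on $g_j$'s range are never overwritten. I would state a lemma about the merge function of the form ``$\mathrm{nth}\ v\ (\mathrm{merge}\ \dots) = \mathrm{nth}\ (v - \mathrm{off}_j)\ a_j$ whenever $v$ lies in range $j$.'' I would prove it by induction on the group list, using \texttt{lia} for the offset arithmetic, together with default-value reasoning for \texttt{nth} beyond list length.

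\textbf{Step 3 (assembly).} Combining Steps 1 and 2 by induction on the groups gives $\mathrm{eval\_cnf}(a, g_j) = \mathsf{true}$ for every $j$. Concatenating these results yields satisfaction of the whole formula, which closes the theorem.
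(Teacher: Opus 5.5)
Your skeleton matches the paper's: prove each group's local assignment correct, then lift it through the merge by induction on the group list. The paper's key ingredient is the merge lemma \texttt{eval\_cnf\_merge\_assignments\_concat}. The gap is in your Step 2. You argue that, by pairwise disjointness, later merges never write into the range of $g_j$. The paper states that its merge lemma holds \emph{even when the current group appears again later in the list}, precisely so the solver can handle duplicate groups. So the well-formedness hypothesis does not exclude repeated groups, and for a repeated group the range of $g_j$ is rewritten. Your lemma $\mathrm{nth}\,v\,(\mathrm{merge}\,\dots) = \mathrm{nth}\,(v-\mathrm{off}_j)\,a_j$ then does not follow from disjointness; you would need a separate argument that the repeated group yields literally the same local assignment. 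The missing idea is to carry the invariant at the level of satisfaction rather than values. Suppose the accumulated assignment satisfies the concatenation of the groups handled so far, and the new local assignment satisfies $g$. Then the merged assignment satisfies the extended concatenation. Every other group either has a range disjoint from $g$, so your frame lemma applies, or is a copy of $g$, and is then satisfied because the merged assignment coincides on that range with a model of $g$.

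Step 1 has a more basic problem. Theorem~\ref{thm:cnf-ctf-equiv} equates the \emph{existence} of a CNF model with the \emph{existence} of a CTF satisfying set. Combined with Lemma~\ref{lem:ctf-sound}, it only tells you that $g_j$ is satisfiable. It does not tell you that the particular assignment read off $\mathrm{ss\_from\_path}(\mathrm{rev}(p))$ is a model of $g_j$, which is what soundness requires. Moreover, that theorem is stated for the clause-by-clause translation of sliding-window CNF, not for a group of several clauses on one window. Lemma~\ref{lem:dense-groups} gives only the CNF-to-CTF direction for groups. A pointwise transfer from CTF-level satisfaction to $\mathrm{eval\_cnf}$ is exactly what the paper shows can fail (Theorem~\ref{thm:gap} and the grouped-window counterexample), so you cannot assume it. Your fallback branch is also left conditional, and changing the solver to return $\mathsf{None}$ would change the object of the theorem. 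An unsatisfiable group yields no valid path and so reaches the fallback, so soundness of the given solver forces some rejection mechanism there. What you need in both branches is the pointwise guarantee attached to the assignment the solver actually emits for a group, for example the validity test inside \texttt{find\_first\_valid\_ss} or an $\mathrm{eval\_cnf}$ guard around \texttt{sat\_assignment\_aux}. With that guarantee in hand, the detour through Theorems~\ref{thm:cnf-ctf-equiv} and~\ref{thm:relabel-sat-short} becomes unnecessary.
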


\begin{theorem}[Grouped solver completeness]
 \label{thm:solve-grouped-complete}
 For every well-formed grouped sliding-window CNF, if every group is
 individually satisfiable, then $\texttt{solve\_grouped\_sliding}$ returns
 some assignment.
\end{theorem}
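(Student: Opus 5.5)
The proof is by induction on the list of groups, following the recursive definition of $\texttt{solve\_grouped\_sliding}$, which processes the groups one at a time and merges their assignments. The claim to prove is that, under the hypothesis that every group is individually satisfiable, no recursive call can return $\mathsf{None}$. We strengthen the statement so it fits the recursion: for every suffix of the group list, and every partial global assignment accumulated so far, the recursive call on that suffix returns $\mathsf{Some}$. The empty suffix is the base case, where the solver immediately returns the accumulated assignment.

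In the inductive step we examine how the head group $g$ is handled. The solver first translates $g$ to a CTF and runs the verified path-building pipeline on it. There are two cases.

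\begin{itemize}
\item \textbf{A path is found.} The pipeline returns a path, from which a satisfying set and then a local assignment are obtained.
\item \textbf{No path is found.} The solver falls back to $\texttt{sat\_assignment\_aux}$.
\end{itemize}

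In both cases a local assignment is produced. The key observation is that the only potential source of $\mathsf{None}$ is the branch where the pipeline fails, and that branch is covered by the fallback. So we must show that $\texttt{sat\_assignment\_aux}$ is total, i.e.\ always yields an assignment, whenever $g$ is satisfiable. This follows from the greedy-construction lemma used in Theorem~\ref{thm:cnf-ctf-equiv}: for a well-formed sliding-window group, satisfiability implies that the greedy constructor succeeds. If that lemma is stated existentially, I would instead invoke Theorem~\ref{thm:cnf-ctf-equiv} to obtain a satisfying CTF set and then Lemma~\ref{lem:raw-complete} together with Lemma~\ref{lem:struct-char} to conclude directly that the pipeline branch succeeds.

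Merging the local assignment into the global one is a total function. Since the groups range over disjoint variables, the merge cannot fail. With the merge done, the induction hypothesis applied to the remaining groups finishes the step.

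The main obstacle I expect is the well-formedness side conditions. The induction hypothesis needs the tail of the group list to still be a well-formed grouped sliding-window CNF with pairwise disjoint ranges. It also needs each group's own well-formedness to be extractable so that the translation lemmas apply. I would first prove a small inversion lemma: well-formedness of $g::gs$ implies well-formedness of $g$ and of $gs$. With that in hand, the rest is case analysis on the $\texttt{match}$ inside the solver, which can be discharged by destructing the pipeline's result and rewriting with the totality fact.
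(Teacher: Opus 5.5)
You have the right induction skeleton, but the step that carries all the content is not justified: the claim that a satisfiable group makes its branch return an assignment. You derive it from the greedy lemma behind Theorem~\ref{thm:cnf-ctf-equiv}. That lemma lives in the clause-by-clause setting with one forbidden triplet per window. There every formula is satisfiable and a left-to-right choice never gets stuck, and the paper asserts that \texttt{sat\_assignment\_aux} always succeeds only in that $k=1$ setting. A group puts several clauses on the same window, and then fixing bits clause by clause can fail on a satisfiable group. For example, with $(0,0,0)$ and $(0,0,1)$ both forbidden, no value of $x_2$ works once $x_0=x_1=0$, although $x_0=1$ satisfies the group. In fact, soundness (Theorem~\ref{thm:solve-grouped-sound}) already tells you the fallback cannot succeed unconditionally, since an unsatisfiable group must not produce an assignment. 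So completeness cannot rest on the fallback. Totality is automatic in Rocq; the real obligation is that the output satisfies $g$.

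Your backup chain does not repair this. Theorem~\ref{thm:cnf-ctf-equiv} yields only weak, window-by-window CTF satisfiability. By Theorem~\ref{thm:gap}, that need not give the compatibility-aware satisfying set that Lemmas~\ref{lem:raw-complete} and~\ref{lem:struct-char} require. Moreover, Lemma~\ref{lem:raw-complete} concerns the raw structure, whereas the pipeline runs \texttt{build\_paths\_all} on the cleared structure \texttt{ctf\_to\_cts}. Theorem~\ref{thm:clearing-not-conservative} shows that clearing can destroy every path of the raw structure.

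What is missing is a direct argument that the pipeline branch itself succeeds on a satisfiable group, one that avoids both clearing and the weak predicate. If each group lives on a single window, as the relabelling theorem suggests, relabel it onto $x_0,x_1,x_2$ (Theorem~\ref{thm:relabel-sat-short}). The resulting structure has one tier, and clearing is the identity there: both neighbours are \texttt{AllTriplets}, and Lemma~\ref{lem:degree} supplies partners. So \texttt{build\_paths\_all} lists every allowed triplet as a length-one path. The allowed triplets are exactly the satisfying patterns of $g$, because a clause is falsified only by its negation pattern. Hence a satisfiable group yields a non-empty list whose first element already passes the validity check of \texttt{find\_first\_valid\_ss}, and the fallback is never reached.

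Separately, your inversion lemma assumes pairwise disjoint groups. The paper's setting explicitly allows the same group to reappear later in the list. The ingredient the paper singles out is \texttt{eval\_cnf\_merge\_assignments\_concat}, which keeps the concatenated formula satisfied under merging even with such duplicates. Merge totality suffices only if the solver returns $\mathsf{None}$ solely when a group fails. If the accumulated assignment is checked, you need exactly that lemma, not disjointness.
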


The key ingredient is Lemma~\texttt{eval\_cnf\_merge\_assignments\_concat}
($\texttt{Structured.v}$): merging a group assignment into the accumulated
result preserves satisfiability of the concatenated formula, even when the
current group appears again later in the list. This allows the recursive solver
to handle duplicate groups without backtracking.

\textbf{Extraction metrics.}
The solver is not merely proved correct but extracted to executable OCaml code.
Rocq's \texttt{Extraction} command produces \texttt{VFR.ml} (786 lines, $\approx$21~KiB),
which contains the verified implementations of \texttt{build\_paths\_all},
\texttt{eval\_cnf}, \texttt{ctf\_to\_cts}, and \texttt{solve\_grouped\_sliding}.
A thin JSON bridge (\texttt{vfr\_solver.ml}, 261 lines, $\approx$9~KiB) reads CNF
formulas from stdin and prints satisfying assignments or UNSAT verdicts.
Together they form a standalone verified executable that requires no Rocq
runtime.

\section{Empirical Validation via Exhaustive Enumeration}
\label{sec:tla}

Before undertaking Rocq proofs of the reverse direction, we used a
Python-based model checker to exhaustively enumerate all pairs of CTS
structures up to small bounds (787{,}244 exhaustive cases plus 500 random
instances). Four SVI variants were checked; in every case where a reverse
implication was claimed, the model checker found a counterexample.
All counterexamples share the same pattern: SVI finds a common triplet between
tier $i$ of $S_1$ and tier $j \neq i$ of $S_2$, creating a non-empty
hyperstructure despite the absence of a tier-aligned satisfying set. This
empirical observation motivated the aligned-intersection construction of
Theorem~\ref{thm:compat}.

\section{The Correctness Boundary: False Positives in SVI}
\label{sec:boundary}

\subsection{The Correctness Boundary}

Romanov's framework assumes that non-emptiness of SVI is equivalent to the 
existence of a joint satisfying set. Formally:
\begin{equation}\label{eq:romanov-claim}
 \exists \text{ JSS}(S_1, S_2)
 \;\Longleftrightarrow\;
 H = \mathrm{SVI}(S_1, S_2) \neq \emptyset.
\end{equation}

Only the forward direction ($\Rightarrow$), i.e.
\begin{equation}\label{eq:forward}
 \exists\text{JSS}\;\Longrightarrow\;H\neq\emptyset,
\end{equation}
is true and is proved in our 
Rocq development as Theorem~\ref{thm:forward-main}. The converse 
($H\neq\emptyset\Rightarrow\exists\text{JSS}$) does not hold in general.
\footnote{The definition $\mathrm{is\_satisfying\_set}$ treats an empty 
structure as vacuously satisfiable ($[] \mapsto \text{True}$). Non-emptiness of 
both structures is therefore stated as an explicit premise in 
Theorem~\ref{thm:forward-main} ($S_1 \neq [], S_2 \neq []$) and in 
Theorem~\ref{thm:systemic-main} ($\forall S \in \text{system}, \text{structure}(S) \neq []$). 
The compat version $\mathrm{is\_satisfying\_set\_compat}$ also uses 
$[] \mapsto \text{True}$. Its completeness theorem 
(Theorem~\ref{thm:compat}) carries the premise 
$\mathrm{cts\_intersection\_raw} \neq []$, but this premise is 
redundant for non-empty aligned structures: 
$\mathrm{cts\_intersection\_raw}$ is never empty when both inputs are 
non-empty and aligned (Lemma~\ref{lem:raw-intersect-nonempty}).}

A separate observation concerns the \emph{semantic gap} between the weak 
formula-level predicate $\mathrm{satisfies\_ctf}$ and structure-level path 
existence. The predicate $\mathrm{satisfies\_ctf}$ checks each 3-bit window 
independently: it merely verifies that $\mathrm{extract\_triplet\_local}(\mathit{ss},i)$ 
avoids the forbidden triplets of tier~$i$. It does \emph{not} enforce that 
consecutive windows agree on their overlapping 2 bits. Consequently, 
$\mathrm{satisfies\_ctf}$ admits \emph{locally consistent} assignments that 
are not globally realisable as compatible paths.

We formalise this gap in Theorem~\ref{thm:gap}: there exists 
a CTF $\phi$ and an assignment $\mathit{ss}$ with 
$\mathrm{satisfies\_ctf}(\mathit{ss},\phi)=\text{true}$, yet 
$\mathrm{build\_paths\_all}(\mathrm{ctf\_to\_cts}(\phi))=[]$. 
In this counterexample, tier~0 permits only 
$(\mathit{false},\mathit{true},\mathit{true})$ and tier~1 permits only 
$(\mathit{true},\mathit{false},\mathit{true})$. These two survivors are 
incompatible (their overlapping bits do not match), so clearing 
removes both. The CTF is therefore \emph{unsatisfiable} as a sliding-window 
formula: the overlapping variable $x_2$ receives conflicting values 
($\mathit{true}$ from the first window, $\mathit{false}$ from the second). 
Yet $\mathrm{satisfies\_ctf}$ returns $\mathsf{true}$ because it inspects each tier 
through non-overlapping chunks $\mathit{ss}[3i\,..\,3i{+}2]$.

This does \emph{not} contradict Lemma~\ref{lem:ctf-sound}: 
that lemma states only the soundness direction (every path produced by 
\texttt{build\_paths\_all} yields a formula-level satisfying set), which holds 
for arbitrary CTFs, whereas the counterexample concerns the converse. 
Moreover, for well-formed sliding-window CNF translated via 
the clause-by-clause pipeline of \texttt{FormulaTranslation.v}, 
each clause becomes a tier with exactly one forbidden triplet, and the 
constructed satisfying set $\mathrm{ss\_from\_path\_flat}$ is a flattened 
assignment where consecutive triplets necessarily agree on overlaps. Hence 
$\mathrm{satisfies\_ctf}$ coincides with true satisfiability for such formulas. 
The counterexample operates outside this well-formed class (it is a generic 
CTF with 7 forbidden triplets per tier), exposing the weakness of the generic 
predicate.

It is important to note that \textbf{clearing is not a conservative 
transformation} with respect to arbitrary paths or strong satisfying sets. 
The predicate $\mathrm{can\_adjoin}$ used during clearing requires \emph{two-way} 
adjoinability (both a forward-compatible predecessor and a forward-compatible 
successor), whereas \texttt{build\_paths\_all} checks only backward compatibility 
(\texttt{compatible next prev}). Consequently, a triplet may belong to a valid 
path yet still be removed by clearing because it lacks a forward-compatible 
partner in the next tier (Theorem~\ref{thm:clearing-not-conservative}). The 
``emptiness'' in the counterexample above is a genuine loss of paths, not merely 
a detection of inconsistency. This directional asymmetry is an intrinsic 
feature of the current formalisation. It is not a bug: clearing is used only as
an optimisation (it is sound---it never creates new paths), while completeness is
recovered by Theorem~\ref{thm:compat} via aligned intersection and
\texttt{build\_paths\_all}, which do not rely on clearing preserving all paths.
This counterexample rules out the naive hope that running clearing before SVI
would yield a complete decision procedure; instead, completeness requires the
aligned-intersection construction of Theorem~\ref{thm:compat}.

At the same time, clearing \emph{is} monotone in the reverse direction:
any full-length path that survives clearing was already present in the
original structure (Lemma~\ref{lem:clearing-monotone}). Monotonicity does
not contradict non-conservativity: it merely states that clearing never
\emph{creates} new full-length paths, only \emph{removes} existing ones.
The aligned-intersection approach (Theorem~\ref{thm:compat}) recovers
completeness by operating on the intersection directly, bypassing the need
for clearing to preserve paths.

\subsection{Why SVI Gives False Positives}

The Simple Vertex Intersection computes common vertices between $G_1$ and $G_2$ without 
requiring \emph{tier alignment}. A triplet from tier $i$ of $S_1$ is considered 
``common'' if its value appears in \emph{any} tier of $S_2$. False positives arise 
from two distinct phenomena:
\begin{enumerate}
 \item \textbf{Cross-tier matching.} A triplet from tier $i$ of $S_1$ matches a 
 triplet from tier $j \neq i$ of $S_2$. SVI reports them as ``common'' even though 
 no aligned satisfying set can use them simultaneously.
 \item \textbf{Partial tier match.} Some corresponding tiers share triplets, but 
 at least one tier has no common triplet. SVI finds the existing matches and 
 reports non-empty, yet no aligned satisfying set exists because the unmatched 
 tier cannot be satisfied.
\end{enumerate}

For example, tier 0 of $S_1$ and tier 1 of $S_2$ share triplet $(0,0,0)$, so SVI reports
non-empty. However, tier 1 of $S_1$ and tier 1 of $S_2$ have no common triplet,
so no aligned satisfying set exists.

\paragraph{Bidirectional compatibility.}
Attempts to define forward-compatible satisfying sets and prove that
clearing preserves bidirectional compatibility were abandoned after we proved
that $\mathrm{compatible}$ is a directed relation and that path-derived satisfying
sets cannot guarantee forward compatibility. This does not affect our main
results: Theorem~\ref{thm:compat} establishes equivalence through
\texttt{build\_paths\_all}, which checks only backward compatibility.

\subsection{Concrete Counterexamples}

\paragraph{Cross-tier mismatch.}
Consider:
\[
 S_1 = [\{(0,0,0)\}, \{(1,1,1)\}], \quad
 S_2 = [\{(1,1,1)\}, \{(0,0,0)\}].
\]
SVI finds two common triplets: $(0,0,0)$ appears in tier 0 of $S_1$ and tier 1 of 
$S_2$; $(1,1,1)$ appears in tier 1 of $S_1$ and tier 0 of $S_2$. Consequently SVI 
returns a non-empty hyperstructure. Yet no aligned joint satisfying set exists, 
because tier 0 of $S_1$ and tier 0 of $S_2$ share no triplet, and likewise for 
tier 1. The structures are mutually ``shifted.''

\paragraph{Partial tier match.}
Consider:
\[
 S_1 = [\{(0,0,0)\}, \{(1,1,1)\}], \quad
 S_2 = [\{(0,0,0)\}, \{(0,1,0)\}].
\]
SVI finds the common triplet $(0,0,0)$ (present in tier 0 of both structures) and 
returns a non-empty hyperstructure. However, tier 1 of $S_1$ and tier 1 of $S_2$ 
have no common triplet, so no aligned joint satisfying set exists. The formula is 
unsatisfiable, but SVI reports non-empty.

\section{VFR: A Prototype Solver with Post-Checking}
\label{sec:vfr}

Since SVI is not a complete decision procedure, we propose \emph{VFR}: a 
prototype solver that uses SVI as a fast, one-sided filter for structured 
formulas, followed by an exhaustive post-check.

(All formal guarantees are summarised in Table~\ref{tab:trust-boundary}.)

\subsection{Architecture}

Figure~\ref{fig:vfr} illustrates the VFR solving pipeline. The input 
3-CNF formula is first decomposed into CTFs; each CTF is converted to a CTS 
via complementation and clearing. SVI then acts as a filter whose running time 
is polynomial in the structure size (see complexity analysis below): if it 
returns \texttt{False}, the formula is guaranteed unsatisfiable. If SVI returns 
\texttt{True}, the exponential post-check searches for a globally consistent 
assignment. Any candidate is finally verified against the original CNF.
(Only the SVI filter is formally verified; see Table~\ref{tab:trust-boundary}.)

\begin{figure}[ht]
\centering
\begin{tikzpicture}[
 node distance=0.6cm,
 proc/.style={draw, rectangle, rounded corners, minimum width=1.9cm, minimum height=0.45cm, align=center, fill=black!8, font=\footnotesize},
 dec/.style={draw, diamond, aspect=2.2, minimum width=1.4cm, minimum height=0.5cm, align=center, fill=black!10, font=\footnotesize},
 term/.style={draw, ellipse, minimum width=1.3cm, minimum height=0.4cm, align=center, fill=black!15, font=\footnotesize},
 io/.style={draw, trapezium, trapezium left angle=70, trapezium right angle=110, minimum width=1.5cm, minimum height=0.4cm, align=center, fill=black!5, font=\footnotesize},
 arrow/.style={-{Stealth[length=1.5mm]}, thick}
]
 \node[io] (in) {3-CNF};
 \node[proc, below=of in] (dec) {Decompose};
 \node[proc, below=of dec] (build) {Build CTS};
 \node[proc, below=of build] (clear) {Clear};
 \node[dec, below=of clear] (svi) {SVI $=\emptyset$?};
 \node[term, below left=0.45cm and 0.25cm of svi] (unsat1) {UNSAT};
 \node[proc, below right=0.45cm and 0.25cm of svi] (post) {Post-check};
 \node[dec, below=of post] (found) {Found?};
 \node[term, below left=0.45cm and 0.25cm of found] (sat) {SAT};
 \node[term, below right=0.45cm and 0.25cm of found] (unsat2) {UNSAT};

 \draw[arrow] (in) -- (dec);
 \draw[arrow] (dec) -- (build);
 \draw[arrow] (build) -- (clear);
 \draw[arrow] (clear) -- (svi);
 \draw[arrow] (svi) -- node[left, font=\tiny, pos=0.35] {Yes} (unsat1);
 \draw[arrow] (svi) -- node[right, font=\tiny, pos=0.35] {No} (post);
 \draw[arrow] (post) -- (found);
 \draw[arrow] (found) -- node[left, font=\tiny, pos=0.35] {Yes} (sat);
 \draw[arrow] (found) -- node[right, font=\tiny, pos=0.35] {No} (unsat2);
\end{tikzpicture}
\caption{The VFR solving pipeline as a flowchart. Only the SVI filter is formally 
verified for the sliding-window fragment; decomposition and post-check are unverified 
heuristics for general 3-CNF.}
\label{fig:vfr}
\end{figure}

\begin{enumerate}
 \item \textbf{Decomposition.} The 3-CNF formula is split into CTFs grouped by 
 variable sets.
 
 \item \textbf{CTS Construction.} Each CTF is converted to a CTS via complement 
 and clearing.
 
 \item \textbf{SVI Filter (polynomial).} Run \texttt{EffectiveProcedure}. If it 
 returns \texttt{False}, return UNSAT immediately. This step is guaranteed 
 correct: no false negatives.
 
 \item \textbf{Post-Check (exponential).} If SVI returns \texttt{True}, search 
 for a globally consistent assignment by backtracking over partial assignments 
 from each CTS. If none exists, SVI produced a false positive; return UNSAT.
 
 \item \textbf{Verification.} Any candidate assignment is checked against the 
 original CNF formula.
\end{enumerate}

\paragraph{Complexity analysis.}
Table~\ref{tab:complexity} summarises the complexity of each pipeline stage.

\begin{table}[h]
\centering
\caption{Verified polynomial complexity bounds (mechanised in \texttt{Complexity.v} and \texttt{GapClosure.v})}
\label{tab:complexity}
\begin{tabularx}{\textwidth}{@{}Xll@{}}
\toprule
Stage & Time bound & Provenance \\
\midrule
Decomposition (heuristic) & $O(m \cdot \log m)$ & Grouping clauses by variable triples \\
Tier construction & $O(k)$ & $k$ tiers, at most 8 triplets each \\
Clearing (generic) & $\leq 100n^{4} + 100$ & Theorem~\ref{thm:clear-poly} \\
Clearing (single-forbidden) & $\leq 2000n^{2} + 2000$ & Theorem~\ref{thm:clear-poly-tight} \\
SVI filter & $\leq 3n^{2} + 7$ & Theorem~\ref{thm:svi-poly} \\
Full pipeline & $\leq 200n^{4} + 200$ & Theorem~\ref{thm:pipeline-poly} \\
\bottomrule
\end{tabularx}
\end{table}

The constants in Table~\ref{tab:complexity} emerge from a formal cost model
that counts every cons cell, compatibility check, and list traversal.
For generic clearing, each pass costs $O(n^3)$ (every triplet is checked against
both neighbours, each of size $O(n)$) and at most $n$ passes are iterated,
yielding $100n^4+100$ after routine arithmetic induction.
The SVI bound $3n^2+7$ comes from summing the quadratic costs of building two
basic graphs and intersecting their vertices.
For single-forbidden structures clearing is the identity
(Corollary~\ref{cor:clear-full-identity}), so only one pass is needed and the
bound collapses to $2000n^2+2000$.

\textbf{Decomposition} groups $m$ clauses by variable sets; sorting yields
$O(m \log m)$ time. \textbf{Tier construction} builds complement tiers of size
at most 8 triplets each, taking $O(k)$ for $k$ tiers. \textbf{Clearing} is a
fixed-point computation: each pass checks every line against all lines of each
neighbour, giving $O(t_j^2)$ per tier. The number of passes is bounded by the
total number of lines ($\mathrm{cts\_size}$), giving overall $O(\text{cts\_size} \cdot
\sum_j m_j t_j^2)$. In the mechanisation we use the exact measure
$\sum_i |\text{tier}_i|$ instead of the coarse $8 \cdot |S|$, which yields a
tighter termination proof (Theorem~\ref{thm:clear-term}).

\textbf{SVI filter} computes common vertices between all pairs of CTFs. For CTFs 
with $|V_j|$ and $|V_{j'}|$ vertices, the pairwise intersection is $O(|V_j| \cdot 
|V_{j'}|)$. Summing over all pairs gives $O(\sum_{j < j'} |V_j| \cdot |V_{j'}|)$, 
which is polynomial in the input size. For $k$ CTFs each with $O(n)$ triplets, 
this simplifies to $O(n^2 k^2)$.

\textbf{Formal verification.} The Rocq development includes a mechanised cost model (file \texttt{Complexity.v}, approximately 1,000 lines) that defines explicit cost functions mirroring the recursive structure of each algorithm. Every cons cell, compatibility check, and list traversal is assigned a unit cost. The bounds in Table~\ref{tab:complexity} are explicit inequalities proved in Rocq by induction on the structure of the input, where $n$ is the combined input size (tiers plus triplets). For the verified clause-by-clause translation each clause becomes one tier, so $n = \Theta(m)$ where $m$ is the number of clauses; for well-formed sliding-window CNF, $m = O(v)$ with $v$ the number of variables. These polynomial bounds apply \emph{only} to the filter stages (clearing and SVI). The complete decision procedure includes the aligned-intersection post-check, which is exponential in the worst case ($O(8^m)$ path enumeration). These are not merely 
asymptotic statements; they are explicit inequalities proved in Rocq by 
induction on the structure of the input, with arithmetic appeals to 
\texttt{nia}. Full definitions and intermediate lemmas (e.g., single-tier and single-pass bounds) are formalised in files \texttt{Complexity.v} and \texttt{GapClosure.v}.

\textbf{Aligned intersection} (used in the post-check) builds 
$\mathrm{cts\_intersection\_raw}$ tier-by-tier, then calls 
$\texttt{build\_paths\_all}$. The intersection construction is linear in tier 
size; path enumeration is the dominant cost: a CTS with $m$ tiers and at most 8 
triplets per tier has at most $8^m$ full-length paths in the worst case (all 
triplets mutually compatible). Hence $\texttt{build\_paths\_all}$ is 
$O(8^m)$ time and space.

\textbf{Post-check} backtracks over $k$ CTFs. If the $j$-th CTF has $p_j$ paths, the 
search explores the product space in $O(\prod_{j=1}^k p_j)$ time. Since $p_j 
\leq 8^{m_j}$, the worst-case matches aligned intersection. The recursion depth 
is $k$ and each partial assignment stores $n$ variables, giving $O(n \cdot k)$ 
space. This exponential worst case is unavoidable unless $\mathsf{P} = \mathsf{NP}$, but 
structured instances with tight constraints (small $t_{ij}$ after clearing) can 
be solved efficiently in practice.

\subsection{Derivation of Polynomial Bounds}
\label{sec:complexity-derivation}

The concrete constants in Table~\ref{tab:complexity} arise from a formal cost
model that assigns unit cost to every cons cell, compatibility check, and list
traversal. Generic clearing is quartic because each pass filters every triplet
against two neighbouring tiers and the process iterates at most
$\mathrm{cts\_size}(S)$ times; for single-forbidden sliding-window structures
clearing is the identity (Corollary~\ref{cor:clear-full-identity}), so only one
pass is needed and the bound collapses to quadratic. The SVI bound comes from
building two basic graphs and intersecting their vertices. Full cost-model
definitions, intermediate lemmas, and the arithmetic inductions are given in
files \texttt{Complexity.v} and \texttt{GapClosure.v}.

\begin{theorem}[Generic Clearing Bound]
 \label{thm:clear-poly}
 $\mathrm{clear\_structure}$ is bounded by $100 n^{4} + 100$ in the generic case.
\end{theorem}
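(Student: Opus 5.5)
The plan follows the cost model of \texttt{Complexity.v}: define $\mathrm{cost}$ functions mirroring \texttt{ClearSingleTier}, \texttt{ClearStructurePass} and \texttt{ClearStructure} (Listing~\ref{lst:clearing}), and bound each layer by an explicit polynomial in $n$, the combined size of the input CTS (tiers plus triplets). Throughout I keep a constant large enough to absorb the unit costs of cons cells and boolean tests, and I discharge the final arithmetic with \texttt{nia}.

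\textbf{Step 1: single tier.} For tier $i$ with $|T_i|$ triplets and neighbours of sizes $|T_{i-1}|,|T_{i+1}|$ (or $8$ at a boundary, via \texttt{AllTriplets}), filtering costs at most $c\,|T_i|\,(|T_{i-1}|+|T_{i+1}|+16)+c$. Each membership check is an \texttt{existsb} over the neighbour list, and \texttt{compatible} is constant. This follows by induction on the list $T_i$.

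\textbf{Step 2: one pass.} Summing Step~1 over tiers, and using $|T_j|\le n$ together with $\sum_i |T_i|\le n$, gives a pass cost of at most $c'(n^2+n)+c'$. Bounding each neighbour size crudely by $n$ even gives $O(n^2)$ rather than the $O(n^3)$ quoted in the text; either suffices. The nontrivial fact I need here is that a pass never increases size: $\mathrm{cts\_size}(\mathrm{pass}(S))\le \mathrm{cts\_size}(S)$, and likewise the number of tiers is unchanged. Filtering yields a sublist, so this is a short induction. It lets the same $n$ bound every later pass.

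\textbf{Step 3: iteration.} \texttt{ClearStructure} runs $\mathrm{cts\_size}(S)\le n$ passes. By induction on the fuel $k$, with the invariant from Step~2, the total cost is at most $k\cdot(\text{pass bound at }n)+c''\le n\cdot c'(n^3)+c''$. I would state the pass bound loosely as $\le 50n^3+50$ so that the product yields $\le 100n^4+100$. The case $n=0$ is handled by the additive constant.

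\textbf{Main obstacle.} I expect the difficulty to lie in the invariant plumbing rather than in any counting. The inductive hypothesis over iterations must be generalised over the structure, since the structure changes on every pass. The boundary cases ($i=0$, $i=m-1$) use a fixed list of size $8$ that is not part of $n$, and they must be absorbed by the constants. I would first prove a monotone-size lemma for \texttt{ClearStructurePass} and then do induction on fuel with $S$ universally quantified. Finally, \texttt{nia} needs hints such as $n^3\le n^4$ for $n\ge1$; I would split the cases $n=0$ and $n\ge 1$ explicitly.
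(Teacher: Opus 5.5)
Your proposal is correct and follows the paper's route: a single-tier filtering bound, then a per-pass bound of order $n^3$, then multiplication by at most $\mathrm{cts\_size}(S)\le n$ passes, with the final arithmetic discharged by \texttt{nia}; you rightly make the per-pass bound hold on every iteration by noting that a pass never increases the number of tiers or triplets. Your observation that a pass is actually only quadratic is a valid sharpening that would give a cubic overall bound, but the stated $100n^4+100$ needs only the looser cubic per-pass estimate that the paper uses.
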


\begin{theorem}[Tight Clearing Bound]
 \label{thm:clear-poly-tight}
 For single-forbidden sliding-window structures, $\mathrm{clear\_structure}$ is
 bounded by $2000 n^{2} + 2000$.
\end{theorem}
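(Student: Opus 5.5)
The bound follows from Corollary~\ref{cor:clear-full-identity}: on a single-forbidden sliding-window structure, one clearing pass is already the identity. The plan is to show that the clearing cost collapses to a constant number of passes, each of quadratic cost in $n$ (tiers plus triplets), and then absorb everything into $2000n^2+2000$.

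\textbf{Step 1: cost of one tier check.} Bound the cost of $\mathrm{ClearSingleTier}$ at tier $i$ in the cost model of \texttt{Complexity.v}. For each triplet of the current tier we scan the previous and next tiers (or \texttt{AllTriplets}, of size $8$) and call $\mathrm{compatible}$, which has constant cost $c_0$. This gives a cost of at most $c_0\,|T_i|\,(|T_{i-1}|+|T_{i+1}|+16)+c_1|T_i|+c_2$. In a single-forbidden structure every tier has exactly $7$ triplets, so this is a constant $K$ per tier. I would still keep the generic form $|T_i|\cdot(|T_{i-1}|+|T_{i+1}|+16)$, so the lemma can reuse the single-tier bound already used for Theorem~\ref{thm:clear-poly}.

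\textbf{Step 2: one pass.} Sum over tiers by induction on the tier list to get a single-pass cost of at most $a n^2+b$. Using $\sum_i |T_i|\,|T_{i\pm1}|\le(\sum_i|T_i|)^2\le n^2$, the constants come out comfortably below $1000$. This is the same lemma as the per-pass step in the generic proof, instantiated without the outer multiplication by $\mathrm{cts\_size}$.

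\textbf{Step 3: iteration count.} $\mathrm{clear\_structure}$ iterates $\mathrm{cts\_size}(S)$ times. Naively this gives $n$ passes and hence $n^3$, not $n^2$. The key point is Corollary~\ref{cor:clear-full-identity}: the first pass returns $S$ unchanged, so the question is how the cost function treats subsequent passes.

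\textbf{Main obstacle.} If the cost model charges every one of the $\mathrm{cts\_size}$ iterations, identity alone does not help. I would first try to use (or prove) that the mechanised clearing stops as soon as a pass is the identity, charging one pass plus an equality check of cost $O(n)$. Failing that, I would bound the cost function for this structure class by an explicit lemma: clearing on a fixed point costs one pass plus a linear bookkeeping term per remaining iteration. That gives $a n^2+b+c\,n\cdot\mathrm{cts\_size}$. Since $\mathrm{cts\_size}\le n$, this is still quadratic. The whole difficulty is to get that the remaining iterations are cheap rather than full passes, which requires rewriting the cost recursion using the identity lemma at each step, by induction on the fuel. Once that is in place, the final inequality $a n^2+b+c n^2\le 2000n^2+2000$ is discharged by \texttt{nia}.
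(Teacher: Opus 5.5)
Your main line is the paper's own argument: Corollary~\ref{cor:clear-full-identity} makes clearing the identity on single-forbidden structures, so only one quadratic pass is charged and the bound collapses to $2000n^2+2000$. The paper asserts exactly this ("only one pass is needed") and, like your first plan, presupposes that clearing a fixed point costs a single pass. Your worry about that presupposition is legitimate, since Listing~\ref{lst:clearing} as written iterates $\mathrm{cts\_size}(S)$ times with no early exit.

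The fallback you offer for that case, however, does not work as stated. Suppose the cost recursion charges a full pass for every unit of fuel. Then rewriting with the identity lemma only tells you that every iteration runs $\mathrm{ClearStructurePass}$ on the same $S$. Each remaining iteration therefore costs exactly as much as the first, and identity cannot turn later passes into ``linear bookkeeping''. With your Step~2 per-pass bound $an^2+b$ you would get $\mathrm{cts\_size}(S)\cdot(an^2+b)$, which is cubic.

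The correct repair is the observation you made in Step~1 and then discarded. Every tier has at most $7$ triplets (exactly $7$ initially, and passes only shrink tiers), and the boundary substitute \texttt{AllTriplets} has $8$. So each tier costs at most $7\cdot 16$ compatibility checks plus constant overhead. A pass over $m$ tiers therefore costs $O(m)=O(n)$, provided neighbouring tiers are reached by structural traversal rather than an $O(n)$ \texttt{nth} lookup per tier. Then $\mathrm{cts\_size}(S)\le n$ passes cost $O(n^2)$ even under the fixed-fuel model.

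In that route the identity lemma only simplifies the induction on fuel, because the structure and hence the per-pass cost stay fixed. It is not what makes the iterations cheap. If neighbour lookup is instead $O(n)$ per tier, a pass is genuinely quadratic. In that case the quadratic total is available only under early termination, which is your first plan and the paper's.
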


\begin{theorem}[SVI Filter Bound]
 \label{thm:svi-poly}
 $\mathrm{effective\_procedure}$ is bounded by $3 n^{2} + 7$.
\end{theorem}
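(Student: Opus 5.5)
We prove Theorem~\ref{thm:svi-poly} directly from the mechanised cost model of \texttt{Complexity.v}, following the structure of $\mathrm{effective\_procedure}$. Let $n = |S_1| + |S_2| + \mathrm{cts\_size}(S_1) + \mathrm{cts\_size}(S_2)$ be the combined size in tiers plus triplets. The procedure has three phases: build the basic graph $G_1$ from $S_1$, build $G_2$ from $S_2$, and compute the common vertices of $G_1$ against $G_2$. We bound each phase separately and then add the bounds, absorbing the constant overhead (the final emptiness test and the function-call and pairing costs) into the additive $7$.

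\textbf{Step 1: building the basic graphs.} The basic graph of a structure annotates each triplet with its tier index. By induction on the tier list we show that the cost of building it is at most the number of triplets plus the number of tiers plus a constant. We also prove a size lemma: the vertex list of $G_i$ has length exactly $\mathrm{cts\_size}(S_i)$. Together these give a linear cost, $\le n + c$, for both graph constructions combined.

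\textbf{Step 2: the common-vertex computation.} This step dominates. For each vertex of $G_1$ we perform a membership test by triplet value against the vertex list of $G_2$, costing at most $|V_2| + 1$ compatibility-equality checks. By induction on $V_1$ the total cost is at most $|V_1|\cdot(|V_2|+1) + 1$. Using the size lemma, this is at most $\mathrm{cts\_size}(S_1)\cdot\mathrm{cts\_size}(S_2) + \mathrm{cts\_size}(S_1) + 1$. Since $ab \le (a+b)^2/4 \le n^2$, this quantity is at most $n^2 + n + 1$.

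\textbf{Step 3: summation.} Adding the phases gives a total of $n^2 + 2n + c'$. We then close the proof with \texttt{nia} using $2n \le 2n^2$: the total is at most $3n^2 + 7$ provided $c' \le 7$, which we check by unfolding the base-case costs.

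\textbf{Main obstacle.} The hard part is matching the cost function's recursion exactly. Concretely, the membership cost must be stated as a lemma in terms of the length of the $G_2$ vertex list. The size lemma also has to survive the tier-index annotation, which uses an accumulator counter, so the induction must be generalised over the starting index. If the additive constant comes out larger than $7$, we would first try charging the per-tier overhead to the $n$ term (each tier contributes to $n$) rather than to the constant.
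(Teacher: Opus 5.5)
Your skeleton is the paper's: bound the two basic-graph constructions and the common-vertex intersection separately by structural induction in the cost model of \texttt{Complexity.v}, add the bounds, and close with \texttt{nia}. The difference is in how you charge the graph constructions. The paper says $3n^2+7$ comes from summing \emph{quadratic} costs for building the two basic graphs and for intersecting their vertices, most naturally read as one $n^2$ per phase ($G_1$, $G_2$, intersection) with the additive constants totalling $7$. You instead claim a linear cost per construction and recover the coefficient $3$ from $n^2+2n\le 3n^2$. That accounting is sharper, but it rests on a lemma the paper does not assert: that the mechanised cost of building a basic graph is linear. If that cost function also generates compatibility edges between adjacent tiers, or accumulates vertices with left-nested appends, your Step~1 lemma is false. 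Your Step~3 arithmetic then fails too, since quadratic constructions combined with your crude estimates $|V_1||V_2|\le n^2$ and $|V_1|\le n$ give $3n^2+n+O(1)$, which exceeds $3n^2+7$. Per-phase quadratic accounting avoids this. Write $a=|V_1|$ and $b=|V_2|$ with $a+b\le n$; then $a(b+1)\le n^2$, so the intersection costs at most $n^2+1$. If each construction costs at most $n^2+c_i$, the total stays within $3n^2+7$ provided all additive constants sum to at most $7$.
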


\begin{theorem}[Full Pipeline Bound]
 \label{thm:pipeline-poly}
 The full SVI pipeline is bounded by $200 n^{4} + 200$.
\end{theorem}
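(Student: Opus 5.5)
The plan is to compose the stage bounds already established. The full SVI pipeline cost, in the cost model of \texttt{Complexity.v}, unfolds into the cost of $\mathrm{clear\_structure}$ applied to the input structures, plus the cost of $\mathrm{effective\_procedure}$ on the cleared structures, plus lower-order glue: tier construction by complementation, list assembly, and the final emptiness test. I would first write the pipeline cost function as this explicit sum. This should hold definitionally or after unfolding, since the cost function mirrors the recursive structure of the pipeline.

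Next I bound each term separately. Theorem~\ref{thm:clear-poly} gives at most $100n^4+100$ for clearing. Theorem~\ref{thm:svi-poly} gives at most $3m^2+7$ for SVI, where $m$ is the size of the \emph{cleared} structure.

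To express the SVI term in terms of $n$, I need a monotonicity lemma: $\mathrm{cts\_size}(\mathrm{clear\_structure}(S)) \le \mathrm{cts\_size}(S)$, and the number of tiers is unchanged. This should follow from the fact that clearing only filters tiers, as in the proof of Theorem~\ref{thm:clear-term}. With it, $3m^2+7 \le 3n^2+7$.

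The glue cost is linear: tier construction is $O(k)$ with at most 8 triplets per tier. I would bound it by something like $c\,n + c$ with a small explicit constant $c$.

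Summing, the total is at most $100n^4 + 3n^2 + cn + 107 + c$. For $n\ge 1$ each lower-order term is dominated by a multiple of $n^4$, and for $n=0$ the constant $200$ absorbs everything. The bound $200n^4+200$ then follows, and in Rocq I would discharge this final inequality with \texttt{nia} after case-splitting on $n=0$.

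I expect the main obstacle to be the size bookkeeping, not the arithmetic. The size measure $n$ (tiers plus triplets) must be shown to be preserved or decreased across each stage, so that every stage bound is stated in the same $n$. This is especially delicate if the pipeline operates on a pair or list of structures and the SVI bound is stated for a combined size.

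My first attempt would be to prove a lemma that $\mathrm{cts\_size}$ plus tier count is non-increasing under clearing, by induction on the pass count. I would then instantiate Theorem~\ref{thm:svi-poly} at the cleared inputs and rewrite with this inequality using monotonicity of $x\mapsto 3x^2+7$.
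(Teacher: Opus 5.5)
Your proposal is correct and matches the paper's route: the full-pipeline bound comes from adding the generic clearing bound $100n^4+100$ (Theorem~\ref{thm:clear-poly}) to the SVI bound $3n^2+7$ (Theorem~\ref{thm:svi-poly}), and the sum fits under $200n^4+200$ because $3n^2+7 \le 100n^4+100$ for every $n$. Your explicit lemma that clearing preserves the tier count and never increases $\mathrm{cts\_size}$, which lets you restate the SVI bound in the original $n$, is a bookkeeping step the paper leaves implicit in \texttt{Complexity.v}.
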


\begin{corollary}[Full clearing is identity]
 \label{cor:clear-full-identity}
 Let $S$ be a non-empty CTS whose every tier is single-forbidden. Then
 $\mathrm{clear}(S) = S$.
\end{corollary}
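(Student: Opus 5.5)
The plan is to show that a \emph{single} clearing pass leaves such a structure unchanged, and then iterate. Concretely, I would first prove a one-tier lemma: if tier $T_i$ is single-forbidden, i.e.\ $T_i = \{0,1\}^3 \setminus \{f_i\}$ for some triplet $f_i$, then every triplet $t$ has at least one compatible element in $T_i$, in each direction. This follows from Lemma~\ref{lem:degree}. The triplet $t$ has exactly two forward partners $t'$ with $\mathrm{compatible}(t,t')$, and these two partners are distinct. Removing the single triplet $f_i$ can eliminate at most one of them, so at least one partner lies in $T_i$. The backward case is symmetric. In Rocq I would not reason about cardinalities at all. Instead I would exhibit the two partners explicitly: for $t=(a,b,c)$ the forward partners are $(b,c,0)$ and $(b,c,1)$. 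A case split on whether $f_i = (b,c,0)$ then picks the other one, and membership in $T_i$ follows because $T_i$ is the complement list containing every triplet except $f_i$.

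Next I would prove $\mathrm{ClearStructurePass}(S) = S$. For each index $i$, the filter in $\mathrm{ClearSingleTier}$ keeps $t \in T_i$ iff $t$ has a compatible predecessor in \texttt{prev} and a compatible successor in \texttt{next}. At interior indices, \texttt{prev} and \texttt{next} are single-forbidden tiers of $S$, so the one-tier lemma applies. At the boundaries the missing neighbour is \texttt{AllTriplets}, which trivially contains a partner; this is again by Lemma~\ref{lem:degree}, or by direct computation. The predicate is therefore true on every element of $T_i$. A standard lemma then gives the result: \texttt{filter} with a predicate true on all list members returns the list itself. This yields syntactic equality of each tier, hence of the whole pass, by induction over the tier list with its index. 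Note that the predecessor and successor tiers are read from the \emph{original} $S$, since a pass builds a fresh list rather than mutating in place. This keeps the induction simple.

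Finally, $\mathrm{clear}(S)$ iterates the pass $\mathrm{cts\_size}(S)$ times. Since the pass fixes $S$, a trivial induction on the iteration count gives $\mathrm{clear}(S)=S$. The non-emptiness hypothesis is only needed to match the shape of the definitions, for example if the iteration count or boundary handling is defined by cases on $S$.

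I expect the main obstacle to be the bookkeeping in the pass lemma rather than the combinatorics. The difficulty is aligning the index-based access to neighbouring tiers (via \texttt{nth\_error} or explicit prev/next arguments) with the single-forbidden hypothesis stated for each tier. Handling the two boundary cases uniformly is part of the same problem. I would first try to state the hypothesis as $\forall i\, t,\ \mathrm{nth\_error}(S,i)=\mathsf{Some}\,t \to \exists f,\ t = \mathrm{complement}[f]$, and prove a helper lemma: every neighbour argument passed to $\mathrm{ClearSingleTier}$ is either \texttt{AllTriplets} or satisfies this property. With that helper, the filter argument becomes uniform across all positions.
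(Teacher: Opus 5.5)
Your proposal is correct and follows the route the paper itself indicates (its remark after Lemma~\ref{lem:degree} and the ``only one pass is needed'' discussion): by Lemma~\ref{lem:degree}, each triplet has two distinct partners in each direction, and a single forbidden triplet can eliminate at most one of them, while boundary tiers use \texttt{AllTriplets}. Hence one clearing pass fixes $S$, and iterating it $\mathrm{cts\_size}(S)$ times gives $\mathrm{clear}(S)=S$. Because your one-tier lemma is stated for every triplet and both directions, with explicit witnesses $(b,c,0)$ and $(b,c,1)$, it holds regardless of which direction $\mathrm{can\_adjoin}$ actually checks.
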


\subsection{Python Implementation}

The Python runtime (approximately 2,500 lines across eight modules) mirrors the Rocq definitions exactly:
\texttt{CTS.\_\allowbreak extend\_paths} is a direct port of \texttt{extend\_paths}, paths are
maintained in reverse order, and the accumulator reset is harmless by
Lemma~\texttt{extend\_paths\_nil\_\allowbreak acc} (\texttt{Algorithm.v}). A runtime guard
(\texttt{is\_sliding\_\allowbreak window\_formula}) checks whether the input is a well-formed
sliding-window CNF; if the predicate holds, the solver follows the verified
clause-by-clause path (either the extracted OCaml binary or the Python port),
otherwise it falls back to the unverified grouped-window heuristic
with a \texttt{Runtime\allowbreak Warning}. For general 3-CNF the solver delegates to Z3;
SAT assignments are verified by the extracted \texttt{eval\_cnf}, and UNSAT
proofs are checked by the Rocq-extracted Swansea RUP checker~\cite{bryant2025code}. Any verification
failure raises \texttt{Runtime\allowbreak Error}; there is no silent fallback.

\subsection{Reproducible Build}
\label{sec:docker}

The complete toolchain (Rocq proofs, extracted OCaml code, Python runtime, Z3, Swansea RUP checker, and TLA\textsuperscript{+} specifications) is available as a curated Zenodo artifact at \href{https://doi.org/10.5281/zenodo.20397949}{\nolinkurl{10.5281/zenodo.20397949}} ~\cite{alexandrov2026zenodo}. 
The artifact includes a reproducible Dockerfile based on Ubuntu~22.04, which builds all components from source and runs the full test suite.

\section{Illustrative Examples}
\label{sec:experiments}

(Formal guarantees are summarised in Table~\ref{tab:trust-boundary}.)

\subsection{Extracted-Code Smoke Tests}

We exercised the extracted OCaml code and the Python wrapper on three manually
constructed instances: a simple SAT clause, an UNSAT formula consisting of all
eight possible clauses on three variables, and an overlap-UNSAT case where grouped windows
contradict on a shared variable. The first two are sliding-window formulas
(verified fragment); the third is a non-sliding-window heuristic example.

For the verified sliding-window fragment ($k=1$ clause per window), the
extracted solver is a complete decision procedure with no false positives and
no false negatives: soundness follows from the \texttt{eval\_cnf} post-check,
and completeness from the translation equivalence
(Theorem~\ref{thm:cnf-ctf-equiv}) together with the verified greedy fallback
(\texttt{sat\_assignment\_aux}), which always succeeds on well-formed
sliding-window inputs. We validated the extracted solver exhaustively against
a brute-force oracle for all sliding-window formulas with $n \leq 6$
variables.

\subsection{Heuristic Filter Effectiveness}

The basic validation tests above exercise the verified fragment. To assess whether the
\emph{unverified} grouped-window heuristic ever yields non-trivial filtration,
we generated two classes of structured instances: (i)~dense sliding-window CNFs
with multiple clauses per consecutive variable window, and (ii)~highly
overlapping groups in which many clauses share the same three variables. On
these instances SVI reports \texttt{False} for a measurable fraction of UNSAT
cases (between 19\% and 100\% in small-scale experiments, depending on clause
density), because dense forbidding quickly empties tier intersections. On random
3-SAT, by contrast, the filter is empirically ineffective: SVI almost never
returns \texttt{False}. These observations are illustrative, not formally
guaranteed---they concern the heuristic shell, not the verified fragment. Their
purpose is to demonstrate that the architecture \emph{can} filter structured
instances, even if the general 3-CNF pipeline remains a research prototype.

On random non-sliding-window 3-CNF ($n \leq 6$, 60 instances) the heuristic
pipeline agrees with brute force on 100\% of cases, confirming that the
post-check eliminates all SVI false positives. We do not report larger-scale
or competitive benchmarks: the heuristic filter is empirically ineffective on
random 3-SAT (SVI almost never returns \texttt{False}), and VFR is not
positioned as a competitor to industrial CDCL solvers.

\section{Discussion}
\label{sec:discussion}

\subsection{What TLS Provides}

After formalisation, the value proposition of TLS lies not in a complete 
polynomial-time decision procedure, but rather in a combination of three 
structural contributions:

\paragraph{A provably correct one-sided filter.}
When SVI reports emptiness, the formula is guaranteed to be unsatisfiable (proved in 
Rocq for well-formed sliding-window CNF). The running time is polynomial in the 
size of the structure---$O(|V_1| \cdot |V_2|)$, or $O(n^2 k^2)$ for $k$ tiers of 
size $O(n)$. \textbf{We emphasize:} the Rocq development proves termination 
(Theorem~\ref{thm:clear-term}), correctness (Theorem~\ref{thm:compat}), 
\textbf{and} the polynomial bound formally in \texttt{Complexity.v} (quartic for clearing,
quadratic for SVI). The earlier asymptotic analysis is now complemented by concrete,
mechanised step-count bounds. On random 3-SAT the heuristic filter is 
empirically ineffective (SVI almost never returns \texttt{False}), so we do 
not report competitive benchmarks: VFR is a research prototype, not a 
practical solver (Section~\ref{sec:experiments}). For well-formed sliding-window
CNF the verified pipeline is complete. Any positive filter rates observed on 
structured instances outside the verified fragment are empirical properties of 
the unverified grouped-window heuristic, not formally guaranteed results.

\subsection{Scientific Novelty}

We summarise the concrete contributions that go beyond prior work.

\paragraph{1. First formal mechanisation of Romanov's framework.}
Romanov stated the forward direction of SVI (Theorem~\ref{thm:forward-main})
without proof and assumed the converse without justification. We formalise the
entire framework in Rocq and prove:
\begin{itemize}
 \item the forward direction for pairs (Theorem~\ref{thm:forward-main}) and for
 systems of $k$ structures (Theorem~\ref{thm:systemic-main});
 \item the exact boundary where the converse fails (Section~\ref{sec:boundary});
 \item a new bi-implication for aligned intersection
 (Theorem~\ref{thm:compat}), which Romanov did not consider.
\end{itemize}
All proofs are machine-checked; there are zero admitted goals.

\paragraph{2. Aligned intersection equivalence---a new result.}
Theorem~\ref{thm:compat} (and its systemic extension
Theorem~\ref{thm:compat-k}) is \emph{entirely absent} from Romanov's work.
It establishes that tier-aligned intersection combined with
$\mathrm{build\_paths\_all}$ yields a correct and complete decision procedure
for aligned structures. Its value lies in the mechanisation: it certifies that three independently
defined concepts---compatible satisfying sets, aligned intersection, and the
recursive path-building algorithm---are mutually consistent, enabling verified OCaml
extraction.

\paragraph{3. Exact formal counterexamples.}
We do not merely claim that SVI is incomplete; we provide formal counterexamples
in Rocq that mark the exact boundary:
\begin{itemize}
 \item Theorem~\ref{thm:gap} (semantic gap): a CTF can satisfy the weak
 predicate $\mathrm{satisfies\_ctf}$ yet yield no path after clearing;
 \item Theorem~\ref{thm:clearing-not-conservative}: clearing can destroy valid
 paths because $\mathrm{can\_adjoin}$ requires forward compatibility while
 $\mathrm{build\_paths\_all}$ checks only backward compatibility.
\end{itemize}
These are not empirical observations but formally proved existential statements.

\paragraph{4. Verified constructive solver for grouped sliding-window CNF.}
For grouped sliding-window CNFs with disjoint variable ranges (multiple clauses
per consecutive variable window) we prove in Rocq a \emph{constructive, extractable} solver
$\texttt{solve\_grouped\_sliding}$ with both soundness
(Theorem~\ref{thm:solve-grouped-sound}) and completeness
(Theorem~\ref{thm:solve-grouped-complete}) proved in Rocq. The solver combines
verified path finding ($\texttt{find\_first\_valid\_ss}$), a verified greedy
assignment constructor ($\texttt{sat\_assignment\_aux}$), and a constructive
merge of disjoint group assignments. To our knowledge, this is the first
verified solver for grouped sliding-window 3-CNF that is both sound and
complete and extracts to executable OCaml code.

\paragraph{5. Mechanised polynomial bounds with concrete constants.}
The Rocq development includes a formal cost model (\texttt{Complexity.v},
approximately 1,000 lines) that assigns unit cost to every cons cell, compatibility
check, and list traversal. We prove closed-form inequalities with explicit
constants---$100n^{4}+100$ for generic clearing, $2000n^{2}+2000$ for the
single-forbidden fragment, $3n^{2}+7$ for SVI---by induction on the input
structure, not merely asymptotic analysis. For the single-forbidden fragment
the bound collapses from quartic to quadratic because clearing is the identity
(Corollary~\ref{cor:clear-full-identity}), a fact we also prove formally.

\paragraph{Positioning.}
These contributions do \emph{not} expand the class of polynomially solvable
3-SAT instances: well-formed sliding-window CNF has primal graph pathwidth~$\le
2$, which is already solvable in linear time by classical dynamic programming.
Our novelty lies in \emph{reconstructing} this tractable fragment inside
Romanov's triplet framework with full formal certification and executable
extraction.

\subsection{Relation to Bounded-Treewidth SAT}
\label{sec:bounded-treewidth}

Well-formed sliding-window CNF has primal graph pathwidth~$\le 2$ (each clause
covers three consecutive vertices of a path, so the primal graph is a subgraph
of the square of a path). SAT for graphs of bounded pathwidth is a classic
FPT (Fixed-Parameter Tractable) result: dynamic programming on a path decomposition of width~$w$ solves it
in time $2^{O(w)} \cdot n$~
\cite{alekhnovich2002read}. Our verified polynomial bounds for clearing and SVI
therefore \emph{do not expand the class of polynomially solvable 3-SAT
instances}; they reconstruct the same pathwidth-$\le 2$ fragment within
Romanov's framework.

The connection to the FPT literature is worth making explicit. In a standard
path decomposition each bag contains the variables active at that position, and
the DP table stores all satisfying assignments of the bag (size $2^{w+1}$). In
Romanov's construction each tier corresponds to a bag of size~3, but instead of
enumerating $2^{3}=8$ assignments directly, the tier stores the \emph{forbidden}
triplets---those ruled out by the clauses---and the DP step is replaced by a
local compatibility check ($\mathrm{compatible}(t_1, t_2)$) between adjacent
bags. Thus a CTS is essentially a \emph{nice path decomposition} in which
\begin{itemize}
 \item \textbf{Introduce/forget} steps are implicit (the path is uniform, each
 bag contains exactly three consecutive variables);
 \item \textbf{Join} steps are absent (the graph is a path, not a tree);
 \item the DP table is replaced by an explicit triplet set, and the transition
 function is replaced by tier-wise intersection and adjacency checking.
\end{itemize}
This equivalence explains why the single-forbidden fragment (where each tier
contains exactly one forbidden triplet) admits a quadratic bound: the DP table
has constant size~8, and the transition is a simple table lookup. Our
contribution is not a new complexity improvement but a \emph{formally certified
reformulation}: we prove in Rocq that Romanov's triplet constructions yield a
correct decision procedure for the pathwidth-$\le 2$ fragment, with explicit
polynomial bounds and executable extraction. The practical interest of VFR
lies not in solving sliding-window formulas (which is already tractable classically), but in
using SVI as a fast, auditable filter for harder structured instances that lie
outside the verified fragment.

\paragraph{A novel structural decomposition.}
TLS decomposes a formula into independent CTFs based on variable overlap. This 
natural partitioning supports parallel solving and reveals structural properties 
of the instance (e.g., tightly coupled vs. loosely coupled variable groups).

\paragraph{A geometric interpretation of SAT.}
Unlike the abstract implication graphs of CDCL, TLS provides a concrete picture: 
triplets as nodes, compatibility as edges, and satisfying assignments as paths 
through a layered graph (Figure~\ref{fig:compatible-degree}). This makes TLS a valuable 
\emph{pedagogical tool} for teaching SAT and constraint satisfaction.

\subsection{Comparison with Classical SAT Solving}

Unlike classical DPLL/CDCL, which operate directly on CNF clauses and rely on
clause learning and implication graphs, VFR translates a formula into tiered
triplet structures and reduces satisfiability to finding a compatible path.
This yields a concrete, visualisable representation and a polynomial one-sided
filter (SVI), but completeness requires an exponential post-check and the
prototype does not implement clause learning. The main gain is a smaller,
mechanically checked trust base for the verified sliding-window fragment.

\subsection{Related Work}

Our work complements a growing body of verified SAT solvers. Maric~\cite{maric2010formal}
verified a modern DPLL solver in Isabelle/HOL, proving correctness of unit
propagation, conflict analysis, and clause learning. Blanchette et
al.~\cite{blanchette2016verifying} extended this to a verified CDCL solver
with proof generation, watched literals, and incremental solving.
Subsequent work refined this into competitive verified solvers:
Fleury et al.~\cite{fleury2018verified} formalised IsaSAT, an imperative
CDCL solver with watched literals that approaches the performance of
unverified solvers on some benchmarks. Complementing solver verification,
Lammich~\cite{lammich2020efficient} developed the GRAT toolchain, a
formally verified certificate checker for DRAT proofs that outperforms the
unverified reference implementation. Heule et al.~\cite{heule-verified}
verified UNSAT proofs with extended resolution.
These works establish a high standard for verified SAT solving; our work is
the first to explore an alternative combinatorial foundation, using Romanov's
triplet logic.

\paragraph{Structural differences.}
In DPLL/CDCL, the formula remains a flat set of clauses; the solver reasons
via implication graphs, watched literals, and conflict-driven clause learning.
A proof of correctness must therefore maintain global invariants over the
trail, the learned-clause database, and the watched-literal indices. In
contrast, TLS decomposes the formula geometrically into tiers of variable
triplets, and satisfiability reduces to finding a compatible path through a
layered graph. There is no implication graph, no watched literals, and no
learned clauses---the only ``conflict'' is the emptiness of a tier
intersection. Reasoning is local: each tier can be analysed as an independent
set of triplets, and global correctness follows from pairwise compatibility.
This locality makes TLS proofs compositional (tier-by-tier) rather than
trace-based (execution-by-execution).

\paragraph{Insights from the alternative mechanisation.}
The TLS formalisation suggests three lessons that differ from the CDCL
experience. First, \emph{different data structures yield different proof
localities}: whereas CDCL reasons about global implication graphs and watched
literal indices whose correctness depends on intricate trail invariants, TLS
decomposes the formula into tiers of triplets whose adjacency can be checked
pairwise. The 424 proved statements in our Rocq development are overwhelmingly
lemmas about tier-wise inclusion, compatibility, and intersection---concepts
that have direct geometric meaning. Second, \emph{a filter architecture offers a
different trade-off}: the polynomial filter (SVI) can be verified in isolation,
but completeness requires an exponential post-check. CDCL solvers are
monolithic yet complete; VFR sacrifices completeness for a clean verified
boundary (Table~\ref{tab:trust-boundary}). Third, \emph{concrete counterexamples
are easy to construct}: because the reasoning is geometric, one can draw a
three-tier structure and readily see why clearing is non-conservative
(Section~\ref{sec:counterexamples}).

\paragraph{Why a verified one-sided filter matters.}
Verified CDCL solvers are complete decision procedures, but they are also
heavy: thousands of lines of proof, complex imperative invariants, and
aggressive performance engineering. Not every application needs a full solver.
A verified one-sided filter answers a different but useful question: ``Is this
instance definitely outside the easy fragment?'' with a proof-backed guarantee.
For well-formed sliding-window CNF, the filter is not merely one-sided---it is
exactly complete (Theorem~\ref{thm:solve-grouped-complete}), yet its proof is
orders of magnitude smaller than a full CDCL proof. In program analysis or
hardware verification, formulas often exhibit bounded pathwidth or sliding
structure; a lightweight verified filter can discharge these instances quickly
without invoking a heavy solver. The two approaches are complementary: a
verified filter can be placed in front of \emph{any} solver (even unverified)
to obtain a sound architecture in which the polynomial stage is
proof-carrying and the exponential stage is a standard fallback. Where CDCL
verifies ``the solver always returns the correct answer,'' VFR verifies
``the polynomial filter never produces false negatives on the fragment.'' The
fragment (pathwidth~$\le 2$) is classically tractable; our contribution is a
geometric reinterpretation with machine-checked polynomial bounds and an
extractable implementation, not a new complexity result.

Triplet-based representations have appeared in various SAT contexts, although
not as a complete solving paradigm. Romanov's TLS~\cite{romanov2011nonorthodox}
is the first systematic framework built on tier-wise triplet structures and
hyperstructure intersection. His insight---decomposing a formula into geometric
layers and searching for compatible paths rather than raw assignments---is
distinct from both DPLL search and CSP propagation. Our work is the first to
subject TLS to mechanised formal verification, identifying the precise
conditions under which its Simple Vertex Intersection is correct and where it requires
supplementation.

The clearing procedure in TLS resembles arc consistency enforcement in
constraint satisfaction problems (CSPs)~\cite{bessiere2006constraint}. The
difference is that TLS operates on explicit triplet sets rather than general
relations.

\subsection{Threats to Validity}

\paragraph{The fundamental gap: grouped-window translation is a one-sided filter.}
The most serious threat is that our Rocq theorems are proved for a
\emph{clause-by-clause} translation (\texttt{FormulaTranslation.v}), whereas
the VFR prototype uses a \emph{grouped-window} decomposition that merges
multiple clauses sharing the same variable triple into a single tier.
We have now proven (the group-equivalence and dense-groups lemmas in \texttt{FormulaTranslation.v})
the full forward equivalence: if an assignment satisfies the concatenated CNF,
then the grouped CTF is satisfied. The converse, however, is \textbf{false}:
$\mathrm{satisfies\_ctf}$ checks each tier independently and does not enforce
consistency between overlapping windows. A concrete counterexample: group~0
consists of the seven clauses on variables $(x_0, x_1, x_2)$ whose only
satisfying pattern is $(0,0,0)$, and group~1 consists of the seven clauses on
$(x_1, x_2, x_3)$ whose only satisfying pattern is $(1,1,1)$. The grouped CTF
has two tiers, forbidding all triplets except $(0,0,0)$ and $(1,1,1)$
respectively; the sequence $\mathit{ss} = [0,0,0,1,1,1]$ avoids both forbidden
sets, so $\mathrm{satisfies\_ctf}(\mathit{ss}, \phi) = \mathsf{true}$. Yet the
CNF is unsatisfiable: group~0 forces $x_1 = x_2 = 0$ while group~1 forces
$x_1 = x_2 = 1$, so no assignment realises $\mathit{ss}$. This counterexample
is formalised in \texttt{FormulaTranslation.v}. Consequently, the grouped-window heuristic is a
\emph{one-sided} (sound but incomplete) filter, not a complete decision
procedure. The end-to-end guarantee holds only when the input is a well-formed
sliding-window CNF translated clause-by-clause; for all other inputs the
pipeline is heuristic.

Our empirical evaluation relies on instances derived from 3-CNF formulas,
both randomly generated and from the SATLIB benchmark suite. Because the
grouped-window translation is unverified, these benchmarks demonstrate
operational behaviour of a heuristic, not formally verified correctness.
A second threat concerns the scalability of VFR's post-check: our
timeout-based evaluation for $n \geq 100$ does not establish an upper bound
on the fraction of instances solvable within a practical time limit. Our
Rocq formalisation proves termination and a fixed-point characterisation of
clearing (Theorem~\ref{thm:clear-term} and
Theorem~\ref{thm:clear-fixed}), but we do not prove that clearing
preserves all satisfying sets---in fact, we prove the opposite:
Theorem~\ref{thm:gap} shows that the weak predicate
$\mathrm{satisfies\_ctf}$ can admit locally consistent assignments that do not
correspond to any globally compatible path.

\paragraph{TLA$^+$ model-checking limitation.}
The TLA$^+$ (Temporal Logic of Actions) specifications provide
finite-state sanity checks on small instances ($\mathrm{MaxVars} \leq 3$).
The module \texttt{TLSSpec} exhausts TLC (the TLA$^+$ model checker) memory for $\mathrm{MaxVars} > 2$
because explicit-state model checking of a 3-SAT solver is inherently
exponential. This is expected: TLC serves as an auxiliary debugging tool
for small illustrative instances, while the general correctness guarantees for arbitrary $n$
are provided entirely by the Rocq proofs (424 lemmas and theorems). No fix is planned: switching to Apalache (a symbolic model checker for TLA$^+$) would extend the feasible bound only marginally and would not add scientific value, since the general theorem is already machine-checked in Rocq for arbitrary $n$. Users should treat TLC checks as illustrative rather than as formal verification.

\subsection{Limitations and Future Work}

The main practical limitation is the exponential post-check: when SVI returns
\texttt{True}, the solver must enumerate partial assignments from each CTS.
A second limitation is the absence of a verified CNF-to-CTF translation for
grouped-window formulas; even though SVI and clearing are correct, the input may
not faithfully represent the original formula. This is the single most
important gap in turning the verified core into an end-to-end verified solver.
A third limitation is that verified permutation search is exhaustive and hence
practical only for $n \leq 8$; a verified polynomial-time recognition algorithm
for the sliding-window class (membership is decidable in polynomial time; cf.\
Section~\ref{sec:perm}) would widen the verified fragment. A fourth, fundamental
limitation is that arbitrary 3-CNF formulas typically admit no sliding-window
ordering at all, so formal guarantees for general 3-SAT require
external verified solvers or certificates.

An important research direction is to use the polynomial clearing procedure on
single-forbidden structures as an alternative proof format: showing that CTS
derivations map to short DRAT or RUP proofs would let the verified kernel
\emph{produce} certificates rather than merely \emph{check} them. Engineering
refinements such as lazy path generation, CDCL integration, and a native LRAT
checker remain natural but secondary next steps for future work.

\section{Conclusion}
\label{sec:conclusion}

We presented the first formal verification of Romanov's Triplet Logic in Rocq, 
complemented by empirical validation. Our central findings are:
\begin{itemize}
 \item SVI is a \emph{correct one-sided filter}: a joint satisfying set
 implies non-emptiness, but the converse fails when common triplets appear in
 non-aligned tiers (Section~\ref{sec:boundary}).
 \item Aligned intersection yields a \emph{true bi-implication}
 (Theorem~\ref{thm:compat}), extended systemically to $k$ structures
 (Theorem~\ref{thm:compat-k}).
 \item The weak predicate $\mathrm{satisfies\_ctf}$ is incomplete with respect
 to global path existence; the strong predicate closes this gap for well-formed
 sliding-window CNF, where the clause-by-clause CNF-to-CTF translation preserves
 satisfiability (Section~\ref{sec:formalisation}).
 \item Clearing and SVI have verified polynomial bounds (Table~\ref{tab:complexity}),
 and the extracted VFR prototype combines the verified filter with an
 exponential post-check.
\end{itemize}

To explore the structural potential of TLS, we proposed VFR: a hybrid 
architecture combining SVI's filter with an exponential post-check. We 
proved the filter's correctness formally and implemented the pipeline in Python,
validating it against SAT and UNSAT instances, including cases where SVI gives
false positives. We observed 100\% agreement with a brute-force oracle on random
sliding-window instances with $n \leq 6$. The entire toolchain is packaged in a reproducible Docker image
(Section~\ref{sec:docker}) that builds all components from source and passes
the full automated test suite.

Our work positions TLS as a rigorous combinatorial framework for reasoning about
compatible paths in tiered triplet structures, with a provably correct
polynomial-time filter as a \emph{theoretical building block}.

\section*{Formal Verification}

All definitions, lemmas, and theorems described in this paper have been formally verified in Rocq~9.1.1. The development comprises more than 23,000 lines of Rocq code across seventeen files with 424 proved lemmas and theorems and zero admitted goals. The source code is available in the curated Zenodo artifact ~\cite{alexandrov2026zenodo}.

\section*{Acknowledgments}

This work is an output of a research project implemented as part of the Basic
Research Program at the National Research University Higher School of Economics
(HSE University).

During the preparation of this work, the author used large language models
(DeepSeek, GLM, and Kimi Code) for generating and debugging Python, OCaml, TLA$^+$,
and Rocq (Coq) code, creating tests and scripts, and optimising algorithms;
for language polishing and stylistic refinement of the manuscript; for LaTeX code
debugging; and for literature search, review and summarisation. The author reviewed and edited all
outputs as needed and takes full responsibility for the content of the published
article.

The author thanks arXiv for hosting the preprint version of this work (arXiv:2608.18445)~\cite{alexandrov2026vfr}.

The author is deeply indebted to the late Vladimir Romanov, whose pioneering
ideas on triplet structures inspired this formal investigation.


\begin{thebibliography}{20}

\bibitem{cook1971complexity}
S.~A. Cook.
\newblock The complexity of theorem-proving procedures.
\newblock In \emph{Proc. STOC}, pages 151--158. ACM, 1971.

\bibitem{levin1973universal}
L.~A. Levin.
\newblock Universal enumeration problems.
\newblock \emph{Problemy Peredachi Informatsii}, 9(3):115--116, 1973.

\bibitem{romanov2011nonorthodox}
V.~F. Romanov.
\newblock Non-Orthodox Combinatorial Models Based on Discordant Structures.
\newblock 2011 (arXiv:1011.3944~[v2], revised Jan 2011; orig. Nov 2010).

\bibitem{coq}
The Rocq Development Team.
\newblock The Rocq Prover, version 9.1.1.
\newblock \url{https://coq.inria.fr/}, 2026.

\bibitem{minisat}
N.~E\'{e}n and N.~S\'{o}rensson.
\newblock An extensible SAT-solver.
\newblock In \emph{Proc. SAT}, pages 502--518. Springer, 2003.

\bibitem{glucose}
G.~Audemard and L.~Simon.
\newblock Predicting learnt clauses quality in modern SAT solvers.
\newblock In \emph{Proc. IJCAI}, pages 399--404, 2009.

\bibitem{chaff}
M.~W. Moskewicz, C.~F. Madigan, Y.~Zhao, L.~Zhang, S.~Malik.
\newblock Chaff: Engineering an efficient SAT solver.
\newblock In \emph{Proc. DAC}, pages 530--535. ACM, 2001.

\bibitem{handbook-sat}
A.~Biere, M.~Heule, H.~van Maaren, T.~Walsh, editors.
\newblock \emph{Handbook of Satisfiability}.
\newblock IOS Press, 2009.

\bibitem{maric2010formal}
F.~Maric.
\newblock Formalisation and implementation of modern SAT solvers.
\newblock \emph{J. Automated Reasoning}, 43(1):81--119, 2009.

\bibitem{blanchette2016verifying}
J.~C. Blanchette, M.~Fleury, C.~Weidenbach.
\newblock A verified SAT solver framework with learn, forget, restart, and 
incrementality.
\newblock \emph{J. Automated Reasoning}, 61(1--4):333--365, 2018.

\bibitem{fleury2018verified}
M.~Fleury, J.~C. Blanchette, P.~Lammich.
\newblock A verified SAT solver with watched literals using imperative HOL.
\newblock In \emph{Proc. CPP}, pages 158--171. ACM, 2018.

\bibitem{lammich2020efficient}
P.~Lammich.
\newblock Efficient verified (UN)SAT certificate checking.
\newblock \emph{J. Automated Reasoning}, 64(3):513--532, 2020.

\bibitem{heule-verified}
M.~J.~H. Heule, W.~A. Hunt Jr., N.~Wetzler.
\newblock Verifying refutations with extended resolution.
\newblock In \emph{Proc. CADE}, pages 345--359. Springer, 2013.

\bibitem{bessiere2006constraint}
C.~Bessiere.
\newblock Constraint propagation.
\newblock In \emph{Handbook of Constraint Programming}, pages 29--82. 
Elsevier, 2006.

\bibitem{alekhnovich2002read}
M.~Alekhnovich and A.~A. Razborov.
\newblock Satisfiability, branch-width and Tseitin tautologies.
\newblock In \emph{Proc. FOCS}, pages 593--603. IEEE, 2002.

\bibitem{booth1976consecutive}
K.~S. Booth and G.~S. Lueker.
\newblock Testing for the consecutive ones property, interval graphs, and
 graph planarity using {PQ}-tree algorithms.
\newblock \emph{J. Comput. System Sci.}, 13(3):335--379, 1976.

\bibitem{bryant2025code}
H.~Bryant, A.~Lawrence, M.~Seisenberger, and A.~Setzer.
\newblock Verification of Z3 RUP proofs in Coq-Rocq and Agda.
\newblock \url{https://github.com/HarryBryant99/Verification-of-Z3-RUP-Proofs-in-Coq-Rocq-and-Agda}, 2025.

\bibitem{alexandrov2026vfr}
D.~V. Alexandrov.
\newblock Formal verification of Romanov's triplet logic: A verified filter for sliding-window 3-CNF with application to structured formulas.
\newblock \emph{arXiv preprint arXiv:2608.18445}, 2026.

\bibitem{alexandrov2026zenodo}
D.~V. Alexandrov.
\newblock Formal verification of Romanov's triplet logic: artifacts and reproducible build.
\newblock \url{https://doi.org/10.5281/zenodo.20397949}, 2026.

\end{thebibliography}
\end{document}